\documentclass[letterpaper,11pt]{article}

\usepackage{bm}
\usepackage{url}
\usepackage{array}
\usepackage{color}
\usepackage{amsthm}
\usepackage{dsfont}
\usepackage{amsmath}
\usepackage{amssymb}
\usepackage{authblk}
\usepackage{caption}
\usepackage{amsfonts}
\usepackage{booktabs}
\usepackage{graphicx}
\usepackage{mathrsfs}
\usepackage{multicol}
\usepackage{multirow}
\usepackage{cellspace}
\usepackage{enumerate}
\usepackage{enumitem}
\usepackage{makecell}

\usepackage{hyperref}
\hypersetup{
	linktoc      = all,
	colorlinks   = true,
	urlcolor     = blue,
	linkcolor    = blue,
	citecolor    = red
}

\usepackage[backend=biber, isbn=false, style=alphabetic, backref=true, doi=true, url=false, maxcitenames=10, mincitenames=10, maxalphanames=10, maxbibnames=10, minbibnames=5, minalphanames=5, defernumbers=true, sortlocale=en_US]{biblatex}
\usepackage[T1]{fontenc}
\usepackage[dvipsnames]{xcolor}
\usepackage[margin=1in]{geometry}
\usepackage[capitalize,noabbrev]{cleveref}
\usepackage[ruled,vlined,linesnumbered]{algorithm2e}
\usepackage[caption=false,font=normalsize,labelfont=sf,textfont=sf]{subfig}

\usepackage{titlesec}
\titlespacing*{\paragraph}{0pt}{1ex}{1ex}

\setlist[itemize]{leftmargin=*}
\setlist[enumerate]{leftmargin=*}
\setlist[enumerate]{label=(\arabic*)}

\newtheorem{theorem}{Theorem}[section]
\newtheorem{lemma}[theorem]{Lemma}

\newtheorem{corollary}[theorem]{Corollary}
\newtheorem{definition}[theorem]{Definition}

\crefname{algocf}{Algorithm}{Algorithms}
\Crefname{algocf}{Algorithm}{Algorithms}

\crefname{algocfline}{Line}{Lines}
\Crefname{algocfline}{Line}{Lines}

\crefname{lemma}{Lemma}{Lemmas}
\Crefname{lemma}{Lemma}{Lemmas}

\DeclareMathOperator{\polylog}{polylog}
\DeclareMathOperator{\nnz}{nnz}
\DeclareMathOperator{\vol}{vol}
\DeclareMathOperator{\supp}{supp}
\DeclareMathOperator{\E}{\mathbb{E}}
\DeclareMathOperator{\Var}{Var}

\DeclareMathOperator{\tvol}{\widetilde{vol}}

\def\eps{\varepsilon}

\def\R{\mathbb{R}}

\def\tO{\widetilde{O}}

\def\mat{\mathbf}

\def\A{\mat{A}}
\def\D{\mat{D}}
\def\I{\mat{I}}
\def\mL{\mat{L}}
\def\M{\mat{M}}

\def\S{\mat{S}}

\def\vec{\boldsymbol}

\def\b{\vec{b}}
\def\d{\vec{d}}
\def\e{\vec{e}}
\def\p{\vec{p}}
\def\r{\vec{r}}
\def\s{\vec{s}}
\def\x{\vec{x}}

\def\vzero{\vec{0}}
\def\vone{\vec{1}}

\def\pr{\vec{\mathrm{pr}}}

\def\tp{\tilde{\p}}
\def\tx{\tilde{\x}}
\def\tr{\tilde{\r}}

\def\xast{\x^{\ast}}
\def\Sast{S^{\ast}}

\def\PageRankNibble{\textup{\texttt{PageRank-Nibble}}\xspace}
\def\activeSet{\textup{\texttt{ActiveSetPageRank}}\xspace}
\def\generalActiveSet{\textup{\texttt{GeneralActiveSetPageRank}}\xspace}
\def\sparseActiveSet{\textup{\texttt{SparseActiveSetPageRank}}\xspace}
\def\accumulate{\textup{\texttt{Accumulate}}\xspace}

\title{Accelerating the Local Push Primitive for PageRank Computation}

\author{Guanyu Cui}
\author{Zhewei Wei}
\author{Mingji Yang}

\affil{Renmin University of China \authorcr
	\{cuiguanyu,zhewei,kyleyoung\}@ruc.edu.cn}

\date{}

\begin{document}

\maketitle

\begin{abstract}

We propose a local algorithm that computes an $\eps$-approximate PageRank vector in the sense of Andersen, Chung, and Lang (ACL; Internet Math. 2007) with teleportation parameter $\alpha$ in $\tO\bigl(1 / \bigl(\sqrt{\alpha} \, \eps\bigr)\bigr)$ time with high probability, improving the $O\bigl(1/(\alpha\eps)\bigr)$ running time of their original local push method.
Our method also applies to the $\ell_1$-regularized PageRank problem with a running time of $\tO\bigl(1 / \bigl(\sqrt{\alpha} \, \rho\bigr)\bigr)$ for regularization parameter $\rho$, giving a positive answer to the open problem posed by Fountoulakis and Yang (COLT 2022).

Our faster primitive has the potential to improve a broad range of graph algorithms that rely on local push.
For example, substituting our primitive into the ACL framework directly yields faster PageRank-based local graph clustering, and we also develop reductions that lead to faster algorithms for effective resistance estimation.

Our main technical contribution is a potential-function analysis of a refinement of the active-set method of Wei and Yang (preprint 2026), which repeatedly invokes an SDD solver on the current active set of nodes and expands the set.
We relate the potential decreases over consecutive blocks of expansions to show that the number of expansions is bounded by $\tO\bigl(1 / \sqrt{\alpha}\bigr)$.

\end{abstract}

\section{Introduction} \label{sec:intro}

PageRank~\cite{brin1998anatomy} is a fundamental measure of node centrality or proximity with a natural interpretation in terms of random-walk probabilities.
Given a teleportation parameter $\alpha \in (0,1)$, the PageRank value from a seed node $s$ to a node $v$ equals the probability that a random walk that starts from $s$ terminates at $v$, where the walk terminates with probability $\alpha$ before each step.
Beyond its original use in web ranking, PageRank has been widely used in network analysis, data mining~\cite{gleich2015pagerank,yang2024efficient}, and theoretical computer science.
As a basic concept and tool in spectral graph theory, it has been applied to numerous topics in theoretical computer science, including local graph clustering~\cite{andersen2006local,andersen2007pagerank,andersen2007detecting,zhu2013local,orecchia2014flow}, directed Laplacian solvers~\cite{cohen2016faster}, small-set expansion~\cite{chan2017random,kwok2017improved}, edge connectivity~\cite{kawarabayashi2019deterministic}, and unique games~\cite{yoshida2026tolerant}.
Additionally, local computation of PageRank has itself received sustained attention, with a long line of work and several recent advances~\cite{andersen2007pagerank,andersen2008local,lofgren2014fast,lofgren2015bidirectional,lofgren2016personalized,wang2020personalized,bressan2023sublinear,wei2024approximating,wang2024revisiting,wang2024revisitinga,bertram2025estimating,thorup2026pagerank,thorup2026instance,bertram2026personalized}.

A central algorithmic primitive in PageRank computation and its applications is the \textit{local push} method of Andersen, Chung, and Lang~\cite{andersen2006local,andersen2007pagerank}, originally developed for improving the local graph clustering algorithm by Spielman and Teng~\cite{spielman2004nearly}.
The local push method approximates the PageRank vector from a seed node by repeatedly distributing residual probability mass from a node to its neighbors, and its locality and approximation guarantees make it a useful building block for other graph algorithms.
It is the core ingredient in PageRank-based graph clustering and partitioning~\cite{andersen2006local,andersen2007pagerank,andersen2007detecting,zhu2013local,orecchia2014flow,kawarabayashi2019deterministic}, and along with its reverse variant and extensions, it has been widely used in local PageRank estimation~\cite{andersen2008local,lofgren2015bidirectional,lofgren2016personalized,wang2020personalized,wei2024approximating,wang2024revisiting,bertram2025estimating,thorup2026pagerank,thorup2026instance,bertram2026personalized}, effective resistance estimation~\cite{cui2025mixing,yang2025improved}, linear-system solving~\cite{kwok2026solving}, and estimating other quantities~\cite{banerjee2015fast,wang2021approximate,bressan2023sublinear}.
These applications motivate us to improve the efficiency of the local push primitive itself.

We study the local push primitive on undirected graphs.
Given a seed node $s$, a teleportation parameter $\alpha \in (0,1)$, and an accuracy parameter $\eps > 0$, the ACL push algorithm~\cite{andersen2007pagerank} computes an ACL $\eps$-approximate PageRank vector, defined formally in \cref{def:ACL} below, in $O\bigl(1/(\alpha\eps)\bigr)$ time, while both the support volume of the vector and the query complexity are $O(1 / \eps)$.
A natural question is whether the linear dependence on $1 / \alpha$ in the time complexity can be improved while preserving the $1 / \eps$ dependence.
Such an improved primitive has the potential to yield faster algorithms in a broad range of applications that rely on the local push method.
Recently, Wei and Yang~\cite{wei2026simple} proposed an active-set algorithm that computes an ACL $\eps$-approximate PageRank vector in $\tO\bigl(1 / \eps^2\bigr)$ time, which achieves only a polylogarithmic dependence on $1 / \alpha$, but the dependence on $1 / \eps$ becomes quadratic.

On the other hand, a related line of work studies PageRank estimation from an optimization perspective.
\cite{fountoulakis2019variational} formulated approximate PageRank computation through $\ell_1$-regularization and established the locality of a proximal-gradient method, and Fountoulakis and Yang~\cite{fountoulakis2022open} then posed the open problem of whether accelerated methods could improve the running time from $\tO\bigl(1/(\alpha\rho)\bigr)$ to $\tO\bigl(1/\bigl(\sqrt{\alpha}\,\rho\bigr)\bigr)$ for regularization parameter $\rho$.
Subsequently, \cite{rubio2023accelerated} achieved the $1 / \sqrt{\alpha}$ dependence but with a larger dependence on $1 / \rho$.
More recently, Fountoulakis and Mart\'inez-Rubio~\cite{fountoulakis2026complexity} established strong hardness results for applying a classic accelerated proximal-gradient method to this problem, but this does not rule out the possibility of achieving the $\tO\bigl(1/(\sqrt{\alpha}\,\rho)\bigr)$ time complexity using general algorithms.

In this paper, we show that an ACL $\eps$-approximate PageRank vector with support volume $O(1 / \eps)$ can be computed in $\tO\bigl(1 / \bigl(\sqrt{\alpha}\,\eps\bigr)\bigr)$ time with high probability, improving the $O\bigl(1/(\alpha\eps)\bigr)$ running time of the original local push primitive.
We also show that an additive approximate minimizer of the $\ell_1$-regularized PageRank problem can be computed in $\tO\bigl(1 / \bigl(\sqrt{\alpha}\,\rho\bigr)\bigr)$ time, answering the open problem of Fountoulakis and Yang~\cite{fountoulakis2022open} in the affirmative.
Our algorithm is a refinement of the active-set method of \cite{wei2026simple} with a new potential-function analysis showing that the number of iterations in the active-set expansion process is $\tO\bigl(1 / \sqrt{\alpha}\bigr)$.

In the rest of this section, we formally define the problems, introduce our main results and the applications, and provide a technical overview of our approach.
We also pose an open problem of whether the $\tO\bigl(1 / \bigl(\sqrt{\alpha}\,\eps\bigr)\bigr)$ time complexity can be further improved to $\tO(1 / \eps)$.

\subsection{Problem Formulation}

Throughout, $G = (V,E)$ denotes a connected, unweighted, undirected simple graph on $n := |V| \ge 2$ nodes.
We write $\A$ for its adjacency matrix, $\D$ for the diagonal matrix of node degrees, and $\d$ for the degree vector.
The neighborhood of a node $v \in V$ is denoted by $N(v)$, and $\e_v \in \R^V$ is the standard basis vector associated with $v$.
For a set $S \subseteq V$, its volume and internal volume are defined by $\vol(S) := \sum_{v \in S}\d(v)$ and $\tvol(S) := |S| + \bigl|\bigl\{(u,v) \in E: u \in S, v \in S\bigr\}\bigr|$, respectively.
For a vector $\x \in \R^V$, we use $\supp(\x) := \bigl\{v : \x(v) \ne 0\bigr\}$ to denote its support.
Our local algorithms access the graph through the standard adjacency-list model: querying a node's degree or retrieving an individual neighbor takes $O(1)$ time.

For any source vector $\s \in \R^V$ and a \textit{teleportation parameter} $\alpha \in (0,1)$, we use $\pr_{\alpha}(\s) \in \R^V$ to denote its PageRank vector, which is defined as the unique solution to the linear system
\begin{align}
	\pr_{\alpha}(\s) = \alpha\s + (1-\alpha) \A\D^{-1} \pr_{\alpha}(\s). \label{eqn:PageRank}
\end{align}
As a shorthand, we write $\pr_{\alpha}(u,v) := \pr_{\alpha}(\e_u)(v)$ for the PageRank value from $u$ to $v$.
Note that we do not regard $\alpha$ as a constant in this paper.
We focus on the following sense of approximate PageRank vector as considered by \cite{andersen2007pagerank,andersen2007detecting}.

\begin{definition}[ACL $\eps$-approximate PageRank vector~\cite{andersen2007pagerank,andersen2007detecting}] \label[definition]{def:ACL}
	For a given source vector $\s \in \R^V$ and a teleportation parameter $\alpha \in (0,1)$, a vector $\p \in \R^V_{\ge 0}$ is called an $\eps$-approximate PageRank vector if there exists a vector $\r$ such that $\p = \pr_{\alpha}(\s-\r)$ and $\vzero \le \r \le \eps \d$ entrywise.
\end{definition}

When we only specify a seed node $s \in V$, we take the source vector as $\e_s$ by default.
Our main goal is to compute an ACL $\eps$-approximate PageRank vector with respect to a seed node $s$ with support volume $O(1 / \eps)$.

We also study the following $\ell_1$-regularized PageRank problem, which is equivalent to the ones in previous works~\cite{fountoulakis2019variational,fountoulakis2022open,rubio2023accelerated} up to variable scaling and parameter transformation.

\paragraph{$\ell_1$-regularized PageRank.}
Given $G$, $s \in V$, $\alpha$, and a regularization parameter $\rho > 0$, the $\ell_1$-regularized PageRank problem~\cite{fountoulakis2019variational} is to minimize
\begin{align}
	\psi_{\rho}(\x) := \frac{1}{2} \, \x^{\top} \bigl(\D - (1 - \alpha)\A\bigr) \x - \alpha\,\e_s^{\top}\x + \alpha\rho \, \|\D\x\|_1 \label{eqn:regularized-objective}
\end{align}
over $\x \in \R^V$.
It is shown in \cite{fountoulakis2019variational} that $\psi_{\rho}(\cdot)$ has a unique minimizer $\xast \ge \vzero$ such that $\D\xast$ is an ACL $\rho$-approximate PageRank vector and $\vol\bigl(\supp(\xast)\bigr) \le 1 / \rho$.
We study the problem of computing a $\xi$-additive approximate minimizer of $\psi_{\rho}(\cdot)$ for a given $\xi > 0$, i.e., computing a vector $\tx \in \R^V$ such that $\psi_{\rho}(\tx) - \psi_{\rho}(\xast) \le \xi$.

\subsection{Our Main Results}

Our main result is a randomized algorithm that computes an ACL $\eps$-approximate PageRank vector in $\tO\bigl(1 / \bigl(\sqrt{\alpha} \, \eps\bigr)\bigr)$ time with high probability, stated as follows.

\begin{theorem} \label{thm:ACL}
	Given a seed node $s \in V$, $\alpha \in (0,1)$, $\eps > 0$, and $\delta \in (0,1)$, there exists an algorithm that satisfies the following properties with probability at least $1 - \delta$:
	\begin{itemize}
		\item it returns an ACL $\eps$-approximate PageRank vector with support volume at most $2/\eps$;
		\item its running time is
		\begin{align*}
			O\Bigl( \min\Bigl\{\frac{1}{\sqrt{\alpha} \, \eps},\frac{1}{\eps^2}\Bigr\} \polylog\frac{1}{\alpha\eps\delta} \Bigr).
		\end{align*}
	\end{itemize}
\end{theorem}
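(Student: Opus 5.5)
We plan to prove \cref{thm:ACL} by analyzing a refined active-set method in the spirit of \cite{wei2026simple}. The method keeps an active set $S \ni s$, starting from $S=\{s\}$. In each round it approximately solves the SDD system restricted to $S$, namely $\bigl(\D_S - (1-\alpha)\A_{S,S}\bigr)\x_S = \alpha \e_s$ (written with $\x=\D^{-1}\p$). It then computes the residual $\r = \e_s - \tfrac{1}{\alpha}\bigl(\D-(1-\alpha)\A\bigr)\x$ on $S \cup N(S)$ and adds to $S$ every node $v$ with $\r(v) > \eps\d(v)$. The method stops once no such node remains. Because residuals inside $S$ are driven close to zero and are nonnegative outside $S$, the output $\p=\D\x$ is an ACL $\eps$-approximate vector. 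We run the solver to accuracy $\eps/\mathrm{poly}$ and then apply a small push-based cleanup to restore exact nonnegativity. For the cost, each round costs $\tO(\tvol(S)+\vol(S))$ by a nearly-linear-time SDD solver with failure probability $\delta/\mathrm{poly}$. A union bound over rounds then gives the $1-\delta$ guarantee.

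The first ingredient is monotonicity and support control. We will show that the restricted solutions increase entrywise as $S$ grows, which follows from M-matrix monotonicity. We will also show that every node added to $S$ lies in $\supp(\xast)$ for the $\ell_1$-regularized problem with $\rho \approx \eps/2$, and that $\vol(\supp(\xast))\le 2/\eps$. Together these give $\vol(S)\le 2/\eps$ at all times, which proves the support bound. The running time then reduces to (number of rounds)$\times\tO(1/\eps)$. The $1/\eps^2$ branch comes from the fact that each round adds at least one node of degree $\ge 1$, so there are at most $2/\eps$ rounds; this recovers \cite{wei2026simple}.

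The main obstacle is proving that the number of rounds is $\tO(1/\sqrt{\alpha})$. We will use the potential $\Phi(S) = f(\x_S) - f(\xast_{\text{restricted}})$, where $f(\x)=\tfrac12\x^\top(\D-(1-\alpha)\A)\x-\alpha\e_s^\top\x$. This potential is bounded initially by $O(1)$ and bounded below by $-\mathrm{poly}$ scales, so it spans a range of at most $\log\frac{1}{\alpha\eps}$ orders of magnitude. The key step is to group rounds into blocks of $\Theta(1/\sqrt{\alpha})$ consecutive expansions and show that the potential decreases by a constant factor over each block. The intuition is the Chebyshev/Krylov one. After $k$ expansions the set $S$ contains the $k$-hop neighborhood of the violating mass, so the restricted optimum is at least as good as any degree-$k$ polynomial in $\D^{-1}\A$ applied to the current residual, restricted appropriately. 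Chebyshev polynomials of degree $k\approx 1/\sqrt{\alpha}$ contract the error of the system with condition number $1/\alpha$ by a constant. We must show that the violated nodes, which are the only ones added, suffice to support such a polynomial's significant mass. To handle this, we will bound the contribution of non-added nodes by $\eps\d$ residuals, whose total effect is charged against $\vol(S)\eps\le 2$ times a small factor. Multiplying $\tO(1/\sqrt{\alpha})$ rounds by $\tO(1/\eps)$ per round yields the stated bound.
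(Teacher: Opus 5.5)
There is a genuine gap, and it starts with the algorithm itself.

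\textbf{The support bound fails for your algorithm.} You solve $\mL_{\alpha,S}\x_S=\alpha\,\e_s|_S$, which drives all residues inside $S$ to zero. The paper instead solves $\mL_{\alpha,S}\x_S=\alpha(\e_s|_S-0.5\eps\,\d_S)$, so every active node keeps residue $0.5\eps\,\d(v)$. With your choice, the restricted solution $\alpha\mL_{\alpha,S}^{-1}\e_s$ is not dominated by $\xast$; for $S=V$ it is the exact degree-normalized PageRank. Consequently, both of your claims, that added nodes lie in $\Sast$ and that $\vol(S)\le 2/\eps$, are false.

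A concrete counterexample is a long path with $s$ an endpoint.
\begin{itemize}
\item For $S=\{v_0,\dots,v_k\}$, your boundary residue at $v_{k+1}$ is the probability that an $\alpha$-terminated walk from $s$ reaches $v_{k+1}$.
\item This probability is at least $3/8$ whenever $(k+1)^2\le 1/(8\alpha)$. Indeed, the expected hitting time is $(k+1)^2$, Markov's inequality gives hitting within $2(k+1)^2$ steps with probability at least $1/2$, and surviving that long has probability $(1-\alpha)^{2(k+1)^2}\ge 3/4$.
\item So for $\eps<3/16$ your method keeps expanding until $|S|=\Omega(1/\sqrt{\alpha})$. This exceeds $2/\eps$ once $\alpha\le\eps^2/100$.
\end{itemize}
The volume guarantee therefore fails, and with it the $\tO(1/\eps)$ per-round cost and the $1/\eps^2$ branch.

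In the paper, the volume bound is one line from the mass identity: $1\ge\sum_{v\in S}\r(v)=\lambda\vol(S)$ with $\lambda=0.5\eps$. The nonzero internal target also provides slack that keeps approximate residues nonnegative. With $\mu=0.01\alpha\min\{\lambda,\kappa\}$, they lie in $[0.9\lambda\,\d(v),1.1\lambda\,\d(v)]$ on $S$. Your unspecified ``push-based cleanup'' would otherwise have to supply this.

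\textbf{The iteration bound is not established.} You correctly identify it as the crux. The Chebyshev/Krylov intuition needs $S$ to contain $k$-hop neighborhoods after $k$ rounds. That fails precisely because only violating nodes are added, and the proposed charging of non-added nodes is left vague.

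With a single threshold for both activation and stopping, an argument like the paper's also cannot close. A node left outside $S$ may carry residue up to the stopping threshold and lose it upon later activation. This contributes an additive error of order $\alpha\eps$ to a block-comparison inequality. The guaranteed decrease over a block of $\Theta(1/\sqrt{\alpha})$ rounds is only of order $\alpha^{3/2}\eps^2$, which is much smaller.

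The ingredients you are missing are:
\begin{enumerate}
\item Separate the activation threshold $(\lambda+\kappa)\d(v)$, with $\kappa=0.1\alpha\eps^2$, from the stopping threshold $(\lambda+\eta)\d(v)$. Then every non-activated node satisfies $\r_i(v)\le(\lambda+1.1\kappa)\d(v)$ (\cref{lem:trajectory}).
\item Use the potential $\psi_\lambda(\x_i)$. Its decrease over $[a,b]$ equals $-\frac{\alpha}{2}\sum_v(\x_b-\x_a)(v)(\r_b-\r_a)(v)$, thanks to the orthogonality $\sum_v\Delta\x_i(v)\,\Delta\r_j(v)=0$ for $i\ne j$.
\item Prove the block comparison $\Delta\psi(a,a+L)\le\frac{1}{\alpha L^2}\Delta\psi(a-L,a)+1.1\alpha\kappa$ (\cref{lem:batch-decrease}). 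It follows by bounding the later block's decrease via each pair $(\Delta\r_i,\Delta\r_j)$ of the earlier block, using PageRank symmetry and AM--GM, and averaging over the $L^2$ pairs.
\item Use the per-round lower bound $\Delta\psi_i>0.4\alpha^2\eta^2$ coming from a stopping-rule violator.
\end{enumerate}
Combining these with the total decrease bound $\alpha/2$ and $L=\lceil 2/\sqrt{\alpha}\rceil$ (so $\alpha L^2\ge 4$) gives $K=O\bigl(\frac{1}{\sqrt{\alpha}}\log\frac{1}{\alpha\eps}\bigr)$.
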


In comparison, the original local push method~\cite{andersen2007pagerank} computes an ACL $\eps$-approximate PageRank vector in deterministic $O\bigl(1/(\alpha\eps)\bigr)$ time with support volume $O(1 / \eps)$.
The polylogarithmic factors in the running time of our algorithm rely on the invoked SDD (symmetric diagonally dominant) solvers~\cite{spielman2014nearly} in our algorithms and can be small using improved  solvers~\cite{jambulapati2025ultrasparse}.
We remark that the $\tO\bigl( 1 / \eps^2 \bigr)$ time complexity bound for this task is already obtained by \cite{wei2026simple} and is also naturally achieved by our algorithm.

Using the same approach, we also obtain results for computing an additive approximate minimizer of the $\ell_1$-regularized PageRank problem in \eqref{eqn:regularized-objective}.

\begin{theorem} \label{thm:regularized}
	Given a seed node $s \in V$, $\alpha \in (0,1)$, $\rho > 0$, $\xi > 0$, and $\delta \in (0,1)$ and letting $\xast$ be the unique minimizer for $\psi_{\rho}(\cdot)$ and $\Sast := \supp(\xast)$, there exists an algorithm that satisfies the following properties with probability at least $1 - \delta$:
	\begin{itemize}
		\item it returns $\tx$ such that $\psi_{\rho}(\tx)-\psi_{\rho}(\xast) \le \xi$, $\D\tx$ is an ACL $(2\rho)$-approximate PageRank vector, and $\vol\bigl(\supp(\tx)\bigr) \le 1 / \rho$;
		\item its running time is bounded by both
		\begin{align*}
			O\Bigl(\min\Bigl\{\frac{1}{\sqrt{\alpha} \, \rho},\frac{1}{\rho^2}\Bigr\} \polylog\frac{1}{\alpha\rho\xi\delta}\Bigr)
		\end{align*}
		and
		\begin{align*}
			O\Bigl( \frac{1}{\sqrt{\alpha}} \tvol(\Sast) \polylog\frac{1}{\alpha\rho\xi\delta} + |\Sast| \vol(\Sast) \Bigr).
		\end{align*}
	\end{itemize}
\end{theorem}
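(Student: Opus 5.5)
We will prove \cref{thm:regularized} with an active-set scheme in the spirit of \cite{wei2026simple}, analyzed through a potential function. Write $\mat{Q} := \D - (1-\alpha)\A$ and $f(\x) := \tfrac12\x^\top\mat{Q}\x - \alpha\e_s^\top\x$. The minimizer $\xast$ satisfies the KKT conditions: for $v\in\Sast$ we have $\nabla f(\xast)(v) = -\alpha\rho\,\d(v)$, and for $v\notin\Sast$ we have $|\nabla f(\xast)(v)|\le\alpha\rho\,\d(v)$. Here $\r := -\nabla f/\alpha$ plays the role of the ACL residual.

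The algorithm keeps an active set $S$, initialized to $\{s\}$. In each round it approximately solves the restricted problem $\min\{\psi_\rho(\x):\supp(\x)\subseteq S\}$. Monotonicity arguments of \cite{fountoulakis2019variational} show that on $S\subseteq\Sast$ the restricted minimizer is nonnegative and satisfies $\x_S\le\xast$ entrywise. The restricted problem therefore reduces to the SDD system $\mat{Q}_{SS}\x_S = \alpha(\e_s-\rho\d)_S$, which a nearly linear time solver~\cite{spielman2014nearly} handles in $\tO(\tvol(S))$ time to high accuracy. After solving, the algorithm computes the residual on the boundary of $S$ and adds to $S$ every $v$ with $\r(v) > \rho\,\d(v)$. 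These nodes must lie in $\Sast$, because $\x\le\xast$ coordinatewise and $\mat{Q}$ is an M-matrix, so the residual is monotone. The algorithm stops when no such $v$ exists. The resulting $\tx$ is the exact minimizer up to solver error, with $\vol(\supp\tx)\le\vol(\Sast)\le 1/\rho$. Solver inaccuracy is absorbed by stopping at threshold $2\rho$ instead of $\rho$, which gives the ACL $(2\rho)$ guarantee. Taking solver precision $\mathrm{poly}(\alpha\rho\xi\delta)$ makes the additive gap at most $\xi$.

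The cost per round is $\tO(\tvol(S)) + O(\vol(S))$, where the second term is the cost of scanning the boundary. Summing over at most $|\Sast|$ rounds gives the $|\Sast|\vol(\Sast)$ term. The main step, and the main obstacle, is to bound the number of \emph{expensive} rounds by $\tO(1/\sqrt\alpha)$. Since each solve costs at most $\tO(\tvol(\Sast))$ and $\tvol(\Sast)\le 1/\rho$, this gives both the second bound and the $1/(\sqrt\alpha\rho)$ part of the first. The $1/\rho^2$ part follows because $|\Sast|\le 1/\rho$, which already bounds the number of rounds as in \cite{wei2026simple}.

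For the round count we use the potential $\Phi_t := \psi_\rho(\x_t)-\psi_\rho(\xast)$. We have $\Phi_0\le O(\alpha)$ and $\Phi_t\ge0$, and $\Phi_t$ decreases geometrically down to the scale $\alpha\rho\cdot\mathrm{poly}$, which accounts for the polylog factors. The naive argument loses a factor of $1/\alpha$: one expansion reduces $\Phi$ by roughly $\alpha\cdot\|\r_{\text{violated}}\|^2_{\D^{-1}}$, since $\mat{Q}\succeq\alpha\D$. To recover $\sqrt\alpha$, we group rounds into blocks of length $k\approx 1/\sqrt\alpha$ and compare $\x_{t+k}$ with a Chebyshev/accelerated-gradient iterate started at $\x_t$. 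A degree-$k$ polynomial in $\D^{-1}\mat{Q}$ applied to the residual has support inside the $k$-step expansion of $S_t$. The restricted minimizer over $S_{t+k}$ is at least as good as any vector supported there. If the cumulative expansions cover the $k$-hop support needed by Chebyshev acceleration, then $\Phi_{t+k}\le(1-c)\Phi_t$ for a constant $c$.

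The delicate point is that the active set grows only where the residual exceeds the threshold, not along the full $k$-hop neighborhood. To handle this, we truncate the Chebyshev vector to coordinates in $\Sast$, or threshold it, and bound the truncation error by the sub-threshold residual mass, which is $O(\rho\,\vol)$. This gives $\Phi_{t+k}\le(1-c)\Phi_t + O(\alpha\rho^2\vol(\Sast))$ up to logs. Hence after $\tO(1)$ blocks $\Phi$ drops below the threshold at which no violation is possible, which yields $\tO(1/\sqrt\alpha)$ rounds. If the truncation error turns out too large, we will instead relate the decreases over consecutive blocks through a telescoping inequality between block-wise potential drops, as the abstract suggests. The success probability $1-\delta$ comes from a union bound over the randomized solver calls.
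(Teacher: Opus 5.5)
Your overall architecture matches the paper: an active set kept inside $\Sast$, restricted SDD solves, the bound of $|\Sast|$ rounds for the $1/\rho^2$ and $|\Sast|\vol(\Sast)$ terms, and $\psi_{\rho}$ as the potential. However, there is a genuine gap in the step you yourself flag, the $\tO(1/\sqrt{\alpha})$ bound on the number of rounds.

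\textbf{The Chebyshev comparison does not go through.} It needs a competitor supported on the actual active set $S_{t+k}$, because the restricted minimizer on $S_{t+k}$ only beats vectors supported there. Truncating the Chebyshev iterate to $\Sast$ does not produce such a vector. You only know that $S_{t+k}$ lies within $k$ hops of $S_t$, not that it contains the nodes of $\Sast$ within $k$ hops. Without first restricting to $\Sast$, the iterate is not even local, since the gradient $\alpha(\rho\d - \r)$ is nonzero at every node.

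\textbf{The claimed recursion cannot conclude, even granting the truncation.} A non-terminating round only guarantees a decrease of order $\alpha^2\eta^2$, where $\eta \le \rho$ is the gap between the stopping threshold and $\rho$. The recursion $\Phi_{t+k} \le (1-c)\Phi_t + O(\alpha\rho^2\vol(\Sast))$ has floor $\alpha\rho^2\vol(\Sast)$, which can be as large as $\alpha\rho$. Once $\Phi$ reaches it, the argument cannot rule out on the order of $\rho^2\vol(\Sast)/(\alpha\eta^2) \ge \vol(\Sast)/\alpha$ further rounds. Your error term charges the whole sub-threshold residue $\rho\,\d(v)$, but only the excess of inactive residues above the internal level $\rho\,\d(v)$ matters.

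\textbf{How the paper handles this.} It makes that excess tiny by activating at $(\rho+\kappa)\d(v)$ with $\kappa = 0.4\alpha\eta^2$. Then every inactive node has $\r_i(v) \le (\rho + 1.1\kappa)\d(v)$. The additive loss per block is $1.1\alpha\kappa$, which is below the $0.4L\alpha^2\eta^2$ decrease forced by a full block. Your fallback is exactly the missing content. The paper uses no accelerated method and proves $\Delta\psi(a,a+L) \le \frac{1}{\alpha L^2}\Delta\psi(a-L,a) + 1.1\alpha\kappa$ for $L = \lceil 2/\sqrt{\alpha}\rceil$ directly, using:
\begin{itemize}
\item the orthogonality $\sum_v \Delta\x_i(v)\,\Delta\r_j(v) = 0$ for $i \ne j$, which follows from PageRank symmetry;
\item the resulting interval identity for $\Delta\psi(a,b)$;
\item a bound on the second-block reserve increase by $\pr_{\alpha}$ applied to the first-block residue drops;
\item AM--GM summed over all $L^2$ index pairs, matched against the one-step bound $\Delta\psi_i \ge \frac12\alpha^2\sum_{v\in T_i}\Delta\r_i(v)^2/\d(v)$.
\end{itemize}

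\textbf{The stopping rule is also mis-calibrated with respect to $\xi$.} Suppose you stop once all boundary residues are at most $2\rho\,\d(v)$. Then $\tx$ is not the minimizer up to solver error. Nodes of $\Sast$ with residue in $(\rho\,\d(v), 2\rho\,\d(v)]$ may never be activated, leaving a gap that does not shrink with $\xi$. The bound $\psi_{\rho}(\tx) - \psi_{\rho}(\xast) \le \alpha(\rho\d - \tr)^{\top}\tx + \alpha(\tr - \rho\d)^{\top}\xast$ then only gives about $\alpha\rho$, and higher solver precision does not change this. If instead you stop only when no residue exceeds $\rho\,\d(v)$, violations can be arbitrarily small, so there is no per-round decrease to count.

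The paper introduces a separate termination gap $\eta = \min\{\rho, \xi/(2\alpha)\}$. Stopping at $(\rho+\eta)\d$ gives objective gap at most $2\alpha\eta \le \xi$, and this $\eta$ is what enters the round count as $\log(1/(\alpha\eta))$.

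\textbf{Activation also needs a margin.} Activating on approximate residues at exactly $\rho\,\d(v)$ does not ensure the new nodes lie in $\Sast$. It is the margin $\kappa$, set against the solver error $0.1\kappa\,\d(v)$, that guarantees the exact residue exceeds $\rho\,\d(v)$.
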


The $\tO\bigl(1/\bigl(\sqrt{\alpha}\,\rho\bigr)\bigr)$ time complexity bound above answers the open problem of \cite{fountoulakis2022open} in the affirmative.
Additionally, the ASPR algorithm in \cite{rubio2023accelerated} computes a $\xi$-additive approximate minimizer in deterministic time
\begin{align*}
	O\Bigl( \frac{1}{\sqrt{\alpha}} \, |\Sast| \tvol(\Sast) \log\frac{1}{\alpha\rho\xi} + |\Sast| \vol(\Sast) \Bigr).
\end{align*}
Compared to this bound, up to polylogarithmic factors, our result removes a factor of $|\Sast|$ in the first term.

Similar to \cite{wei2026simple}, the randomness of the algorithms in \cref{thm:ACL,thm:regularized} only comes from the calls to randomized SDD solvers (cf. \cref{sec:overview}).
Replacing these subroutines with deterministic almost-linear-time SDD solvers~\cite{chuzhoy2020deterministic} yields deterministic algorithms whose running-time bounds no longer depend on the failure probability $\delta$.
The resulting bounds incur additional factors of $(1 / \eps)^{o(1)}$ and $|\Sast|^{o(1)} \le (1 / \rho)^{o(1)}$, respectively.

Through a change of the teleportation parameter, our results also apply to a variant of PageRank defined using lazy random walks.
Specifically, the lazy-walk definition with parameter $\alpha_{\mathrm L} = \alpha / (2 - \alpha)$ gives the same PageRank vector as the non-lazy definition with parameter $\alpha$~\cite{andersen2007pagerank}.
Since $1 / \alpha_{\mathrm L} = (2 - \alpha) / \alpha = \Theta(1 / \alpha)$, this conversion preserves the asymptotic dependence on the inverse teleportation parameter in all our bounds.

\subsection{Applications of the Improved Local Push Primitive}

We apply our improved local push primitive to the following three graph problems.

\paragraph{Local graph clustering.}
Directly replacing local push in the \PageRankNibble algorithm of Andersen, Chung, and Lang~\cite{andersen2007pagerank} with our primitive improves its running-time dependence on the inverse target conductance.
We present this application in \cref{sec:PageRank-Nibble}.

\paragraph{Single-pair PageRank estimation.}
The same substitution in the bidirectional PageRank estimator of Lofgren, Banerjee, and Goel~\cite{lofgren2015bidirectional} improves the running-time dependence on $1 / \alpha$ for single-pair PageRank estimation.
We give the details in \cref{sec:single-pair-PageRank}.

\paragraph{Effective resistance estimation.}
We design a new reduction that estimates single-pair effective resistance through a short sum of PageRank matrix powers, which can be approximated using repeated calls to the local push primitive combined with random-walk sampling.
The reduction improves existing bidirectional algorithms~\cite{cui2025mixing,yang2025improved,kwok2026solving} even with the original local push method, and plugging in our faster primitive yields a further improvement.
We present the reduction and its analysis in \cref{sec:effective-resistance}.

\subsection{Technical Overview} \label{sec:overview}

We focus here on computing an ACL $\eps$-approximate PageRank vector with respect to a seed node $s$.
For this task, the original local push method maintains nonnegative reserve and residue vectors $\p$ and $\r$ that satisfy the invariant property $\p = \pr_{\alpha}(\e_s-\r)$ throughout.
The vectors $\p$ and $\r$ should be understood as probability mass on the nodes of the graph, and initially $\p = \vzero$ and $\r = \e_s$.
The local push method repeatedly performs push operations on nodes with excessive residues, where each push operation transfers $\alpha$ fraction of the residue of the chosen node to its reserve and propagates the remaining $(1 - \alpha)$ fraction of the residue to its neighbors.
The algorithm terminates when the residue vector satisfies $\r \le \eps \d$ entrywise, at which time the accumulated reserve vector $\p$ is an ACL $\eps$-approximate PageRank vector by definition.

The method in \cite{wei2026simple} organizes this mass-propagation process differently as a sequence of active-set expansions.
It maintains an active set $S$ of nodes, and conceptually performs an infinite process that only performs push operations on the nodes in the set $S$ until convergence, while only the excess residue above the threshold $0.5 \eps \, \d(v)$ is pushed.
After this infinite push process, only the nodes on the external boundary of $S$ may receive excessive residues.
If any of them indeed violates the residue threshold required by an ACL $\eps$-approximate PageRank vector, the algorithm expands $S$ by adding these boundary nodes to it and repeats the process.

Crucially, for each encountered active set $S$, the limiting state of the infinite push process on $S$ can be approximated by solving an SDD system on $S$.
The algorithm ensures that the computed approximate states are sufficiently accurate to guarantee that the active-set expansion process preserves the monotonicity and nonnegativity properties of a pure push-based process.
Also, we can guarantee that the volume of the active set $S$ is $O(1 / \eps)$ throughout, and thus each SDD solve takes $\tO(1 / \eps)$ time using nearly-linear-time SDD solvers~\cite{spielman2014nearly}.
\cite{wei2026simple} directly bounds the number of active-set expansions by $\vol(S) = O(1 / \eps)$, which yields a total running time of $\tO\bigl(1 / \eps^2\bigr)$.

We follow the active-set framework of \cite{wei2026simple}, but refine the expansion criterion by separating the activation and stopping thresholds.
We still stop when all the residues on the external boundary of $S$ are at most $\eps \, \d(v)$, but otherwise activate every boundary node whose residue exceeds a threshold slightly above $0.5 \eps \, \d(v)$.
Intuitively, this avoids the situation where a boundary node with residue slightly below $\eps \, \d(v)$ is not activated in one iteration, but its residue exceeds $\eps \, \d(v)$ in the next iteration, necessitating an additional expansion of the active set to include it.

Specifically, we analyze the refined active-set process by using the $\ell_1$-regularized objective $\psi_{0.5 \eps}(\cdot)$ of the degree-normalized reserve vector as the potential function of the process.
Our key lemma compares adjacent blocks of $\Theta\bigl(1 / \sqrt{\alpha}\bigr)$ expansions, showing that the potential decrease over the second block is at most a constant fraction of that over the first, plus a small additive term that comes from the small activation gap and solver error.
Intuitively, this reflects the diminishing benefit of further expansions, as measured by the potential decrease over successive blocks.
We establish the lemma by analyzing the potential decrease in the second block of expansions in terms of the residue changes in the first block.
This result implies that the potential decrease per block decays almost geometrically and would fall below the minimum decrease required by a complete block of non-terminating iterations after only logarithmically many blocks.
Thus, the number of iterations is bounded by $\tO(1 / \sqrt{\alpha})$, which yields the total running time of $\tO\bigl(1 / \bigl(\sqrt{\alpha} \, \eps\bigr)\bigr)$.
This algorithmic framework and analysis also extend to $\ell_1$-regularized PageRank.

\subsection{Other Related Work}

Most work on local PageRank computation studies the estimation of individual PageRank values or vectors under absolute or relative error guarantees~\cite{lofgren2015bidirectional,lofgren2016personalized,bressan2023sublinear,wei2024approximating,wang2024revisiting,bertram2025estimating,bertram2026personalized}.
Moreover, many of these works treat the teleportation parameter $\alpha$ as a constant and focus on the dependence on graph parameters.
Thus, the approximation guarantees and parameter regimes that they study are different from ours.

There are also works on local solvers for SDD systems~\cite{andoni2019solving} and asymmetric diagonally dominant systems~\cite{kwok2026solving}.
However, these solvers rely on techniques that are commonly used in local PageRank computation, and thus their results cannot be applied or adapted to yield a faster algorithm for our purposes.

PageRank estimation has also been studied in multi-pass graph streams~\cite{sarma2011estimating} and single-pass random-order streams~\cite{kallaugher2022simulating}, using random-walk simulation to obtain space-efficient algorithms.
Other work considers computing PageRank in the Massively Parallel Computation (MPC) model~\cite{lacki2020walking} and maintaining approximate PageRank under edge updates in dynamic graphs~\cite{jayaram2024dynamic}.

\subsection{Future Directions}

A natural open problem is whether an ACL $\eps$-approximate PageRank vector with support volume $O(1 / \eps)$ can be computed in $\tO(1 / \eps)$ time, with only a polylogarithmic dependence on $1 / \alpha$.
The original local push algorithm already achieves both support volume and query complexity $O(1 / \eps)$, and, to our knowledge, no hardness result rules out the running-time bound of $\tO(1 / \eps)$ for general algorithms.
Such an improvement would further accelerate applications of local push, and in particular, it would remove the polynomial dependence on the reciprocal of the desired conductance in the running time of \PageRankNibble (cf. \cref{cor:PageRank-Nibble}) for local graph clustering.
From a broader perspective, this open problem reflects the challenge of narrowing the gap between query complexity and computational complexity in sublinear-time algorithms, which is generally underexplored as highlighted by recent work on property testing~\cite{ferreira2026computational}.

We also believe that our improved local push primitive has more applications beyond those explored in this paper.
It would be interesting to identify other graph algorithms that can benefit from the accelerated local push primitive.

\subsection{Paper Organization}

The rest of the paper is organized as follows.
In \cref{sec:prelim}, we introduce notation and basic tools.
In \cref{sec:alg}, we present the active-set algorithm, analyze the number of expansions, and establish our main results.
\cref{sec:applications} gives the applications to local graph clustering and single-pair PageRank estimation.
Finally, \cref{sec:effective-resistance} develops the reduction from effective resistance estimation to PageRank computation and the resulting faster algorithms.

\section{Preliminaries} \label{sec:prelim}

For any $\x \in \R^V$ and $S \subseteq V$, we use both $\x_S \in \R^S$ and $\x|_S \in \R^S$ to denote the restriction of $\x$ to the coordinates in $S$.
When we use $\le$ and $\ge$ between two vectors, we mean entrywise comparison.
We use $\vzero$ and $\vone$ to denote the all-zeros and all-ones vectors, respectively, with appropriate dimensions.

For $S \subseteq V$, let matrices $\A_S$ and $\D_S$ be the principal submatrices of $\A$ and $\D$ indexed by $S$, respectively.
Define the Laplacian matrix as $\mL := \D - \A$, and define $\mL_{\alpha} := \D - (1 - \alpha) \A$ and $\mL_{\alpha,S} := \D_S - (1 - \alpha) \A_S$.
We call a square matrix \textit{symmetric diagonally dominant} (SDD) if it is symmetric and each of its diagonal entries is at least the sum of the absolute values of the other entries in the same row.
It follows that $\mL_{\alpha,S}$ is SDD for any $S \subseteq V$.
Also, define the external boundary of $S \subseteq V$ by $\partial S:= \bigl\{v \in V \setminus S : N(v) \cap S \ne \varnothing\bigr\}$.

For a vector $\x$ and a symmetric positive definite matrix $\S$, define the energy norm as $\|\x\|_{\S} := \sqrt{\x^{\top} \S \x} = \bigl\|\S^{1/2}\x\bigr\|_2$.
For a matrix $\M$, let $\M^{+}$ denote its Moore--Penrose pseudoinverse.
The effective resistance between two distinct nodes $s,t \in V$ is defined as $R(s,t) := (\e_s - \e_t)^{\top} \mL^{+} (\e_s - \e_t)$.

In our algorithm, we use nearly-linear-time approximate solvers for SDD linear systems as a black box.
We state below a version of SDD solvers with a deterministic running-time bound, which can be obtained from existing results by standard repetition and truncation operations.

\begin{theorem}[SDD solver~\cite{spielman2014nearly,jambulapati2025ultrasparse}] \label{thm:SDD_solver}
	There is a randomized algorithm $\textup{\texttt{SDDSolve}}(\S, \b, \mu, \delta)$ that, given a positive definite SDD matrix $\S \in \R^{N \times N}$, a vector $\b \in \R^N$, an accuracy parameter $\mu \in (0,1)$, and a failure probability $\delta \in (0,1)$, runs in $O\bigl( \nnz(\S) \polylog\bigl(N / (\mu\delta)\bigr) \bigr)$ time and with probability at least $1 - \delta$ returns a vector $\tx \in \R^N$ such that
	\begin{align}
		\bigl\| \tx - \S^{-1}\b \bigr\|_{\S} \le \mu \bigl\| \S^{-1}\b \bigr\|_{\S}. \label{eqn:solver-guarantee}
	\end{align}
\end{theorem}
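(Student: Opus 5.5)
This is a black-box statement. I would obtain it from the known nearly-linear-time SDD solvers by a repeat-and-check wrapper. The existing solvers come in two flavors: a Spielman--Teng type solver with an \emph{expected} (or high-probability) running time bound, and a Laplacian-style preconditioned-iteration solver such as \cite{jambulapati2025ultrasparse}. For a positive definite SDD matrix $\S$, both return, with constant probability, a vector $\tx$ satisfying \eqref{eqn:solver-guarantee} for a prescribed $\mu$.

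The first step is to reduce to the Laplacian case, if the underlying solver needs it. I would use the standard Gremban-type reduction: embed $\S$ into a Laplacian of dimension $O(N)$ with $O(\nnz(\S))$ nonzeros. Positive off-diagonal entries are handled by a doubled cover. Diagonal excess is handled by an auxiliary grounding vertex. The energy norm is preserved exactly under this embedding, so the guarantee transfers.

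Next, I would make the running time deterministic by truncation. Suppose a single run has expected time $T = O\bigl(\nnz(\S)\polylog(N/\mu)\bigr)$ and succeeds with probability at least $3/4$. By Markov's inequality, stopping the run after $4T$ steps still yields a successful output with probability at least $1/2$.

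Then I would amplify the success probability without being able to verify \eqref{eqn:solver-guarantee} directly, since $\S^{-1}\b$ is unknown. I would run $k = O(\log(1/\delta))$ independent truncated trials with accuracy $\mu/3$ each, obtaining candidates $\tx_1,\dots,\tx_k$. With probability at least $1-\delta$, more than half of them lie within $(\mu/3)\|\S^{-1}\b\|_{\S}$ of the truth.

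The main obstacle is selecting a good candidate, because we cannot evaluate the $\S$-norm distance to $\S^{-1}\b$. I would instead use the median trick in the metric $\|\cdot\|_{\S}$. For each pair $i,j$, computing $\|\tx_i - \tx_j\|_{\S}$ takes $O(\nnz(\S))$ time. We then output any $\tx_i$ that is within $(2\mu/3)\|\S^{-1}\b\|_{\S}$ of more than half of the candidates.

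The threshold depends on the unknown quantity $\|\S^{-1}\b\|_{\S}$. I would handle this by picking the candidate whose median distance to the others is smallest. A good candidate has median distance at most $2\mu/3$ times the norm, by the triangle inequality among the good majority. Any chosen candidate shares some good neighbor at median distance within that bound. It is therefore within $(2\mu/3 + \mu/3)\|\S^{-1}\b\|_{\S} = \mu\|\S^{-1}\b\|_{\S}$ of the truth.

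The pairwise comparisons cost $O(k^2\nnz(\S)) = O\bigl(\nnz(\S)\log^2(1/\delta)\bigr)$. Summing this with the $k$ truncated trials gives a total of $O\bigl(\nnz(\S)\polylog(N/(\mu\delta))\bigr)$, which is deterministic, as claimed.
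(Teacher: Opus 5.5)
Your route matches the paper's, which gives no proof and only remarks that the statement follows from the cited solvers by standard repetition and truncation. Your Markov truncation, followed by median-trick amplification in the metric $\|\cdot\|_{\S}$, is a correct way to carry that out. Taking each candidate's median over all $k$ candidates including itself, any chosen candidate is within $2\cdot(\mu/3)\|\S^{-1}\b\|_{\S}$ of some good one, hence within $\mu\|\S^{-1}\b\|_{\S}$ of the truth.

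One constant needs repair. Truncating at $4T$ only guarantees per-trial success probability at least $1/2$. With success probability exactly $1/2$, a Chernoff bound cannot give a strict majority of good candidates with probability $1-\delta$, and that majority is what the selection step needs. Truncate at $8T$ instead, which gives per-trial success at least $5/8$, or boost the base solver's success probability. The rest of your argument then goes through unchanged.
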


Nearly-linear-time solvers for SDD systems originate in the work of Spielman and Teng~\cite{spielman2004nearly} and have since then been improved and simplified in a long line of work.
Note that the term $\polylog\bigl(N / (\mu\delta)\bigr)$ in the complexity bound above can be rather small (e.g., using \cite{jambulapati2025ultrasparse}), but we leave it implicit in this paper.

\subsection{Properties of PageRank}

For every $u \in V$, the PageRank values satisfy $\pr_{\alpha}(u,v) \ge 0$ for all $v \in V$ and $\sum_{v \in V} \pr_{\alpha}(u,v) = 1$.
On undirected graphs, they also satisfy the following symmetry identity~\cite{arvachenkov2013choice,lofgren2015bidirectional}, which is a consequence of the reversibility of random walks on undirected graphs:
\begin{align}
	\d(u) \, \pr_{\alpha}(u,v) = \d(v) \, \pr_{\alpha}(v,u), \qquad \forall \, u,v \in V. \label{eqn:symmetry}
\end{align}
Next, fix a seed node $s \in V$ and consider the invariant property $\p = \pr_{\alpha}(\e_s - \r)$ as required in the definition of an ACL $\eps$-approximate PageRank vector. 
We call $\p$ the \textit{reserve vector} and $\r$ the \textit{residue vector}.
The invariant property is equivalent to
\begin{align}
	\pr_{\alpha}(s,v) = \p(v) + \sum_{u \in V} \r(u) \, \pr_{\alpha}(u,v), \qquad \forall \, v \in V. \label{eqn:invariant-entrywise}
\end{align}
The following notation and properties are given in \cite{wei2026simple}.
For any $\x \in \R^V$ (which should be thought of as the degree-normalized reserve vector $\D^{-1} \p$), we define its associated residue vector $\r(\x) \in \R^V$ as
\begin{align}
	\r(\x) := \e_s - \frac{1}{\alpha} \, \mL_{\alpha} \, \x. \label{eqn:residue-def}
\end{align}
Note that given $\x \in \R^V$, we can compute $\r(\x)$ in $O\bigl(\vol\bigl(\supp(\x)\bigr)\bigr)$ time by scanning the edges incident to $\supp(\x)$.
For any $\x \in \R^V$, $\D\x$ and $\r(\x)$ satisfy the invariant $\D\x = \pr_{\alpha}\bigl(\e_s - \r(\x)\bigr)$, and we also have
\begin{align}
	\mL_{\alpha} \, \x = \alpha \bigl(\e_s - \r(\x)\bigr). \label{eqn:residue-invariant}
\end{align}
Thus, given a fixed residue vector $\r(\x)$, the corresponding degree-normalized reserve vector $\x$ can be recovered by solving the linear system \eqref{eqn:residue-invariant}.
Furthermore, if we are given $\r \in \R^S$ and want to compute a vector $\x \in \R^V$ such that $\r(\x)_S = \r$ and $\x_{V \setminus S} = \vzero$, we can set $\x_S$ by solving the restricted linear system $\mL_{\alpha,S} \, \x_S = \alpha (\e_s|_S - \r)$.
Here, $\mL_{\alpha,S}$ is SDD and invertible.

The definition of $\r(\x)$ also guarantees that, for any $\x \in \R^V$,
\begin{align}
	\vone^{\top} \D \x + \vone^{\top} \r(\x) = 1. \label{eqn:mass-identity}
\end{align}
If $\supp(\x) \subseteq S$ and $s \in S$ for some $S \subseteq V$, then
\begin{align}
	\r(\x)(v) = \frac{1 - \alpha}{\alpha} \sum_{u \in N(v) \cap S} \x(u), \qquad \forall \, v \in V \setminus S. \label{eqn:frontier-residue-formula}
\end{align}
In particular, $\r(\x)(v) = 0$ for every $v \notin S \cup \partial S$.

\section{The Active-Set Algorithm} \label{sec:alg}

In this section, we present our active-set algorithm for computing approximate PageRank vectors.

\subsection{Algorithm Description and Basic Properties}

We first describe the algorithm for computing an ACL $\eps$-approximate PageRank vector.
Following the framework of \cite{wei2026simple}, our algorithm alternates between approximately solving an SDD system on the current active set and expanding this set according to the residues on the external boundary of the active set.
We give pseudocode in \cref{alg:active-set}.
If $\eps \, \d(s) \ge 1$, the algorithm immediately returns $\tp = \vzero$, which satisfies the ACL guarantee with residue $\r = \e_s$.
Otherwise, it initializes the active set to $S := \{s\}$ and sets the activation gap $\kappa := 0.1 \alpha \eps^2$ and solver accuracy $\mu := 0.01 \alpha \kappa$.

In each iteration, the algorithm invokes the SDD solver from \cref{thm:SDD_solver} to approximately solve the restricted system $\mL_{\alpha,S} \, \x_S = \alpha (\e_s|_S - 0.5 \eps \, \d_S)$.
Recall that the exact solution to this system is the degree-normalized reserve vector supported on $S$ whose corresponding residue vector equals $0.5 \eps \, \d_S$ on $S$.
The solver returns an approximation $\tx_S$ with accuracy $\mu$ and failure probability at most $0.5\eps\delta$.
The algorithm then computes the residues on the external boundary $\partial S$ by \cref{eqn:frontier-residue-formula}.
If $\tr(v) \le \eps \, \d(v)$ for every $v \in \partial S$, then $\D_S \, \tx_S$ extended to $V$ by zero is a valid ACL $\eps$-approximate PageRank vector and the algorithm returns it.
Otherwise, it adds every boundary node satisfying $\tr(v) > (0.5 \eps + \kappa) \, \d(v)$ to $S$ and repeats the solve on the enlarged active set.

Note that in the algorithm, the activation threshold $0.5\eps + \kappa$ is lower than the stopping threshold $\eps$, so an expansion may also include nodes whose residues already satisfy the stopping condition.
On the other hand, as $\kappa < 0.5 \eps$ provided that $\eps \, \d(s) < 1$, each non-terminating iteration adds at least one node to the active set $S$.

\begin{algorithm}[ht]
	\DontPrintSemicolon
	\caption{$\activeSet(s,\alpha,\eps,\delta)$} \label{alg:active-set}
	\KwIn{seed node $s \in V$, teleportation parameter $\alpha \in (0,1)$, accuracy $\eps > 0$, failure probability $\delta \in (0,1)$}
	\KwOut{an ACL $\eps$-approximate PageRank vector $\tp$ with probability at least $1 - \delta$}
	\If{$\eps \, \d(s) \ge 1$}
	{
		\Return{$\tp \gets \vzero$} \;
	}
	\makebox[3cm][l]{$\kappa \gets 0.1 \alpha \eps^2$} \textcolor{gray}{// activation gap} \;
	\makebox[3cm][l]{$\mu \gets 0.01 \alpha \kappa$} \textcolor{gray}{// SDD solver accuracy} \;
	$S \gets \{s\}$ \;
	\While{\textup{true}}
	{
		$\tx_S \gets \textup{\texttt{SDDSolve}}\bigl(\mL_{\alpha,S}, \alpha (\e_s|_S - 0.5 \eps \, \d_S), \mu, 0.5 \eps \delta\bigr)$ \;
		$\tr(v) \gets \frac{1 - \alpha}{\alpha} \sum_{u \in N(v) \cap S} \tx_S(u)$ for each $v \in \partial S$ \;
		\If{$\tr(v) \le \eps \, \d(v)$ \textup{for all} $v \in \partial S$}
		{
			\Return{$\tp$ \textup{with} $\tp_S \gets \D_S \, \tx_S$ \textup{and} $\tp_{V \setminus S} \gets \vzero$} \;
		}
		$S \gets S \cup \bigl\{v \in \partial S: \tr(v) > (0.5 \eps + \kappa) \, \d(v)\bigr\}$ \;
	}
\end{algorithm}

For ease of later adaptation to solving $\ell_1$-regularized PageRank, we will analyze a more general version of the active-set algorithm given in \cref{alg:general-active-set}.
The algorithm takes as input an internal residue threshold $\lambda > 0$, a termination gap $\eta > 0$, and a small activation gap $0 < \kappa \le 0.4 \alpha \eta^2$.
This algorithm sets the SDD accuracy parameter as $\mu := 0.01 \alpha \min\{\lambda,\kappa\}$, and solves the restricted systems and performs the expansion according to the parameters $\lambda$, $\eta$, and $\kappa$.
Clearly, \cref{alg:active-set} is the instantiation of \cref{alg:general-active-set} with $\lambda = \eta = 0.5 \eps$ and $\kappa = 0.1 \alpha \eps^2$.

\begin{algorithm}[ht]
	\DontPrintSemicolon
	\caption{$\generalActiveSet(s,\alpha,\lambda,\kappa,\eta,\delta)$} \label{alg:general-active-set}
	\KwIn{seed node $s \in V$, teleportation parameter $\alpha \in (0,1)$, internal threshold $\lambda > 0$, termination gap $\eta > 0$, activation gap $0 < \kappa \le 0.4 \alpha \eta^2$, failure probability $\delta \in (0,1)$}
	\KwOut{an approximate PageRank vector $\tp$}
	\If{$(\lambda + \eta) \, \d(s) \ge 1$}
	{
		\Return{$\tp \gets \vzero$} \;
	}
	$\mu \gets 0.01 \alpha \min\{\lambda,\kappa\}$ \quad \textcolor{gray}{// SDD solver accuracy} \;
	$S \gets \{s\}$ \;
	\While{\textup{true}}
	{
		$\tx_S \gets \textup{\texttt{SDDSolve}}\bigl(\mL_{\alpha,S}, \alpha (\e_s|_S - \lambda \, \d_S), \mu, \lambda\delta\bigr)$ \;
		$\tr(v) \gets \frac{1 - \alpha}{\alpha} \sum_{u \in N(v) \cap S} \tx_S(u)$ for each $v \in \partial S$ \;
		\If{$\tr(v) \le (\lambda + \eta) \, \d(v)$ \textup{for all} $v \in \partial S$}
		{
			\Return{$\tp$ \textup{with} $\tp_S \gets \D_S \, \tx_S$ \textup{and} $\tp_{V \setminus S} \gets \vzero$} \;
		}
		$S \gets S \cup \bigl\{v \in \partial S: \tr(v) > (\lambda + \kappa) \, \d(v)\bigr\}$ \;
	}
\end{algorithm}

To analyze \cref{alg:general-active-set}, we first introduce some additional notation.
For a set $S$ containing $s$, define $\x^{(S)} \in \R^V$ by
\begin{align}
	\x^{(S)}_S := \alpha \, \mL_{\alpha,S}^{-1} (\e_s|_S - \lambda \, \d_S), \qquad \x^{(S)}_{V \setminus S} := \vzero. \label{eqn:exact-solution}
\end{align}
Plugging \cref{eqn:exact-solution} into \cref{eqn:residue-def} verifies that $\r\bigl(\x^{(S)}\bigr)_S = \lambda \, \d_S$.
Let $S_0,S_1,\ldots$ be the active sets encountered by \cref{alg:general-active-set}.
At every reached iteration $i$, extend the solver output (the vector $\tx_S$ in the pseudocode) by zero outside $S_i$, denote the resulting vector by $\tx_i$, and define $\tp_i := \D\tx_i$ and $\tr_i := \r(\tx_i)$.
Also, associate with the same active set the exact state $\x_i := \x^{(S_i)}$, $\p_i := \D\x_i$, and $\r_i := \r(\x_i)$.
Whenever iteration $i$ performs an expansion, define its expansion set by
\begin{align*}
	T_i := \bigl\{v \in \partial S_i : \tr_i(v) > (\lambda + \kappa) \, \d(v)\bigr\} = S_{i + 1} \setminus S_i.
\end{align*}

We first reuse some basic properties of the active-set process from \cite{wei2026simple}.
The following lemma formalizes its monotonicity property.

\begin{lemma}[{\cite[Lemma~4.2]{wei2026simple}}] \label[lemma]{lem:active-set-expansion}
	Suppose $s \in S \subseteq V$ and $T \subseteq \partial S$ satisfies $\r\bigl(\x^{(S)}\bigr)_T > \lambda \, \d_T$.
	Then $\x^{(S \cup T)} \ge \x^{(S)}$ and
	\begin{align}
		\x^{(S \cup T)}(v) \ge \frac{\alpha}{\d(v)}\bigl(\r\bigl(\x^{(S)}\bigr)(v) - \lambda \, \d(v)\bigr), \qquad \forall \, v \in T. \label{eqn:active-set-expansion}
	\end{align}
\end{lemma}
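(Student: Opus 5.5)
The plan is to compare the two exact states through their difference $\vec{y} := \x^{(S \cup T)} - \x^{(S)}$, which vanishes outside $S \cup T$, and to show that $\vec{y}$ solves an SDD system with a nonnegative right-hand side. Write $S' := S \cup T$, $\x := \x^{(S)}$ and $\x' := \x^{(S')}$.

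By \eqref{eqn:residue-def}, $\mL_{\alpha}\vec{y} = \alpha\bigl(\r(\x) - \r(\x')\bigr)$. Restricting this identity to $S'$ gives $\mL_{\alpha,S'}\,\vec{y}_{S'} = \alpha\bigl(\r(\x)_{S'} - \r(\x')_{S'}\bigr)$, since $\vec{y}$ is supported on $S'$.

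We evaluate the right-hand side coordinatewise. Because $\r(\x')_{S'} = \lambda\,\d_{S'}$ and $\r(\x)_S = \lambda\,\d_S$, the right-hand side is zero on $S$. On $T$ it equals $\alpha\bigl(\r(\x)(v) - \lambda\,\d(v)\bigr) > 0$ by hypothesis. Hence the right-hand side $\b$ is nonnegative and supported on $T$.

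Next we use that $\mL_{\alpha,S'}^{-1}$ is entrywise nonnegative. Write $\mL_{\alpha,S'} = \D_{S'}\bigl(\I - (1-\alpha)\D_{S'}^{-1}\A_{S'}\bigr)$. The matrix $(1-\alpha)\D_{S'}^{-1}\A_{S'}$ is nonnegative with row sums at most $1-\alpha < 1$, so the Neumann series
\begin{align*}
	\mL_{\alpha,S'}^{-1} = \sum_{k \ge 0} \bigl((1-\alpha)\D_{S'}^{-1}\A_{S'}\bigr)^k \D_{S'}^{-1}
\end{align*}
converges and has nonnegative entries. Therefore $\vec{y}_{S'} = \mL_{\alpha,S'}^{-1}\b \ge \vzero$, which is the monotonicity claim $\x' \ge \x$.

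For \eqref{eqn:active-set-expansion}, keep only the $k=0$ term of the Neumann series. All remaining terms are nonnegative, so $\vec{y}(v) \ge \b(v)/\d(v)$ for each $v \in T$. Since $v \in T \subseteq \partial S$ lies outside $S$, we have $\x(v) = 0$, and hence $\x'(v) = \vec{y}(v) \ge \frac{\alpha}{\d(v)}\bigl(\r(\x)(v) - \lambda\,\d(v)\bigr)$.

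The only step needing care is the sign of $\b$ on $S$ and on $T$, and this is immediate from the two residue identities together with the hypothesis. The argument is otherwise routine.
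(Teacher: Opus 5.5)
Your proof is correct and complete. With $S' := S \cup T$, the difference $\vec{y} = \x^{(S')} - \x^{(S)}$ solves $\mL_{\alpha,S'}\,\vec{y}_{S'} = \b$, where $\b$ vanishes on $S$ and equals $\alpha\bigl(\r(\x^{(S)})(v) - \lambda\,\d(v)\bigr) > 0$ on $T$. The Neumann expansion of $\mL_{\alpha,S'}^{-1}$ is valid because $(1-\alpha)\D_{S'}^{-1}\A_{S'}$ has row sums at most $1-\alpha$. Nonnegativity of all its terms gives monotonicity, and the zeroth term $\D_{S'}^{-1}\b$ alone gives \eqref{eqn:active-set-expansion}.

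This paper does not reprove the lemma. It imports it from \cite{wei2026simple} and, in the technical overview, motivates it through a push process on the active set that only pushes residue in excess of the threshold. Your argument is the linear-algebraic form of that picture. After multiplying by $\D_{S'}$, the $k$-th Neumann term $\bigl((1-\alpha)\A_{S'}\D_{S'}^{-1}\bigr)^k\b$ is exactly the reserve deposited in the $k$-th synchronous round of pushing the excess within $S'$. The $k=0$ term is the $\alpha$ fraction of the excess that the first push at $v \in T$ places in its reserve, which explains the form of the lower bound.

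Working with the inverse directly spares you from arguing that an infinite push process converges to the exact state $\x^{(S')}$. It also makes clear that the argument needs neither $\x^{(S)} \ge \vzero$ nor strictness of the hypothesis on $T$.
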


The next lemma bounds the entrywise error in $\tx$ and $\tr$ introduced by one successful approximate SDD solve.

\begin{lemma}[{\cite[Lemma~4.3]{wei2026simple}}] \label[lemma]{lem:active-set-solver-error}
	Suppose that $(\lambda + \eta) \, \d(s) < 1$ and in an iteration of \cref{alg:general-active-set}, the set $S$ satisfies $\x^{(S)} \ge \vzero$.
	Suppose that the approximate SDD solve in this iteration returns a $\tx_S$ satisfying the guarantee \eqref{eqn:solver-guarantee}.
	Extend $\tx_S$ to $\tx \in \R^V$ by setting $\tx_{V \setminus S} := \vzero$, and let $\tr := \r(\tx)$.
	Then
	\begin{alignat}{2}
		\bigl|\tr(v) - \lambda \, \d(v)\bigr| & \le 0.1 \min\{\lambda,\kappa\} \, \d(v), & \qquad & \forall \, v \in S, \label{eqn:approx-active-residue-error} \\
		\bigl|\tx_S(v) - \x^{(S)}(v)\bigr| & \le 0.1 \alpha \kappa, & \qquad & \forall \, v \in S, \label{eqn:approx-coordinate-error} \\
		\bigl|\tr(v) - \r\bigl(\x^{(S)}\bigr)(v)\bigr| & \le 0.1 \kappa \, \d(v), & \qquad & \forall \, v \in \partial S. \label{eqn:approx-frontier-residue-error}
	\end{alignat}
\end{lemma}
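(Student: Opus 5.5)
Let $\x := \x^{(S)}$ and $\S := \mL_{\alpha,S}$, and write $\b := \alpha(\e_s|_S - \lambda\,\d_S)$, so that $\x_S = \S^{-1}\b$. Set $\vec{e} := \tx_S - \x_S \in \R^S$. The plan has three steps: turn the energy-norm guarantee \eqref{eqn:solver-guarantee} into an entrywise bound on $\vec{e}$; bound $\|\x_S\|_{\S}$ using $\x \ge \vzero$ and mass conservation; then push the error through the linear maps defining $\tr$ on $S$ and on $\partial S$.

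\textbf{Entrywise bound from the energy norm.} Since $\mL_{\alpha,S} \succeq \alpha\,\D_S$ (because $\D_S - \A_S \succeq 0$ and $\D_S + \A_S \succeq 0$), we get
\[
\alpha\,\d(v)\,\vec{e}(v)^2 \le \alpha\,\|\D_S^{1/2}\vec{e}\|_2^2 \le \|\vec{e}\|_{\S}^2 \le \mu^2\,\|\x_S\|_{\S}^2 .
\]

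\textbf{Bounding $\|\x_S\|_{\S}$.} We have $\|\x_S\|_{\S}^2 = \x_S^\top \b = \alpha\bigl(\x(s) - \lambda\,\vone^\top\D\x\bigr) \le \alpha\,\x(s)$, using $\x \ge \vzero$. Mass identity \eqref{eqn:mass-identity} together with nonnegativity of residues on $S \cup \partial S$ gives $\vone^\top\D\x \le 1$. The frontier residues are nonnegative by \eqref{eqn:frontier-residue-formula}, and the residues on $S$ equal $\lambda\,\d_S \ge 0$. Hence $\x(s) \le 1/\d(s) \le 1$, so $\|\x_S\|_{\S}^2 \le \alpha$. Combining with the previous step,
\[
|\vec{e}(v)| \le \mu/\sqrt{\d(v)} \le \mu = 0.01\,\alpha\min\{\lambda,\kappa\} \le 0.1\,\alpha\kappa,
\]
which proves \eqref{eqn:approx-coordinate-error}. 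If extra room is needed later, a sharper bound $\mu\sqrt{\alpha/(\alpha\,\d(v))}$ is available.

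\textbf{Residues.} For $v \in \partial S$, formula \eqref{eqn:frontier-residue-formula} is linear in $\x$, so
\[
|\tr(v) - \r(\x)(v)| \le \frac{1-\alpha}{\alpha}\sum_{u \in N(v)\cap S}|\vec{e}(u)| \le \frac{\d(v)\,\mu}{\alpha} \le 0.01\,\kappa\,\d(v),
\]
which gives \eqref{eqn:approx-frontier-residue-error}. For $v \in S$, by \eqref{eqn:residue-def}, $\tr(v) - \lambda\,\d(v) = -\tfrac{1}{\alpha}(\S\vec{e})(v)$, and
\[
|(\S\vec{e})(v)| \le \d(v)\,|\vec{e}(v)| + (1-\alpha)\sum_{u\in N(v)\cap S}|\vec{e}(u)| \le 2\,\d(v)\,\mu .
\]
Therefore $|\tr(v) - \lambda\,\d(v)| \le 2\mu\,\d(v)/\alpha = 0.02\min\{\lambda,\kappa\}\,\d(v)$, which gives \eqref{eqn:approx-active-residue-error}.

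\textbf{Main obstacle.} The delicate step is the norm bound $\|\x_S\|_{\S}^2 \le \alpha$. This is exactly where the hypothesis $\x^{(S)} \ge \vzero$ enters, ensuring that $\x^\top\D\vone$ and $\x(s)$ are controlled by total mass $1$. The condition $(\lambda+\eta)\,\d(s) < 1$ is not needed for the bounds in this sketch, beyond guaranteeing that the algorithm actually reaches this iteration.
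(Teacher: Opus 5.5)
Your proof is correct, and every constant checks out; you in fact get $0.01$ for the coordinate and frontier bounds and $0.02$ for the active-set bound, where the lemma allows $0.1$. The paper does not reprove this lemma; it imports it from \cite[Lemma~4.3]{wei2026simple}, so there is no in-paper argument to compare against. Your route is a valid self-contained proof. It first turns the energy-norm guarantee into an entrywise bound via $\mL_{\alpha,S} = \alpha\D_S + (1-\alpha)(\D_S - \A_S) \succeq \alpha\D_S$; your extra remark that $\D_S + \A_S \succeq 0$ is not needed. It then bounds $\|\x^{(S)}_S\|_{\mL_{\alpha,S}}^2 = (\x^{(S)}_S)^{\top}\b \le \alpha\,\x^{(S)}(s) \le \alpha$ using $\x^{(S)} \ge \vzero$ and the mass identity. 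Finally it propagates the error through the residue formulas, which are purely algebraic consequences of $\supp(\tx) \subseteq S \ni s$ and so apply to $\tx$ even if it has negative entries.
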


Let $\mathcal{G}$ be the ``good'' event that every approximate SDD call invoked by \cref{alg:general-active-set} satisfies the guarantee \eqref{eqn:solver-guarantee}.
The following lemma lower-bounds the probability of $\mathcal{G}$ and establishes the basic properties of the active-set process on this event.
The proof adapts the argument in \cite[Lemma~4.4]{wei2026simple} to our two-threshold rule.

\begin{lemma} \label[lemma]{lem:trajectory}
	We have $\Pr[\mathcal{G}] \ge 1 - \delta$.
	On $\mathcal{G}$, \cref{alg:general-active-set} terminates after at most $1 / \lambda$ iterations and returns $\tp \ge \vzero$ with $\vzero \le \r\bigl(\D^{-1} \tp\bigr) \le (\lambda + \eta)\d$.
	Moreover, on $\mathcal{G}$, for every iteration $i$, we have $\x_i,\p_i,\r_i,\tx_i,\tp_i,\tr_i \ge \vzero$, $\x_i(v) > 0.9 \alpha \kappa$ for every $v \in S_i$, and $\vol(S_i) \le 1 / \lambda$.
	On $\mathcal{G}$, whenever the stopping rule is not met at iteration $i$, we have $\x_{i + 1} \ge \x_i$, and
	\begin{align}
		\begin{cases}
			\r_i(v) > (\lambda + 0.9 \kappa) \, \d(v), & \qquad \forall \, v \in T_i, \\
			\r_i(v) \le (\lambda + 1.1 \kappa) \, \d(v), & \qquad \forall \, v \notin S_{i+1}.
		\end{cases}
		\label{eqn:residue-bounds}
	\end{align}
\end{lemma}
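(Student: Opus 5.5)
We argue by induction over the iterations on the good event $\mathcal{G}$, carrying the invariant that $\x_i \ge \vzero$ and that $\x_i(v) > 0.9\alpha\kappa$ for every $v \in S_i$. The probability bound comes last: it is a union bound over the solver calls, once we know there are at most $1/\lambda$ of them.

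\textbf{Base case.} Take $S_0=\{s\}$. Here $\x_0(s)=\alpha(1-\lambda\d(s))/\d(s)$. Since $(\lambda+\eta)\d(s)<1$, this is at least $\alpha\eta$. Because $\eta \ge \sqrt{\kappa/(0.4\alpha)}$, we get $\alpha\eta \gg \alpha\kappa$; more simply, $\eta\ge\kappa$ holds because $\kappa\le 0.4\alpha\eta^2$ and $\eta<1$. So $\x_0(s)>0.9\alpha\kappa$.

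\textbf{Inductive step.} Assume $\x_i\ge\vzero$ and that the stopping rule fails at iteration $i$. By \cref{lem:active-set-solver-error}, applicable since $\x^{(S_i)}\ge\vzero$, inequality \eqref{eqn:approx-frontier-residue-error} gives $|\tr_i(v)-\r_i(v)|\le 0.1\kappa\,\d(v)$ on $\partial S_i$.
\begin{itemize}
\item Every $v\in T_i$ has $\tr_i(v)>(\lambda+\kappa)\d(v)$, hence $\r_i(v)>(\lambda+0.9\kappa)\d(v)$.
\item Every $v\in\partial S_i\setminus T_i$ has $\r_i(v)\le(\lambda+1.1\kappa)\d(v)$.
\item On $S_i$ we have $\r_i=\lambda\d$ exactly.
\item Off $S_i\cup\partial S_i$ we have $\r_i=0$ by \eqref{eqn:frontier-residue-formula}.
\end{itemize}
Together these give \eqref{eqn:residue-bounds}.

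Now apply \cref{lem:active-set-expansion} with $T=T_i$; its hypothesis holds because $\r_i(v)>\lambda\d(v)$ on $T_i$. It yields $\x_{i+1}\ge\x_i\ge\vzero$. For new nodes $v\in T_i$, \eqref{eqn:active-set-expansion} gives $\x_{i+1}(v)\ge\alpha\cdot 0.9\kappa$. Strictness needs a little care: the inequality for $\r_i$ is strict, so $\x_{i+1}(v)>0.9\alpha\kappa$. Old nodes keep their lower bound by monotonicity. This closes the induction.

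\textbf{Nonnegativity of the other vectors.} We have $\p_i=\D\x_i\ge\vzero$. The vector $\r_i$ equals $\lambda\d$ on $S_i$, is nonnegative on $\partial S_i$ by \eqref{eqn:frontier-residue-formula}, and is zero elsewhere. For the approximate vectors, \eqref{eqn:approx-coordinate-error} together with $\x_i>0.9\alpha\kappa$ on $S_i$ gives $\tx_i>0.8\alpha\kappa>0$. Then \eqref{eqn:frontier-residue-formula} gives $\tr_i\ge 0$ outside $S_i$, and \eqref{eqn:approx-active-residue-error} gives $\tr_i\ge0.9\lambda\d\ge0$ on $S_i$. Finally $\tp_i=\D\tx_i\ge\vzero$.

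\textbf{Volume bound.} Apply the mass identity \eqref{eqn:mass-identity} to $\x_i$ and drop the nonnegative term $\vone^\top\D\x_i$: this gives $\vone^\top\r_i\le1$. Since $\r_i\ge\vzero$ and $\r_i=\lambda\d$ on $S_i$, we get $\lambda\vol(S_i)\le1$.

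\textbf{Termination.} Each non-terminating iteration adds at least one node. The stopping rule fails only if some boundary $v$ has $\tr_i(v)>(\lambda+\eta)\d(v)$, and $\eta>\kappa$, so that $v$ lies in $T_i$. Each added node has degree at least $1$, so $|S_i|\le\vol(S_i)\le 1/\lambda$. Hence there are at most $1/\lambda$ iterations.

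\textbf{Output guarantee.} At termination, $\tr\ge\vzero$ everywhere by the nonnegativity above. On $S$, \eqref{eqn:approx-active-residue-error} gives $\tr\le1.1\lambda\d$. We need this to be at most $(\lambda+\eta)\d$, which is immediate if $0.1\lambda\le\eta$. Otherwise, use the tighter bound $0.1\min\{\lambda,\kappa\}\le0.1\kappa\le\eta$ from \eqref{eqn:approx-active-residue-error}. On $\partial S$, the stopping rule gives $\tr\le(\lambda+\eta)\d$. Everywhere else $\tr=0$.

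\textbf{Probability bound.} This is the main subtlety: the number of solver calls is bounded only on $\mathcal{G}$ itself. We handle it by noting that the $j$-th call fails with probability at most $\lambda\delta$, and that the trajectory stays within $1/\lambda$ calls as long as all previous calls succeeded. A union bound over the first $\lfloor1/\lambda\rfloor$ potential calls then gives failure probability at most $\delta$.
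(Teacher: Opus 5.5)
Your proposal is correct and follows essentially the same route as the paper: an induction on $\x_i \ge \vzero$ and $\x_i(v) > 0.9\alpha\kappa$ using the solver-error and expansion lemmas, the mass identity for the volume and iteration bounds, and a union bound over the position of the first failed solver call. (The case split in your output-guarantee step is unnecessary, since the solver-error bound directly gives $\tr \le (\lambda + 0.1\kappa)\d \le (\lambda+\eta)\d$ on the active set.)
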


We note that the bound $\r_i(v) \le (\lambda + 1.1 \kappa) \, \d(v)$ for $v \notin S_{i+1}$ stated above will be crucial later.
It guarantees that the residues of the nodes outside the next active set can only be slightly larger than the internal threshold $\lambda \, \d(v)$.

\begin{proof}[Proof of \cref{lem:trajectory}]
	Consider the nontrivial case that $(\lambda + \eta) \, \d(s) < 1$, and first condition on $\mathcal{G}$.
	We prove by induction that $\x_i,\r_i \ge \vzero$ and $\x_i(v) > 0.9 \alpha \kappa$ for every $v \in S_i$.
	Note that by $\r_i|_{S_i} = \lambda \, \d_{S_i}$ and \cref{eqn:frontier-residue-formula}, $\x_i \ge \vzero$ implies that $\r_i \ge \vzero$.
	For $i = 0$, we have $S_0 = \{s\}$ and by \cref{eqn:exact-solution},
	\begin{align*}
		\x_0(s) = \alpha \Bigl(\frac{1}{\d(s)} - \lambda\Bigr) > \alpha \eta \ge \alpha \kappa > 0.9 \alpha \kappa,
	\end{align*}
	so the base case holds.

	Suppose that these properties hold at iteration $i$ and the stopping rule is not met.
	Then \cref{lem:active-set-solver-error} applies, and thus for every $v \in T_i$, the definition of $T_i$ and \eqref{eqn:approx-frontier-residue-error} give $\r_i(v) > (\lambda + \kappa) \, \d(v) - 0.1\kappa \, \d(v) = (\lambda + 0.9 \kappa) \, \d(v)$.
	On the other hand, for $v \in \partial S_i \setminus T_i$, \eqref{eqn:approx-frontier-residue-error} gives $\r_i(v) \le (\lambda + 1.1\kappa) \, \d(v)$, while $\r_i(v) = 0$ for $v \notin S_i \cup \partial S_i$, so we have proved \eqref{eqn:residue-bounds}.
	Applying \cref{lem:active-set-expansion} with $S = S_i$ and $T = T_i$ now gives $\x_{i + 1} \ge \x_i \ge \vzero$.
	For $v \in S_i$, this along with the induction hypothesis gives $\x_{i + 1}(v) \ge \x_i(v) > 0.9 \alpha \kappa$, and for $v \in T_i$, by \eqref{eqn:active-set-expansion}, we have $\x_{i + 1}(v) \ge \frac{\alpha}{\d(v)} \bigl(\r_i(v) - \lambda \, \d(v)\bigr) > 0.9 \alpha \kappa$, as desired.
	This completes the induction.

	It follows that $\p_i = \D \x_i \ge \vzero$.
	The mass identity \cref{eqn:mass-identity} gives $1 \ge \sum_{v \in S_i}\r_i(v) = \lambda \vol(S_i)$, so $\vol(S_i) \le 1 / \lambda$.
	Since $S_0 = \{s\}$ and each expansion adds at least one node, we have $i + 1 \le |S_i| \le \vol(S_i) \le 1 / \lambda$ at every reached iteration $i$.
	Thus, on $\mathcal{G}$, the algorithm terminates after at most $1 / \lambda$ iterations.

	For the approximate vectors, combining the exact bounds above with \cref{lem:active-set-solver-error} gives $\tx_i(v) > 0$ and $\tr_i(v) \ge \bigl(\lambda - 0.1 \min\{\lambda,\kappa\}\bigr) \, \d(v) \ge 0.9 \lambda \, \d(v) > 0$ for every $v \in S_i$.
	Since $\tx_i$ is zero outside $S_i$ and $\tp_i = \D\tx_i$, we obtain $\tx_i,\tp_i \ge \vzero$; \cref{eqn:frontier-residue-formula} then gives $\tr_i(v) \ge 0$ outside $S_i$ as well.

	At termination, \eqref{eqn:approx-active-residue-error} gives $\tr(v) \le (\lambda + 0.1 \kappa) \, \d(v) \le (\lambda + \eta) \, \d(v)$ on the active set.
	The stopping rule gives the same upper bound on its external boundary, and the residue is zero elsewhere by \cref{eqn:frontier-residue-formula}.
	This proves the claimed output guarantee.

	Finally, to bound $\Pr[\mathcal{G}]$, note that the first failed SDD solver invocation must occur at some iteration $j \le 1 / \lambda$.
	For each such iteration $j$, the probability that the first failure occurs at iteration $j$ is at most $\lambda\delta$ by \cref{thm:SDD_solver} and the parameter setting.
	Summing over the possible values of $j$ gives $1 - \Pr[\mathcal{G}] \le \delta$.
\end{proof}

\subsection{Bounding the Number of Active-Set Expansions}

For the remainder of the analysis, assume that $(\lambda + \eta) \, \d(s) < 1$ and condition on the event $\mathcal{G}$.
Suppose that the algorithm performs $K$ active-set expansions before terminating.
Thus, the trajectory defined in the preceding subsection consists of the active sets $S_0 \subsetneq \cdots \subsetneq S_K$, the expansion sets $T_i = S_{i + 1} \setminus S_i$ for $0 \le i < K$, and the exact states $(\x_i,\p_i,\r_i)$ for $0 \le i \le K$.
For every $0 \le i < K$, define
\begin{align*}
	\Delta \x_i := \x_{i + 1} - \x_i, \qquad \Delta \p_i := \p_{i + 1} - \p_i, \qquad \Delta \r_i := \r_{i + 1} - \r_i.
\end{align*}

We will analyze the change of $\psi_{\lambda}(\x_i)$ throughout the iterations.
Recall from \cref{eqn:regularized-objective} that for $\x \ge \vzero$,
\begin{align}
	\psi_{\lambda}(\x) = \frac{1}{2} \, \x^{\top} \mL_{\alpha} \, \x - \alpha \, \x(s) + \alpha \lambda \sum_{v \in V} \d(v) \, \x(v). \label{eqn:objective-quadratic-extension}
\end{align}
Define the one-step and interval objective decreases by
\begin{alignat*}{2}
	\Delta \psi_i & := \psi_{\lambda}(\x_i) - \psi_{\lambda}(\x_{i + 1}), & \qquad & \forall \, 0 \le i < K, \\
	\Delta \psi(a,b) & := \psi_{\lambda}(\x_a) - \psi_{\lambda}(\x_b), & \qquad & \forall \, 0 \le a < b \le K.
\end{alignat*}

The following lemma establishes the one-step objective identity and its lower bound.

\begin{lemma} \label[lemma]{lem:one-step-objective-decrease}
	For every $0 \le i < K$, we have
	\begin{align}
		\Delta \psi_i = -\frac{1}{2} \alpha \sum_{v \in V} \Delta \x_i(v) \, \Delta \r_i(v) = \frac{1}{2} \, \Delta \x_i^{\top} \, \mL_{\alpha} \, \Delta \x_i \label{eqn:one-step-objective-identity}
	\end{align}
	and
	\begin{align}
		\Delta \psi_i \ge \frac{1}{2} \alpha^2 \sum_{v \in T_i} \frac{\Delta \r_i(v)^2}{\d(v)}. \label{eqn:delta-psi-lower-bound}
	\end{align}
\end{lemma}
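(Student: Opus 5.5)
Write $\x := \x_i$, $\x' := \x_{i+1}$, and $\Delta\x := \Delta\x_i$. We work on $\mathcal{G}$, so both states are nonnegative and $\psi_\lambda$ coincides with the quadratic form \eqref{eqn:objective-quadratic-extension}. Rewrite that form as
\[
\psi_\lambda(\x) = \tfrac12 \x^\top\mL_\alpha\x - \alpha \bigl(\e_s - \lambda\d\bigr)^\top \x .
\]
Expanding it around $\x'$ gives
\[
\psi_\lambda(\x) - \psi_\lambda(\x') = \tfrac12\Delta\x^\top\mL_\alpha\Delta\x - \Delta\x^\top\bigl(\mL_\alpha\x' - \alpha(\e_s-\lambda\d)\bigr).
\]
By \eqref{eqn:residue-invariant}, $\mL_\alpha\x' - \alpha\e_s = -\alpha\r_{i+1}$, so the linear term equals $\alpha\,\Delta\x^\top(\r_{i+1} - \lambda\d)$. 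Since $\supp(\Delta\x)\subseteq S_{i+1}$ and $\r_{i+1}|_{S_{i+1}} = \lambda\d_{S_{i+1}}$, this term vanishes. That proves $\Delta\psi_i = \frac12\Delta\x^\top\mL_\alpha\Delta\x$.

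For the first form of \eqref{eqn:one-step-objective-identity}, apply \eqref{eqn:residue-invariant} to both iterates to get $\mL_\alpha\Delta\x = -\alpha\Delta\r_i$. Hence
\[
\tfrac12\Delta\x^\top\mL_\alpha\Delta\x = -\tfrac12\alpha\sum_v\Delta\x(v)\Delta\r_i(v).
\]

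For \eqref{eqn:delta-psi-lower-bound}, split $\sum_v \Delta\x(v)\Delta\r_i(v)$ by location of $v$. Outside $S_{i+1}$ we have $\Delta\x(v)=0$. On $S_i$ both residues equal $\lambda\d(v)$, so $\Delta\r_i(v)=0$. Only $v\in T_i$ remain. There $\x_i(v)=0$, so $\Delta\x(v) = \x_{i+1}(v)$, and $\Delta\r_i(v) = \lambda\d(v) - \r_i(v)$. This gives
\[
\Delta\psi_i = \tfrac12\alpha\sum_{v\in T_i}\x_{i+1}(v)\bigl(\r_i(v)-\lambda\d(v)\bigr).
\]
By \eqref{eqn:residue-bounds}, $\r_i(v) > \lambda\d(v)$ on $T_i$, so \cref{lem:active-set-expansion} applies with $S=S_i$, $T=T_i$. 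Its bound \eqref{eqn:active-set-expansion} gives $\x_{i+1}(v) \ge \frac{\alpha}{\d(v)}\bigl(\r_i(v)-\lambda\d(v)\bigr)$, and both factors are positive. Therefore each summand is at least $\frac{\alpha}{\d(v)}\bigl(\r_i(v)-\lambda\d(v)\bigr)^2 = \alpha\,\Delta\r_i(v)^2/\d(v)$, which yields the claimed bound with constant $\frac12\alpha^2$.

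The only step needing care is the vanishing of the linear term. It relies on $\r_{i+1}$ equaling $\lambda\d$ exactly on all of $S_{i+1}$, which holds for the exact states $\x_{i+1}=\x^{(S_{i+1})}$ by \eqref{eqn:exact-solution}. It also relies on nonnegativity, so that the $\ell_1$ term is linear; \cref{lem:trajectory} supplies this.
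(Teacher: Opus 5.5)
Your proof is correct and is essentially the paper's argument. The paper expands $\psi_\lambda(\x_i)-\psi_\lambda(\x_{i+1})$ symmetrically, obtaining the coefficient $2\lambda\d-\r_i-\r_{i+1}$, and reaches the $\Delta\x_i^{\top}\Delta\r_i$ form first via the same three-case split, whereas you expand around $\x_{i+1}$ and use that its gradient vanishes on $S_{i+1}$ to reach the quadratic form first; the lower bound via \cref{lem:active-set-expansion} on $T_i$ is identical to the paper's.
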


\begin{proof}
	Fix an iteration $0 \le i < K$.
	By \cref{lem:trajectory}, $\x_i,\x_{i + 1} \ge \vzero$, so applying \cref{eqn:objective-quadratic-extension} yields
	\begin{align*}
		\Delta \psi_i & = \frac{1}{2} \bigl(\x_i^{\top} \, \mL_{\alpha} \, \x_i - \x_{i + 1}^{\top} \, \mL_{\alpha} \, \x_{i + 1}\bigr) + \alpha \, \Delta \x_i(s) - \alpha \lambda \sum_{v \in V} \d(v) \, \Delta \x_i(v) \\
		& = -\frac{1}{2} \, \Delta \x_i^{\top} \, \bigl(\mL_{\alpha}(\x_i + \x_{i + 1}) - 2 \alpha \e_s + 2 \alpha \lambda \d\bigr) \\
		& = -\frac{1}{2} \alpha \sum_{v \in V} \Delta \x_i(v) \bigl(2 \lambda \, \d(v) - \r_i(v) - \r_{i + 1}(v)\bigr),
	\end{align*}
	where the last step follows from $\mL_{\alpha}(\x_i + \x_{i + 1}) = \alpha (2 \e_s - \r_i - \r_{i + 1})$ by \cref{eqn:residue-invariant}.

	For $v \in S_i$, we have $\r_i(v) = \r_{i + 1}(v) = \lambda \, \d(v)$, so $2 \lambda \, \d(v) - \r_i(v) - \r_{i + 1}(v) = 0$.
	For $v \notin S_{i + 1}$, we have $\x_i(v) = \x_{i+1}(v) = \Delta \x_i(v) = 0$.
	For $v \in T_i$, we have $\r_{i+1}(v) = \lambda \, \d(v)$, so
	\begin{align*}
		2 \lambda \, \d(v) - \r_i(v) - \r_{i + 1}(v) = \lambda \, \d(v) - \r_i(v) = \Delta \r_i(v).
	\end{align*}
	Combining these three cases gives
	\begin{align*}
		\Delta \psi_i = -\frac{1}{2} \alpha \sum_{v \in T_i} \Delta \x_i(v) \, \Delta \r_i(v) = -\frac{1}{2} \alpha \sum_{v \in V} \Delta \x_i(v) \, \Delta \r_i(v),
	\end{align*}
	as desired.
	On the other hand, by \cref{eqn:residue-invariant} we have $\mL_{\alpha} \, \Delta \x_i = -\alpha \, \Delta \r_i$.
	Consequently, we further have
	\begin{align*}
		\Delta \psi_i = -\frac{1}{2} \alpha \, \Delta \x_i^{\top} \, \Delta \r_i = \frac{1}{2} \, \Delta \x_i^{\top} \, \mL_{\alpha} \, \Delta \x_i.
	\end{align*}
	This proves the first part of the lemma.

	To prove the lower bound on $\Delta \psi_i$, note that for every $v \in T_i$, \cref{lem:active-set-expansion} and $\x_i(v) = 0$ imply that
	\begin{align*}
		\Delta \p_i(v) = \d(v) \, \Delta \x_i(v) \ge \alpha\bigl(\r_i(v) - \lambda \, \d(v)\bigr) = -\alpha \, \Delta \r_i(v),
	\end{align*}
	where $\Delta \r_i(v) < 0$.
	Combining this with \eqref{eqn:one-step-objective-identity} yields
	\begin{align*}
		\Delta \psi_i = -\frac{1}{2} \alpha \sum_{v \in T_i} \frac{\Delta \p_i(v)}{\d(v)} \, \Delta \r_i(v) \ge \frac{1}{2} \alpha^2 \sum_{v \in T_i} \frac{\Delta \r_i(v)^2}{\d(v)},
	\end{align*}
	as desired.
\end{proof}

The next lemma shows that for $i \ne j$, the vectors $\Delta \x_i$ and $\Delta \r_j$ are orthogonal.

\begin{lemma} \label[lemma]{lem:orthogonal}
	For any $i \ne j$, $\sum_{v \in V} \Delta \x_i(v) \, \Delta \r_j(v) = 0$.
\end{lemma}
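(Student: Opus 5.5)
We plan to exploit the symmetry of $\mL_{\alpha}$ together with the support structure of $\Delta\x_i$ and $\Delta\r_j$. By symmetry of the claim in $i$ and $j$ (since $\Delta\x_i^{\top}\Delta\r_j = -\frac{1}{\alpha}\Delta\x_i^{\top}\mL_{\alpha}\Delta\x_j$, as we derive below), we may assume without loss of generality that $i < j$.

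The first step is to record the identity $\mL_{\alpha}\,\Delta\x_j = -\alpha\,\Delta\r_j$, which follows directly from \cref{eqn:residue-invariant} applied to $\x_j$ and $\x_{j+1}$, as already used in the proof of \cref{lem:one-step-objective-decrease}. This gives
\begin{align*}
	\sum_{v \in V} \Delta\x_i(v)\,\Delta\r_j(v) = -\frac{1}{\alpha}\,\Delta\x_i^{\top}\mL_{\alpha}\,\Delta\x_j .
\end{align*}
Since $\mL_{\alpha}$ is symmetric, the right-hand side is symmetric in $i$ and $j$, so the reduction to $i<j$ is justified.

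Now fix $i < j$. The support of $\Delta\x_i$ is contained in $S_{i+1}$, because both $\x_i$ and $\x_{i+1}$ vanish outside $S_{i+1}$ by \cref{eqn:exact-solution}. Since the active sets are nested and $i+1 \le j$, we have $S_{i+1} \subseteq S_j$. On $S_j$, the residues satisfy $\r_j(v) = \r_{j+1}(v) = \lambda\,\d(v)$: this holds for $\r_j$ by construction of $\x^{(S_j)}$, and for $\r_{j+1}$ because $S_j \subseteq S_{j+1}$. Hence $\Delta\r_j$ vanishes on $S_j$.

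It follows that $\Delta\x_i$ and $\Delta\r_j$ have disjoint supports, so the sum is zero termwise. There is no real obstacle here; the only point needing care is the reduction from $i > j$ to $i < j$ via the symmetry of $\mL_{\alpha}$, since for $i > j$ the supports need not be disjoint. Indeed, $\Delta\r_j$ is nonzero on $T_j \subseteq S_{i}$, where $\Delta\x_i$ may also be nonzero.
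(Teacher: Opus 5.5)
Your proof is correct and has the same structure as the paper's: for $i<j$ the supports are disjoint because $\supp(\Delta\x_i) \subseteq S_{i+1} \subseteq S_j$ and $\Delta\r_j$ vanishes on $S_j$, and for $i>j$ a symmetry argument swaps the roles of $i$ and $j$. The only difference is where the symmetry comes from. You use $\mL_{\alpha}\,\Delta\x_j = -\alpha\,\Delta\r_j$ together with the symmetry of $\mL_{\alpha}$. The paper instead writes $\Delta\x_i$ through the entrywise invariant and applies the PageRank symmetry identity \eqref{eqn:symmetry}. These are equivalent because $\D^{-1}\mat{\Pi}_{\alpha} = \alpha\,\mL_{\alpha}^{-1}$, and your route is slightly more direct.
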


\begin{proof}
	First consider the case of $i < j$.
	In this case, we have $\supp(\Delta \x_i) \subseteq S_{i + 1} \subseteq S_j$ and $\Delta \r_j|_{S_j} = \vzero$.
	Therefore, $\sum_{v \in V} \Delta \x_i(v) \, \Delta \r_j(v) = 0$, as desired.

	Next assume that $i > j$.
	By the invariant equation \eqref{eqn:invariant-entrywise}, it holds that, for every $v \in V$,
	\begin{align*}
		\Delta \p_i(v) = - \sum_{u \in V} \Delta \r_i(u) \, \pr_{\alpha}(u,v).
	\end{align*}
	Using this fact and the symmetry identity \eqref{eqn:symmetry}, we have
	\begin{align*}
		\sum_{v \in V} \Delta \x_i(v) \, \Delta \r_j(v) & = \sum_{v \in V} \frac{\Delta \p_i(v)}{\d(v)} \, \Delta \r_j(v) = -\sum_{u \in V} \sum_{v \in V} \frac{1}{\d(v)} \, \Delta \r_i(u) \, \pr_{\alpha}(u,v) \, \Delta \r_j(v) \\
		& = -\sum_{u \in V} \sum_{v \in V} \frac{1}{\d(u)} \, \Delta \r_i(u) \, \pr_{\alpha}(v,u) \, \Delta \r_j(v) = \sum_{u \in V} \Delta \x_j(u) \, \Delta \r_i(u),
	\end{align*}
	so this is also zero by the previous case.
\end{proof}

Combining the one-step identity with the orthogonality lemma gives the following identity for the objective decrease over any interval.

\begin{lemma} \label[lemma]{lem:interval-objective-decrease}
	For every $0 \le a < b \le K$,
	\begin{align}
		\Delta \psi(a,b) = -\frac{1}{2} \alpha \sum_{v \in V} \bigl(\x_b(v) - \x_a(v)\bigr) \bigl(\r_b(v) - \r_a(v)\bigr). \label{eqn:interval-objective-decrease}
	\end{align}
\end{lemma}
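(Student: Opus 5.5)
The plan is to telescope the one-step identity of \cref{lem:one-step-objective-decrease} over the interval and then remove the cross terms with \cref{lem:orthogonal}. By definition,
\begin{align*}
	\Delta \psi(a,b) = \sum_{i=a}^{b-1} \Delta \psi_i = -\frac{1}{2}\alpha \sum_{i=a}^{b-1} \sum_{v \in V} \Delta \x_i(v)\,\Delta \r_i(v),
\end{align*}
where the second equality is \eqref{eqn:one-step-objective-identity} applied to each $i$.

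On the right-hand side of \eqref{eqn:interval-objective-decrease}, I will write $\x_b - \x_a = \sum_{i=a}^{b-1} \Delta \x_i$ and $\r_b - \r_a = \sum_{j=a}^{b-1} \Delta \r_j$. Expanding the bilinear form then gives
\begin{align*}
	\sum_{v \in V} \bigl(\x_b(v) - \x_a(v)\bigr)\bigl(\r_b(v) - \r_a(v)\bigr) = \sum_{i=a}^{b-1}\sum_{j=a}^{b-1} \sum_{v\in V} \Delta \x_i(v)\,\Delta \r_j(v).
\end{align*}
By \cref{lem:orthogonal}, every term with $i \ne j$ vanishes, so only the diagonal terms $i=j$ survive. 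Multiplying by $-\tfrac12\alpha$ recovers exactly the telescoped sum above, which proves the claim.

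None of these steps is a real obstacle. The only things to check are that all indices $a \le i < b \le K$ lie in the range where the earlier lemmas apply (they do, since $0 \le a < b \le K$), and that the sums over $V$ are finite (they are, since every vector involved is supported on $S_K \cup \partial S_K$).
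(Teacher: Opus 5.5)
Your proposal is correct and follows the paper's proof: sum the one-step identity \eqref{eqn:one-step-objective-identity} over $i \in \{a,\ldots,b-1\}$, use \cref{lem:orthogonal} to add the vanishing cross terms $i \ne j$, and telescope the two inner sums to $\x_b - \x_a$ and $\r_b - \r_a$. The only difference is that you run the argument from the right-hand side of \eqref{eqn:interval-objective-decrease} back to $\Delta\psi(a,b)$, whereas the paper goes forward; this is immaterial.
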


\begin{proof}
	Summing \eqref{eqn:one-step-objective-identity} and using \cref{lem:orthogonal} to eliminate all cross terms gives
	\begin{align*}
		\Delta \psi(a,b) & = -\frac{1}{2} \alpha \sum_{i = a}^{b - 1} \sum_{v \in V} \Delta \x_i(v) \, \Delta \r_i(v) \\
		& = -\frac{1}{2} \alpha \sum_{v \in V} \biggl(\sum_{i = a}^{b - 1} \Delta \x_i(v)\biggr) \biggl(\sum_{j = a}^{b - 1} \Delta \r_j(v)\biggr).
	\end{align*}
	The two inner sums above telescope to $\x_b - \x_a$ and $\r_b - \r_a$, respectively, proving the lemma.
\end{proof}

The following key lemma compares the objective decreases over two adjacent blocks of $L$ expansions.

\begin{lemma} \label[lemma]{lem:batch-decrease}
	For any integers $a$ and $L$ such that $a \ge L \ge 1$ and $a + L \le K$,
	\begin{align*}
		\Delta \psi(a,a + L) \le \frac{1}{\alpha L^2} \, \Delta \psi(a - L,a) + 1.1 \alpha \kappa.
	\end{align*}
\end{lemma}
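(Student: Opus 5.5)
The plan is to express both block decreases as energies and to compare them through a variational (dual) characterization. By \cref{lem:one-step-objective-decrease,lem:interval-objective-decrease}, and since $\mL_{\alpha}(\x_{a+L}-\x_a) = -\alpha(\r_{a+L}-\r_a)$, we have $\Delta\psi(a,a+L) = \tfrac12\|\x_{a+L}-\x_a\|_{\mL_\alpha}^2$. The same holds for the first block.

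\textbf{Step 1: the second-block decrease as a dual norm.} Write $U := S_{a+L}\setminus S_a$. The difference $\x_{a+L}-\x_a$ is supported on $S_{a+L}$. On $S_{a+L}$ it solves $\mL_{\alpha,S_{a+L}}(\x_{a+L}-\x_a)_{S_{a+L}} = \vec g$, where
\[
\vec g := \alpha\bigl(\r_a-\lambda\d\bigr)\big|_{S_{a+L}} .
\]
This vector vanishes on $S_a$, because $\r_a = \lambda\d$ there. Hence
\[
\Delta\psi(a,a+L) = \tfrac12\,\vec g^{\top}\mL_{\alpha,S_{a+L}}^{-1}\vec g = \tfrac12 \sup_{\vec y}\frac{(\vec g^{\top}\vec y)^2}{\|\vec y\|^2_{\mL_{\alpha,S_{a+L}}}},
\]
with $\vec g$ supported on $U$. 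Equivalently, by \cref{lem:interval-objective-decrease} and the fact that $\r_{a+L} = \lambda\d$ on $U$,
\[
\Delta\psi(a,a+L) = \tfrac12\alpha\sum_{v\in U}\bigl(\x_{a+L}(v)-\x_a(v)\bigr)\bigl(\r_a(v)-\lambda\d(v)\bigr).
\]
Nodes outside $S_{a+L}$ contribute nothing, since $\x$ does not change there.

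\textbf{Step 2: peeling off the additive term.} Split $U = T_a \cup (U\setminus T_a)$. For $v\in U\setminus T_a$ we have $v\notin S_{a+1}$, so \eqref{eqn:residue-bounds} gives $\r_a(v)-\lambda\d(v)\le 1.1\kappa\,\d(v)$. Together with $\x_{a+L}\ge\x_a\ge\vzero$ and the mass bound $\sum_v\d(v)\x_{a+L}(v)\le 1$ from \eqref{eqn:mass-identity}, this part contributes at most $0.55\alpha\kappa$. This yields the additive $1.1\alpha\kappa$ with room to spare; the room absorbs cross terms if I work with the dual form instead.

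\textbf{Step 3: rewriting the main part via the first block.} The remaining task is to bound the $T_a$ part of $\vec g$ by the first-block energy. Nodes of $T_a$ lie outside $S_a\supseteq S_{a-L+1}$, so \eqref{eqn:residue-bounds} at iteration $a-L$ gives $\r_{a-L}(v)\le(\lambda+1.1\kappa)\d(v)$. Therefore, on $T_a$,
\[
0 < \r_a(v)-\lambda\d(v) \le \bigl(\r_a-\r_{a-L}\bigr)(v) + 1.1\kappa\,\d(v).
\]
Let $\vec\Delta := \x_a-\x_{a-L}$, which is supported on $S_a\setminus S_{a-L}$. Since $T_a\cap S_a=\varnothing$,
\[
\alpha\bigl(\r_a-\r_{a-L}\bigr)\big|_{T_a} = (1-\alpha)\,\A_{T_a,S_a}\,\vec\Delta .
\]
So $\vec g^{\top}\vec y$ is essentially $(1-\alpha)\,\vec\Delta^{\top}\A\,\vec y_{T_a}$, plus a $\kappa$-error handled as in Step 2. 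Meanwhile $\Delta\psi(a-L,a) = \tfrac12\|\vec\Delta\|_{\mL_\alpha}^2$.

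\textbf{Step 4 (main obstacle): gaining the factor $1/(\alpha L^2)$.} It remains to show
\[
\bigl(\vec\Delta^{\top}\A_{S_a,T_a}\vec y_{T_a}\bigr)^2 \lesssim \frac{1}{\alpha L^2}\,\|\vec\Delta\|^2_{\mL_\alpha}\,\|\vec y\|^2_{\mL_{\alpha,S_{a+L}}}
\]
for every $\vec y$ supported on $S_{a+L}$. I expect this to be the hardest step, and the one where the layered structure must be used.

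The source of the factor $L^2$ is as follows. The set $T_a$ is separated from $S_{a-L}$ by the $L$ layers $T_{a-L},\dots,T_{a-1}$. Symmetrically, $T_a$ sits at the start of the $L$ layers $T_a,\dots,T_{a+L-1}$ inside $S_{a+L}$. The plan is to test against $\vec y$ and against $\vec\Delta$ with linear ramp functions across these layers: weight $k/L$ on layer $k$. Summing by parts along the ramp converts the boundary flux $\vec\Delta^{\top}\A\vec y_{T_a}$ into energies. The loss is $1/L$ per side from the Dirichlet part of $\mL_\alpha$, and the $\alpha\,\x^{\top}\D\x$ part of $\mL_\alpha$ supplies the accompanying $1/\alpha$. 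This is the same mechanism as the $\sqrt{\kappa}$-type bound for Chebyshev acceleration, where $L\approx 1/\sqrt\alpha$ makes the ratio constant.

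If the direct ramp estimate does not close, the fallback uses two facts. First, the increments $\Delta\x_j\ge\vzero$ are mutually $\mL_\alpha$-orthogonal (\cref{lem:orthogonal}). Second, $\Delta\psi(a-L,a) = \sum_j\Delta\psi_j$. Applying Cauchy--Schwarz over the $L$ layers then gives one factor of $L$, and a second factor of $L$ comes from the analogous sum over the second block.
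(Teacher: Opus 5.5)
Your Steps 1--3 are sound as reductions, apart from one slip. The vector $\vec\Delta=\x_a-\x_{a-L}$ is in general nonzero on all of $S_a$, including $S_{a-L}$; it is the residue change $\r_a-\r_{a-L}$ that vanishes on $S_{a-L}$. The real problem is Step~4, which is the whole content of the lemma and is not established.

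Your ramp heuristic assumes that $T_a$ is separated from $S_{a-L}$ by the $L$ graph layers $T_{a-L},\dots,T_{a-1}$. That is false. A node of $\partial S_{a-L}$ whose residue stays below $(\lambda+\kappa)\d(v)$ may first cross the activation threshold at iteration $a$, so $T_a$ can be adjacent to $S_{a-L}$, even to $s$. The only separation the trajectory provides is in residue level, namely $\r_i\le(\lambda+1.1\kappa)\d$ outside $S_{i+1}$; and $L$ counts iterations, not graph distances.

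Moreover, by compressing the first block into the single vector $\vec\Delta$ and keeping only $\|\vec\Delta\|_{\mL_\alpha}^2=2\Delta\psi(a-L,a)$, you discard the per-step information from which the factor $L^2$ must come. Your fallback does not recover it: Cauchy--Schwarz over $L$ terms, $(\sum_j c_j)^2\le L\sum_j c_j^2$, costs a factor $L$ rather than gaining one.

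Two smaller issues. The lemma needs the exact constant $1/(\alpha L^2)$, not ``$\lesssim$''. And the mass bound you use to peel off the $\kappa$ term holds only for the nonnegative vector $\x_{a+L}-\x_a$, not for an arbitrary dual test vector $\vec y$. The latter is repairable by applying Cauchy--Schwarz to $\x_{a+L}-\x_a$ itself and using $X\le b\sqrt{X}+c\Rightarrow X\le b^2+2c$.

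The missing idea is that \emph{each single step} of the first block already controls the entire second-block decrease, so that decrease is at most an average of $L^2$ pairwise bounds. By \cref{lem:orthogonal}, for every $i\in\{a-L,\dots,a-1\}$ you may replace $\r_a$ by $\r_i$:
\[
\tfrac{2}{\alpha}\,\Delta\psi(a,a+L)=\sum_{v\in V}\bigl(\x_{a+L}(v)-\x_a(v)\bigr)\bigl(\r_i(v)-\r_{a+L}(v)\bigr).
\]
Here $\r_i-\r_{a+L}$ vanishes on $S_i$, equals $-\Delta\r_i$ on $T_i$, and is at most $1.1\kappa\,\d$ outside $S_{i+1}$; this is where the two-threshold bound enters. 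Applied to your Step~3 quantity, the same orthogonality yields essentially this bound, but only for the single index $i=a-L$.

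Next, for every $j$ in the first block, $\x_{a+L}-\x_a\le\x_{a+L}-\x_j$. The PageRank invariant bounds the latter at $v\in T_i$ by $\sum_{u\in T_j}(-\Delta\r_j(u))\,\pr_\alpha(u,v)/\d(v)+1.1\kappa$. Summing over all $L^2$ pairs $(i,j)$ and applying AM--GM, together with $\sum_v\pr_\alpha(u,v)=1$, the symmetry identity, and the disjointness of the $T_i$, collapses the double sum to $\sum_i\sum_{v\in T_i}\Delta\r_i(v)^2/\d(v)$. By \eqref{eqn:delta-psi-lower-bound} this is at most $2\Delta\psi(a-L,a)/\alpha^2$.

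So the $L^2$ comes from this averaging, with both factors of $L$ taken from the first block. The $1/\alpha$ comes from the activation bound $\Delta\p_i(v)\ge-\alpha\,\Delta\r_i(v)$ on $T_i$, not from graph geometry or from the $\alpha\D$ part of $\mL_\alpha$.
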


\begin{proof}
	We will bound $\Delta \psi(a,a + L)$ in terms of $\Delta \r_i$ and $\Delta \r_j$ for every pair of $i,j \in \{a - L,\ldots,a - 1\}$, and then sum over all $L^2$ pairs.
	To this end, we will use the following residue properties.

	\paragraph{Residue differences.}
	For any $0 \le i < a+L$, we have
	\begin{align}
		\left\{
		\begin{alignedat}{2}
			\r_i(v) - \r_{a+L}(v) & = 0, & \qquad & \forall \, v \in S_i, \\
			\r_i(v) - \r_{a+L}(v) & = -\Delta \r_i(v), & \qquad & \forall \, v \in T_i, \\
			\r_i(v) - \r_{a+L}(v) & \le 1.1 \kappa \, \d(v), & \qquad & \forall \, v \notin S_{i + 1}.
		\end{alignedat}
		\right. \label{eqn:future-residue-bounds}
	\end{align}
	Here, the first two cases follow because the residue at every active node remains equal to $\lambda \, \d(v)$.
	For the last case, if $v \in S_{a+L} \setminus S_{i + 1}$, then $\r_{a+L}(v) = \lambda \, \d(v)$, whereas \eqref{eqn:residue-bounds} guarantees that $\r_i(v) \le (\lambda + 1.1 \kappa) \, \d(v)$, giving the desired bound.
	If $v \notin S_{a+L}$, then \cref{eqn:frontier-residue-formula} and $\x_i \le \x_{a+L}$ give $\r_i(v) \le \r_{a+L}(v)$.
	This proves \eqref{eqn:future-residue-bounds} in all cases.

	\paragraph{Relating $\Delta \psi(a,a + L)$ to $\Delta \r_i$.}
	Fix $i \in \{a - L,\ldots,a - 1\}$.
	Since $[i,a)$ and $[a,a+L)$ are disjoint, applying \cref{lem:orthogonal} gives
	\begin{align*}
		\sum_{v \in V} \bigl(\x_{a + L}(v) - \x_a(v)\bigr) \bigl(\r_a(v) - \r_i(v)\bigr) = 0.
	\end{align*}
	Combining this with \cref{eqn:interval-objective-decrease} leads to
	\begin{align*}
		\frac{2}{\alpha} \Delta \psi(a,a + L) & = \sum_{v \in V} \bigl(\x_{a+L}(v) - \x_a(v)\bigr) \bigl(\r_a(v) - \r_{a+L}(v)\bigr) \\
		& = \sum_{v \in V} \bigl(\x_{a + L}(v) - \x_a(v)\bigr) \bigl(\r_i(v) - \r_{a + L}(v)\bigr).
	\end{align*}
	Using \eqref{eqn:future-residue-bounds} and the nonnegativity of $\x_{a + L} - \x_a$, we further obtain
	\begin{align}
		\frac{2}{\alpha} \Delta \psi(a,a + L) & \le \sum_{v \in T_i} \bigl(-\Delta \r_i(v)\bigr) \bigl(\x_{a + L}(v) - \x_a(v)\bigr) + \sum_{v \notin S_{i + 1}} 1.1 \kappa \, \d(v) \bigl(\x_{a + L}(v) - \x_a(v)\bigr) \notag \\
		& \le \sum_{v \in T_i} \bigl(-\Delta \r_i(v)\bigr) \bigl(\x_{a + L}(v) - \x_a(v)\bigr) + 1.1 \kappa, \label{eqn:future-decrease-upper-one}
	\end{align}
	where in the last step we used
	\begin{align*}
		\sum_{v \in V} \d(v) \, \bigl(\x_{a + L}(v) - \x_a(v)\bigr) \le \sum_{v \in V} \d(v) \, \x_{a + L}(v) \le 1,
	\end{align*}
	which follows from \cref{eqn:mass-identity,lem:trajectory}.

	\paragraph{Bounding the reserve increase using $\Delta \r_j$.}
	Now additionally fix $j \in \{a - L,\ldots,a - 1\}$.
	By subtracting the PageRank invariant \eqref{eqn:invariant-entrywise} for states $j$ and $a + L$, we have that, for every $v \in V$,
	\begin{align*}
		\x_{a + L}(v) - \x_j(v) = \sum_{u \in V} \bigl(\r_j(u) - \r_{a + L}(u)\bigr) \, \frac{\pr_{\alpha}(u,v)}{\d(v)}.
	\end{align*}
	Now applying \eqref{eqn:future-residue-bounds} gives
	\begin{align*}
		\x_{a + L}(v) - \x_j(v) & \le \sum_{u \in T_j} \bigl(-\Delta \r_j(u)\bigr) \, \frac{\pr_{\alpha}(u,v)}{\d(v)} + 1.1 \kappa \sum_{u \notin S_{j + 1}} \frac{\d(u) \, \pr_{\alpha}(u,v)}{\d(v)} \\
		& \le \sum_{u \in T_j} \bigl(-\Delta \r_j(u)\bigr) \, \frac{\pr_{\alpha}(u,v)}{\d(v)} + 1.1 \kappa,
	\end{align*}
	where the last step used the symmetry identity \eqref{eqn:symmetry} to derive
	\begin{align*}
		\sum_{u \notin S_{j + 1}} \frac{\d(u) \, \pr_{\alpha}(u,v)}{\d(v)} \le \sum_{u \in V} \frac{\d(u) \, \pr_{\alpha}(u,v)}{\d(v)} = \sum_{u \in V} \pr_{\alpha}(v,u) = 1.
	\end{align*}
	Using $\x_{a + L}(v) - \x_a(v) \le \x_{a + L}(v) - \x_j(v)$ and substituting into \eqref{eqn:future-decrease-upper-one}, we obtain
	\begin{align}
		\frac{2}{\alpha} \Delta \psi(a,a + L) & \le \sum_{v \in T_i} \bigl(-\Delta \r_i(v)\bigr) \bigl(\x_{a + L}(v) - \x_j(v)\bigr) + 1.1 \kappa \notag \\
		& \le \sum_{u \in T_j} \sum_{v \in T_i} \frac{1}{\d(v)} \, \bigl(-\Delta \r_j(u)\bigr) \, \pr_{\alpha}(u,v) \, \bigl(-\Delta \r_i(v)\bigr) + 1.1 \kappa \sum_{v \in T_i} \bigl(-\Delta \r_i(v)\bigr) + 1.1 \kappa \notag \\
		& \le \sum_{u \in T_j} \sum_{v \in T_i} \frac{1}{\d(v)} \, \Delta \r_j(u) \, \pr_{\alpha}(u,v) \, \Delta \r_i(v) + 2.2 \kappa, \label{eqn:future-decrease-upper-pair}
	\end{align}
	where in the last step we used $-\Delta \r_i(v) \le \r_i(v)$ on $T_i$ and $\sum_{v \in V} \r_i(v) \le 1$, which follows from \cref{eqn:mass-identity,lem:trajectory}.

	\paragraph{Summing over all pairs.}
	Summing \eqref{eqn:future-decrease-upper-pair} for all $i,j \in \{a - L,\ldots,a - 1\}$ leads to
	\begin{align*}
		\frac{2 L^2}{\alpha} \Delta \psi(a,a + L) \le \sum_{i = a - L}^{a - 1} \sum_{j = a - L}^{a - 1} \sum_{u \in T_j} \sum_{v \in T_i} \frac{1}{\d(v)} \, \Delta \r_j(u) \, \pr_{\alpha}(u,v) \, \Delta \r_i(v) + 2.2 \kappa L^2.
	\end{align*}
	For any $u \in T_j$ and $v \in T_i$, we have
	\begin{align*}
		\frac{1}{\d(v)} \, \Delta \r_j(u) \, \pr_{\alpha}(u,v) \, \Delta \r_i(v) & = \d(u) \, \pr_{\alpha}(u,v) \, \frac{\Delta \r_j(u)}{\d(u)} \, \frac{\Delta \r_i(v)}{\d(v)} \\
		& \le \frac{1}{2} \d(u) \, \pr_{\alpha}(u,v) \, \Bigl(\frac{\Delta \r_j(u)^2}{\d(u)^2} + \frac{\Delta \r_i(v)^2}{\d(v)^2}\Bigr),
	\end{align*}
	where the last step used the AM--GM inequality.
	Substituting this into the preceding inequality leads to
	\begin{align*}
		\frac{2 L^2}{\alpha} \Delta \psi(a,a + L) \le \frac{1}{2} \sum_{i = a - L}^{a - 1} \sum_{j = a - L}^{a - 1} \sum_{u \in T_j} \sum_{v \in T_i} \d(u) \, \pr_{\alpha}(u,v) \, \Bigl(\frac{\Delta \r_j(u)^2}{\d(u)^2} + \frac{\Delta \r_i(v)^2}{\d(v)^2}\Bigr) + 2.2 \kappa L^2.
	\end{align*}

	For the sum involving the first square term above, we have
	\begin{align*}
		& \phantom{{}={}} \sum_{i = a - L}^{a - 1} \sum_{j = a - L}^{a - 1} \sum_{u \in T_j} \sum_{v \in T_i} \d(u) \, \pr_{\alpha}(u,v) \, \frac{\Delta \r_j(u)^2}{\d(u)^2} \\
		& = \sum_{j = a - L}^{a - 1} \sum_{u \in T_j} \frac{\Delta \r_j(u)^2}{\d(u)} \biggl(\sum_{i = a - L}^{a - 1} \sum_{v \in T_i} \pr_{\alpha}(u,v)\biggr) \le \sum_{j = a - L}^{a - 1} \sum_{u \in T_j} \frac{\Delta \r_j(u)^2}{\d(u)},
	\end{align*}
	where the last step used the fact that the expansion sets $T_i$'s are disjoint and $\sum_{v \in V} \pr_{\alpha}(u,v) = 1$.
	On the other hand, for the second square term, we have
	\begin{align*}
		& \phantom{{}={}} \sum_{i = a - L}^{a - 1} \sum_{j = a - L}^{a - 1} \sum_{u \in T_j} \sum_{v \in T_i} \d(u) \, \pr_{\alpha}(u,v) \, \frac{\Delta \r_i(v)^2}{\d(v)^2} \\
		& = \sum_{i = a - L}^{a - 1} \sum_{v \in T_i} \frac{\Delta \r_i(v)^2}{\d(v)^2} \biggl(\sum_{j = a - L}^{a - 1} \sum_{u \in T_j} \d(u) \, \pr_{\alpha}(u,v)\biggr) \\
		& \le \sum_{i = a - L}^{a - 1} \sum_{v \in T_i} \frac{\Delta \r_i(v)^2}{\d(v)^2} \biggl(\sum_{u \in V} \d(u) \, \pr_{\alpha}(u,v)\biggr) = \sum_{i = a - L}^{a - 1} \sum_{v \in T_i} \frac{\Delta \r_i(v)^2}{\d(v)},
	\end{align*}
	where the last step used the symmetry identity \eqref{eqn:symmetry}.
	Note that the two upper bounds above are identical after renaming the indices, so by substituting into the preceding inequality, we obtain
	\begin{align*}
		\frac{2 L^2}{\alpha} \Delta \psi(a,a + L) \le \sum_{i = a - L}^{a - 1} \sum_{v \in T_i} \frac{\Delta \r_i(v)^2}{\d(v)} + 2.2 \kappa L^2.
	\end{align*}
	On the other hand, summing the one-step lower bound \eqref{eqn:delta-psi-lower-bound} on $\Delta \psi_i$ over $i \in \{a-L,\dots,a-1\}$ gives
	\begin{align*}
		\sum_{i = a - L}^{a - 1} \sum_{v \in T_i} \frac{\Delta \r_i(v)^2}{\d(v)} \le \frac{2}{\alpha^2} \, \Delta \psi(a - L,a).
	\end{align*}
	Combining these inequalities and multiplying by $\alpha / \bigl(2 L^2\bigr)$ yield $\Delta \psi(a,a + L) \le \frac{1}{\alpha L^2} \, \Delta \psi(a - L,a) + 1.1 \alpha \kappa$, which finishes the proof.
\end{proof}

Next, the following lemma provides bounds on the objective function that will be used in the iteration-counting argument.

\begin{lemma} \label[lemma]{lem:objective-scale-bounds}
	For every $1 \le b \le K$,
	\begin{align}
		\Delta \psi(0,b) \le \frac{1}{2} \alpha. \label{eqn:total-objective-decrease}
	\end{align}
	Moreover, for every $0 \le i < K$, it holds that
	\begin{align}
		\Delta \psi_i > 0.4 \alpha^2 \eta^2. \label{eqn:objective-decrease-per-iteration}
	\end{align}
\end{lemma}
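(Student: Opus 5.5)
Both bounds follow from facts already established: the first from a closed form of $\psi_{\lambda}$ at the exact states $\x_i$, the second from the one-step lower bound \eqref{eqn:delta-psi-lower-bound} in \cref{lem:one-step-objective-decrease}.

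\textbf{Bound \eqref{eqn:total-objective-decrease}.} I will first compute $\psi_{\lambda}(\x_i)$ in closed form. Since $\x_i$ is supported on $S_i$ and $(\mL_{\alpha}\x_i)_{S_i} = \alpha(\e_s|_{S_i} - \lambda\,\d_{S_i})$ by \eqref{eqn:exact-solution}, we get
\begin{align*}
	\x_i^{\top}\mL_{\alpha}\x_i = \alpha\Bigl(\x_i(s) - \lambda\sum_{v}\d(v)\,\x_i(v)\Bigr).
\end{align*}
Substituting into \eqref{eqn:objective-quadratic-extension}, which applies because $\x_i \ge \vzero$ by \cref{lem:trajectory}, gives
\begin{align*}
	\psi_{\lambda}(\x_i) = -\frac{1}{2}\alpha\Bigl(\x_i(s) - \lambda\sum_{v}\d(v)\,\x_i(v)\Bigr).
\end{align*}

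For $i=0$ this equals $-\frac12\alpha\,\x_0(s)\bigl(1-\lambda\d(s)\bigr)$. This is $\le 0$ because $\x_0(s) \ge 0$ and $\lambda\d(s) < 1$.

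For $i=b$, drop the nonnegative $\lambda$-term to get $\psi_{\lambda}(\x_b) \ge -\frac12\alpha\,\x_b(s)$. By the mass identity \eqref{eqn:mass-identity} and $\r_b \ge \vzero$, we have $\d(s)\,\x_b(s) \le \sum_v \d(v)\,\x_b(v) \le 1$, so $\x_b(s) \le 1/\d(s) \le 1$.

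Hence $\Delta\psi(0,b) = \psi_{\lambda}(\x_0) - \psi_{\lambda}(\x_b) \le \frac12\alpha$. No use of the interval identity is needed here.

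\textbf{Bound \eqref{eqn:objective-decrease-per-iteration}.} Fix a non-terminating iteration $i<K$. Since the stopping rule fails, some $v \in \partial S_i$ has $\tr_i(v) > (\lambda+\eta)\,\d(v)$.

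I first check that $\kappa$ is small relative to $\eta$. From $(\lambda+\eta)\,\d(s) < 1$ we get $\eta < 1$, so $\kappa \le 0.4\alpha\eta^2 < 0.4\eta$. Two consequences follow:
\begin{itemize}
	\item $\tr_i(v) > (\lambda+\kappa)\,\d(v)$, so $v \in T_i$;
	\item by \eqref{eqn:approx-frontier-residue-error}, $\r_i(v) > (\lambda+\eta-0.1\kappa)\,\d(v) \ge (\lambda + 0.96\eta)\,\d(v)$.
\end{itemize}
Since $v \in T_i$ enters $S_{i+1}$, we have $\r_{i+1}(v) = \lambda\,\d(v)$. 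Therefore $-\Delta\r_i(v) > 0.96\,\eta\,\d(v)$.

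Keeping only this single term in \eqref{eqn:delta-psi-lower-bound} gives
\begin{align*}
	\Delta\psi_i \ge \frac12\alpha^2\,\frac{(0.96\,\eta\,\d(v))^2}{\d(v)} \ge 0.46\,\alpha^2\eta^2\,\d(v) > 0.4\,\alpha^2\eta^2.
\end{align*}

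The only delicate point is confirming that the violating boundary node belongs to $T_i$ and keeps a residue gap of order $\eta$ after the solver error. Both follow from $\kappa \le 0.4\alpha\eta^2$ together with $\eta < 1$.
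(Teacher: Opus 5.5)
Your proof is correct. For the per-iteration bound \eqref{eqn:objective-decrease-per-iteration} you follow the paper's argument: the violating boundary node lies in $T_i$, its residue gap survives the solver error \eqref{eqn:approx-frontier-residue-error}, and a single term of \eqref{eqn:delta-psi-lower-bound} suffices. You also usefully make explicit that $\kappa \le 0.4\alpha\eta^2 < 0.4\eta$ because $\eta < 1$. The paper uses this tacitly, both for membership in $T_i$ and for $\eta - 0.1\kappa \ge 0.9\eta$.

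For \eqref{eqn:total-objective-decrease} your route is different. The paper evaluates $\psi_{\lambda}(\x_0) = -\frac{1}{2}\d(s)\,\x_0(s)^2$ and then proves a global lower bound $\psi_{\lambda}(\x) \ge -\alpha/(2\d(s))$ for every $\x \ge \vzero$. It writes $\mL_{\alpha} = \alpha\D + (1-\alpha)\mL$, discards the positive semidefinite Laplacian part, the nonnegative $\lambda$-term, and all coordinates except $s$, and then minimizes the scalar quadratic $\frac{1}{2}\alpha\,\d(s)\,\x(s)^2 - \alpha\,\x(s)$. You instead use the identity $\psi_{\lambda}(\x_i) = -\frac{1}{2}\alpha\bigl(\x_i(s) - \lambda\,\d^{\top}\x_i\bigr)$, equivalently $-\frac{1}{2}\x_i^{\top}\mL_{\alpha}\x_i$, which holds at every exact restricted solution. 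You then control $\x_b(s) \le 1/\d(s)$ via the mass identity and $\r_b \ge \vzero$.

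The paper's bound is more self-contained: it applies to arbitrary nonnegative vectors and needs only that $\mL$ is positive semidefinite. Yours relies on the trajectory invariants of \cref{lem:trajectory}, but in exchange gives an exact expression for the potential at every state. This is the same identity the paper later quotes for the initial objective in \cref{sec:source-distribution}.
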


\begin{proof}
	Since $S_0 = \{s\}$, \eqref{eqn:exact-solution} gives $\d(s) \, \x_0(s) = \alpha \bigl(1 - \lambda \, \d(s)\bigr)$.
	Substituting into \eqref{eqn:objective-quadratic-extension}, we obtain
	\begin{align*}
		\psi_{\lambda}(\x_0) = \frac{1}{2} \d(s) \, \x_0(s)^2 - \alpha \bigl(1 - \lambda \, \d(s)\bigr) \, \x_0(s) = -\frac{1}{2} \d(s) \, \x_0(s)^2 \le 0.
	\end{align*}
	On the other hand, using $\mL_{\alpha} = \alpha \D + (1 - \alpha)(\D - \A)$, for every $\x \ge \vzero$ we have
	\begin{align*}
		\psi_{\lambda}(\x) & \ge \frac{1}{2} \alpha \sum_{v \in V} \d(v) \, \x(v)^2 - \alpha \, \x(s) \\
		& \ge \frac{1}{2} \alpha \, \d(s) \, \x(s)^2 - \alpha \, \x(s) \ge -\frac{\alpha}{2 \d(s)} \ge -\frac{\alpha}{2},
	\end{align*}
	where we used the minimum of a quadratic function in the penultimate step.
	Thus, we have $\psi_{\lambda}(\x_0) - \psi_{\lambda}(\x_b) \le \alpha / 2$ for every $1 \le b \le K$ since $\x_b \ge \vzero$ by \cref{lem:trajectory}.

	For each $0 \le i < K$, the stopping rule is not met at some node $v' \in \partial S_i$, so $\tr_i(v') > (\lambda + \eta) \, \d(v')$ and thus $v' \in T_i$.
	Moreover, \eqref{eqn:approx-frontier-residue-error} and $\r_{i + 1}(v') = \lambda \, \d(v')$ give
	\begin{align*}
		-\Delta \r_i(v') = \r_i(v') - \r_{i + 1}(v') > (\lambda + \eta - 0.1 \kappa) \, \d(v') - \lambda \, \d(v') = (\eta - 0.1 \kappa) \, \d(v').
	\end{align*}
	Using \eqref{eqn:delta-psi-lower-bound}, $\d(v') \ge 1$, and $\frac{1}{2}(\eta - 0.1 \kappa)^2 \ge \frac{1}{2}(0.9\eta)^2 > 0.4 \eta^2$, we have
	\begin{align*}
		\Delta \psi_i \ge \frac{1}{2} \alpha^2 \sum_{v \in T_i} \frac{\Delta \r_i(v)^2}{\d(v)} \ge \frac{1}{2} \alpha^2 \, \frac{\Delta \r_i(v')^2}{\d(v')} > \frac{1}{2} \alpha^2 (\eta - 0.1 \kappa)^2 > 0.4 \alpha^2 \eta^2,
	\end{align*}
	which finishes the proof.
\end{proof}

Using \cref{lem:batch-decrease,lem:objective-scale-bounds}, we are now ready to upper bound the number of iterations in the expansion process.

\begin{lemma} \label[lemma]{lem:number-of-iterations}
	On the event $\mathcal{G}$, the number $K$ of active-set expansions performed by \cref{alg:general-active-set} satisfies
	\begin{align*}
		K < \Bigl\lceil \frac{2}{\sqrt{\alpha}} \Bigr\rceil \Bigl(2 + \log_4\frac{10}{\alpha \eta^2}\Bigr) = O\Bigl(\frac{1}{\sqrt{\alpha}} \log\frac{1}{\alpha \eta}\Bigr).
	\end{align*}
\end{lemma}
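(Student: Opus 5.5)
Set $L := \lceil 2/\sqrt{\alpha}\rceil$, so that $\alpha L^2 \ge 4$ and \cref{lem:batch-decrease} gives, for every block start $a$ with $a \ge L$ and $a + L \le K$,
\begin{align*}
	\Delta \psi(a,a+L) \le \frac{1}{4}\,\Delta \psi(a-L,a) + 1.1\alpha\kappa.
\end{align*}
Suppose for contradiction that $K \ge L\,(2 + \log_4\frac{10}{\alpha\eta^2})$, and let $m := \lfloor K/L\rfloor$ be the number of complete blocks $B_k := [kL,(k+1)L)$ for $0 \le k < m$. Write $D_k := \Delta\psi(kL,(k+1)L)$. The recursion above becomes $D_{k} \le D_{k-1}/4 + 1.1\alpha\kappa$ for $1 \le k < m$. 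By \eqref{eqn:total-objective-decrease}, $D_0 \le \Delta\psi(0,L) \le \alpha/2$. Unrolling gives
\begin{align*}
	D_k \le 4^{-k}\,\frac{\alpha}{2} + \frac{4}{3}\cdot 1.1\,\alpha\kappa.
\end{align*}

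Next, \eqref{eqn:objective-decrease-per-iteration} gives a lower bound on each complete block of non-terminating iterations:
\begin{align*}
	D_k > 0.4\,L\,\alpha^2\eta^2 \ge 0.4\cdot\frac{2}{\sqrt\alpha}\,\alpha^2\eta^2 = 0.8\,\alpha^{3/2}\eta^2.
\end{align*}
The additive error term is small: since $\kappa \le 0.4\alpha\eta^2$, we have $\frac{4.4}{3}\alpha\kappa \le 0.59\,\alpha^2\eta^2 \le 0.59\,\alpha^{3/2}\eta^2$. It is therefore absorbed by the lower bound, leaving $4^{-k}\alpha/2 > 0.21\,\alpha^{3/2}\eta^2$, roughly $4^{-k} > 0.4\sqrt\alpha\,\eta^2$.

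To reach a contradiction, take the last block index $k = m-1$. The hypothesis on $K$ gives $m - 1 \ge 1 + \log_4\frac{10}{\alpha\eta^2}$, so $4^{-(m-1)} \le \frac{\alpha\eta^2}{40}$. This is at most $0.4\sqrt\alpha\,\eta^2$ because $\alpha \le 1$, which is the desired contradiction. Hence $K < L\,(2 + \log_4\frac{10}{\alpha\eta^2})$. The $O(\cdot)$ form follows immediately.

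The recursion applies to block $k$ only when $k \ge 1$, i.e.\ $a = kL \ge L$, and when $(k+1)L \le K$, both of which hold for $1 \le k \le m-1$. The only delicate step is the constant bookkeeping: the additive $1.1\alpha\kappa$ accumulated geometrically must stay below the per-block lower bound. This is what the choice $\kappa \le 0.4\alpha\eta^2$ guarantees, and the slack in the factors $2$ and $10$ inside the logarithm should absorb the floor and ceiling losses.
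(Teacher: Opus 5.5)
Your proof is correct and follows essentially the same route as the paper. Both use the same block length $L=\lceil 2/\sqrt{\alpha}\rceil$, unroll \cref{lem:batch-decrease} geometrically from $\Delta\psi(0,L)\le\alpha/2$, and reach a contradiction with the per-block lower bound from \eqref{eqn:objective-decrease-per-iteration} on the last complete block. You lower-bound that block via $L\ge 2/\sqrt{\alpha}$ rather than the paper's $L\ge 3$, which works equally well.

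One small slip: when $\log_4\frac{10}{\alpha\eta^2}$ is not an integer, $K\ge L(2+\log_4\frac{10}{\alpha\eta^2})$ gives only $m-1>\log_4\frac{10}{\alpha\eta^2}$, not $m-1\ge 1+\log_4\frac{10}{\alpha\eta^2}$. The contradiction still goes through, because then $4^{-(m-1)}<\alpha\eta^2/10\le 0.1\sqrt{\alpha}\,\eta^2$, which is below your lower bound of roughly $0.4\sqrt{\alpha}\,\eta^2$.
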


\begin{proof}
	Set $L := \bigl\lceil 2 / \sqrt{\alpha} \bigr\rceil$, so $\alpha L^2 \ge 4$ and $L \ge 3$ since $\alpha < 1$.
	If $K < L$, then the claimed bound follows immediately.
	Otherwise, $\lfloor K / L \rfloor \ge 1$.
	By \eqref{eqn:total-objective-decrease}, $\Delta \psi(0,L) \le \alpha / 2$.
	For every $2 \le j \le \lfloor K / L \rfloor$, applying \cref{lem:batch-decrease} with $a = (j - 1)L$ gives
	\begin{align*}
		\Delta \psi\bigl((j - 1)L,jL\bigr) \le \frac{1}{4} \Delta \psi\bigl((j - 2)L,(j - 1)L\bigr) + 1.1 \alpha \kappa.
	\end{align*}
	Iterating this inequality and using $\kappa \le 0.4 \alpha \eta^2$, for every $1 \le j \le \lfloor K / L \rfloor$ we obtain
	\begin{align}
		\Delta \psi\bigl((j - 1)L,jL\bigr) \le \frac{1}{2} \alpha \cdot \frac{1}{4^{j - 1}} + 1.1 \alpha \kappa \cdot \frac{4}{3} < \frac{\alpha}{2 \cdot 4^{j - 1}} + 0.8 \alpha^2 \eta^2. \label{eqn:block-objective-decrease}
	\end{align}

	On the other hand, summing \eqref{eqn:objective-decrease-per-iteration} over iterations $(\lfloor K / L \rfloor - 1)L,\ldots,\lfloor K / L \rfloor L - 1$ gives
	\begin{align*}
		\Delta \psi\Bigl(\bigl(\lfloor K / L \rfloor - 1\bigr)L,\lfloor K / L \rfloor L\Bigr) > 0.4 L \, \alpha^2 \eta^2 \ge 1.2 \alpha^2 \eta^2.
	\end{align*}
	If $\lfloor K / L \rfloor - 1 \ge \log_4\bigl(10 / (\alpha \eta^2)\bigr)$, then applying \eqref{eqn:block-objective-decrease} with $j = \lfloor K / L \rfloor$ gives
	\begin{align*}
		\Delta \psi\Bigl(\bigl(\lfloor K / L \rfloor - 1\bigr)L,\lfloor K / L \rfloor L\Bigr) < (0.05 + 0.8)\alpha^2 \eta^2 < 1.2 \alpha^2 \eta^2,
	\end{align*}
	contradicting this lower bound.
	Therefore, $\lfloor K / L \rfloor - 1 < \log_4\bigl(10 / (\alpha \eta^2)\bigr)$ and thus
	\begin{align*}
		K < \bigl(\lfloor K / L \rfloor + 1\bigr)L < L \Bigl(2 + \log_4\frac{10}{\alpha \eta^2}\Bigr),
	\end{align*}
	as desired.
\end{proof}

\subsection{Establishing the Main Results} \label{sec:main-results}

We are now ready to prove \cref{thm:ACL,thm:regularized} using the properties of \cref{alg:general-active-set}.

\begin{proof}[Proof of \cref{thm:ACL}]
	We use \cref{alg:active-set}, i.e., \cref{alg:general-active-set} with $\lambda = \eta = 0.5 \eps$ and $\kappa = 0.1 \alpha \eps^2$.
	If $\eps \, \d(s) \ge 1$, the algorithm outputs $\tp = \vzero$.
	In this case, its corresponding residue vector $\tr = \e_s$ satisfies $\vzero \le \tr \le \eps \d$, so $\tp$ is an ACL $\eps$-approximate PageRank vector and the theorem follows.

	Now assume $\eps \, \d(s) < 1$.
	By \cref{lem:trajectory}, with probability at least $1 - \delta$, the algorithm terminates with $\tp \ge \vzero$ and $\vzero \le \r\bigl(\D^{-1}\tp\bigr) \le (\lambda + \eta)\d = \eps \d$, so $\tp$ is an ACL $\eps$-approximate PageRank vector.
	Moreover, \cref{lem:trajectory} gives $\vol\bigl(\supp(\tp)\bigr) \le 1 / \lambda = 2 / \eps$.

	It remains to bound the running time on $\mathcal{G}$.
	If the algorithm performs $K$ expansions, it makes $K + 1$ approximate SDD calls.
	By \cref{lem:number-of-iterations} with $\eta = 0.5 \eps$, $K + 1 = O\bigl(\frac{1}{\sqrt{\alpha}} \log\frac{1}{\alpha \eps}\bigr)$.
	Additionally, since each expansion adds at least one node to the active set, we also have $K \le \vol(S_K) = O(1 / \eps)$.
	For every reached active set $S_i$, the matrix $\mL_{\alpha,S_i}$ has dimension at most $\vol(S_i)$ and $O\bigl(\vol(S_i)\bigr)$ nonzero entries.
	Constructing the restricted system and computing the residues on $\partial S_i$ also take $O\bigl(\vol(S_i)\bigr)$ time.
	Using the volume bound in \cref{lem:trajectory}, the solver guarantee in \cref{thm:SDD_solver}, the accuracy and probability parameter settings, we obtain that the total running time is
	\begin{align*}
		O\Bigl((K+1) \, \frac{1}{\eps} \, \polylog\frac{1}{\alpha \eps \delta}\Bigr) = O\Bigl(\min\Bigl\{\frac{1}{\sqrt{\alpha} \, \eps},\frac{1}{\eps^2}\Bigr\} \polylog\frac{1}{\alpha \eps \delta}\Bigr),
	\end{align*}
	which proves the theorem.
\end{proof}

To prove \cref{thm:regularized} for $\ell_1$-regularized PageRank, consider running \cref{alg:general-active-set} with
\begin{align}
	\lambda := \rho, \qquad \eta := \min\Bigl\{\rho,\frac{\xi}{2\alpha}\Bigr\}, \qquad \kappa := 0.4 \alpha \eta^2, \label{eqn:regularized-parameters}
\end{align}
and multiply the output $\tp$ by $\D^{-1}$.
We will use the following facts from \cite{fountoulakis2019variational,wei2026simple}: the unique minimizer satisfies $\xast \ge \vzero$, $\r(\xast) \ge \vzero$, $\vol(\Sast) \le 1 / \rho$, and $\rho \, \d(s) < 1$ implies $s \in \Sast$; moreover, if $s \in S \subseteq \Sast$, then $\x^{(S)} \le \xast$, and every $v \in \partial S$ satisfying $\r\bigl(\x^{(S)}\bigr)(v) > \rho \, \d(v)$ belongs to $\Sast$.

\begin{proof}[Proof of \cref{thm:regularized}]
	The mass identity \eqref{eqn:mass-identity} and $\xast,\r(\xast) \ge \vzero$ give $\d^{\top}\xast \le 1$.
	For $\tx,\xast \ge \vzero$, expanding \eqref{eqn:regularized-objective} gives
	\begin{align*}
		\psi_{\rho}(\tx)-\psi_{\rho}(\xast) & = \frac{1}{2}\bigl(\tx^{\top}\mL_{\alpha} \, \tx - (\xast)^{\top}\mL_{\alpha}\xast\bigr) + \alpha(\rho\d-\e_s)^{\top}(\tx-\xast) \\
		& = \bigl(\mL_{\alpha} \, \tx + \alpha(\rho\d-\e_s)\bigr)^{\top} (\tx-\xast) - \frac{1}{2}\|\tx-\xast\|_{\mL_{\alpha}}^2 \\
		& = \alpha(\rho\d-\tr)^{\top} (\tx-\xast) - \frac{1}{2}\|\tx-\xast\|_{\mL_{\alpha}}^2 \le \alpha(\rho\d-\tr)^{\top}(\tx-\xast),
	\end{align*}
	where the penultimate step uses $\mL_{\alpha} \, \tx = \alpha(\e_s-\tr)$.

	If $(\rho + \eta) \, \d(s) \ge 1$, the algorithm returns $\tx = \vzero$ with residue $\tr = \e_s \le (\rho + \eta)\d \le 2\rho\d$.
	Thus, $\D\tx$ is an ACL $(2\rho)$-approximate PageRank vector.
	We also have
	\begin{align*}
		\psi_{\rho}(\vzero) - \psi_{\rho}(\xast) & \le \alpha (\e_s - \rho\d)^{\top}\xast \le \alpha\eta \, \d^{\top}\xast \le \alpha\eta \le \xi,
	\end{align*}
	and this case takes $O(1)$ time.
	Henceforth, assume $(\rho + \eta) \, \d(s) < 1$, so $\eta \le \rho < 1$ and $\kappa < \eta$.

	By \cref{lem:trajectory}, all SDD calls succeed with probability at least $1 - \delta$.
	Condition on this event $\mathcal{G}$, and let $K$ be the number of expansions.
	The facts from \cite[Lemmas~5.1 and~5.2]{wei2026simple} show inductively that $S_i \subseteq \Sast$: the seed belongs to $\Sast$, and \eqref{eqn:residue-bounds} ensures that every newly activated node satisfies $\r_i(v) > \rho \, \d(v)$ and hence also belongs to $\Sast$.
	Since every expansion adds at least one node, $K + 1 \le |\Sast|$.
	Together with \cref{lem:number-of-iterations}, we have $K+1 = O\bigl(\min\bigl\{|\Sast|,\frac{1}{\sqrt{\alpha}}\log\frac{1}{\alpha\eta}\bigr\}\bigr)$.

	For the output $\tx = \tx_K$ and $\tr := \r(\tx)$, \cref{lem:trajectory} shows that $\vzero \le \tr \le (\rho+\eta)\d \le 2\rho\d$, so $\D\tx$ is an ACL $(2\rho)$-approximate PageRank vector, and $S_K \subseteq \Sast$ gives the claimed support-volume bound.
	Moreover, $\tx,\tr \ge \vzero$ and the mass identity give $\d^{\top}\tx \le 1$, and $\supp(\tx) \subseteq S_K$ and \eqref{eqn:approx-active-residue-error} give $\rho \, \d(v)-\tr(v) \le 0.1\kappa \, \d(v)$ for every $v \in S_K$, while the termination guarantee above gives $\tr-\rho\d \le \eta\d$ on all nodes.
	Using $\tx,\xast \ge \vzero$ to bound the two resulting inner products separately, we obtain
	\begin{align*}
		\psi_{\rho}(\tx)-\psi_{\rho}(\xast) & \le \alpha(\rho\d-\tr)^{\top}\tx + \alpha(\tr-\rho\d)^{\top}\xast \\
		& \le 0.1\alpha\kappa \, \d^{\top}\tx + \alpha\eta \, \d^{\top}\xast \\
		& \le \alpha(\eta+0.1\kappa) \le 2\alpha\eta \le \xi,
	\end{align*}
	where we also used $\d^{\top}\tx,\d^{\top}\xast \le 1$, $\kappa < \eta$, and $\eta \le \xi / (2\alpha)$.

	Finally, each restricted system has $O\bigl(\tvol(\Sast)\bigr)$ nonzero entries, and constructing it and inspecting the boundary take $O\bigl(\vol(\Sast)\bigr)$ time.
	By \cref{thm:SDD_solver} and \eqref{eqn:regularized-parameters}, the total running time is
	\begin{align*}
		O\Bigl((K+1)\Bigl(\tvol(\Sast)\polylog\frac{1}{\alpha\rho\eta\delta}+\vol(\Sast)\Bigr)\Bigr).
	\end{align*}
	Using the two bounds on $K+1$, $\vol(\Sast) \le 1 / \rho$, and the setting of $\eta$ proves all the running-time bounds in \cref{thm:regularized}.
\end{proof}

\subsection{Extension to Sparse Source Vectors} \label{sec:source-distribution}

To support the repeated PageRank computations in \cref{sec:effective-resistance}, we extend \activeSet to allow sparse nonnegative source vectors.
The resulting procedure satisfies the ACL approximation guarantee in \cref{def:ACL}, with running time and support bounds that scale with the source mass.

\begin{lemma} \label[lemma]{lem:sparse-source-pagerank}
	Given a nonzero source vector $\s \ge \vzero$ in sparse representation, $\alpha \in (0,1)$, $0 < \eps \le \|\s\|_1$, and $\delta \in (0,1)$, there exists a procedure $\sparseActiveSet(\s,\alpha,\eps,\delta)$ that returns, with probability at least $1 - \delta$, an ACL $\eps$-approximate PageRank vector $\p$ for source $\s$, together with its corresponding residue vector $\r$.
	On the same event, the outputs satisfy
	\begin{align}
		\|\p\|_1 + \|\r\|_1 = \|\s\|_1, \label{eqn:sparse-source-mass}
	\end{align}
	$\vol\bigl(\supp(\p)\bigr) \le 2\|\s\|_1 / \eps$, $|\supp(\r)| = O\bigl(|\supp(\s)| + \|\s\|_1 / \eps\bigr)$, and the running time is
	\begin{align}
		O\Bigl(|\supp(\s)| + \frac{\|\s\|_1}{\sqrt{\alpha} \, \eps}\polylog \frac{\|\s\|_1}{\alpha \eps \delta} \Bigr). \label{eqn:sparse-source-time}
	\end{align}
\end{lemma}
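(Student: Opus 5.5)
The plan is to reduce to the unit-mass case by scaling, and then generalize \cref{alg:general-active-set} so that the seed $\e_s$ is replaced by a normalized source vector. Set $m := \|\s\|_1$, $\hat\s := \s/m$ and $\hat\eps := \eps/m \le 1$. By linearity of \eqref{eqn:PageRank}, if $\hat\p = \pr_{\alpha}(\hat\s - \hat\r)$ with $\vzero \le \hat\r \le \hat\eps\d$, then $\p := m\hat\p$ and $\r := m\hat\r$ satisfy $\p = \pr_\alpha(\s - \r)$ and $\vzero \le \r \le \eps\d$. It therefore suffices to handle unit-mass sources.

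For a unit-mass source $\hat\s$, redefine $\r(\x) := \hat\s - \frac{1}{\alpha}\mL_\alpha\x$. The mass identity \eqref{eqn:mass-identity} carries over, because $\vone^\top\mL_\alpha\x = \alpha\,\d^\top\x$. Formula \eqref{eqn:frontier-residue-formula} becomes $\r(\x)(v) = \hat\s(v) + \frac{1-\alpha}{\alpha}\sum_{u\in N(v)\cap S}\x(u)$ for $v \notin S$.

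The initial active set $S_0$ is $\{v \in \supp(\hat\s): \hat\s(v) > (\lambda+\kappa)\d(v)\}$, with $\lambda = \eta = 0.5\hat\eps$. If $S_0$ is empty and no source node violates $\hat\s(v) \le \hat\eps\d(v)$, return $\p = \vzero$ and $\r = \s$. Because $\kappa < \eta$, every node of $\supp(\hat\s)\setminus S_0$ already has $\hat\s(v) \le (\lambda+\eta)\d(v)$.

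In each iteration, solve $\mL_{\alpha,S}\x_S = \alpha(\hat\s|_S - \lambda\d_S)$ and test the stopping and activation rules on all candidate nodes $\partial S \cup (\supp(\hat\s)\setminus S)$. A node with nonzero source outside $S$ behaves exactly like a boundary node whose residue comes partly from the source. Finally, return $\r$ computed explicitly on $S\cup\partial S\cup\supp(\s)$, which gives the bound $|\supp(\r)| = O(|\supp(\s)| + \vol(S)) = O(|\supp(\s)| + m/\eps)$. The mass identity then yields \eqref{eqn:sparse-source-mass} after rescaling, since $\p,\r \ge \vzero$ imply $\|\p\|_1 + \|\r\|_1 = \vone^\top\p + \vone^\top\r = m$.

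Next, re-verify the chain of lemmas with $\e_s$ replaced by $\hat\s$. \cref{lem:active-set-expansion} and \cref{lem:active-set-solver-error} only use that $\x^{(S)}$ is obtained from the restricted system. The induction of \cref{lem:trajectory} needs a base case: for $v \in S_0$, monotonicity of $\mL_{\alpha,S_0}^{-1}$ (nonnegative entries, with diagonal at least $1/\d(v)$) gives $\x_0(v) \ge \frac{\alpha}{\d(v)}(\hat\s(v) - \lambda\d(v)) > 0.9\alpha\kappa$. The volume bound $\vol(S_i) \le 1/\lambda = 2m/\eps$ again follows from the mass identity.

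For the iteration count, \cref{lem:one-step-objective-decrease}, \cref{lem:orthogonal}, \cref{lem:interval-objective-decrease} and \cref{lem:batch-decrease} used only the residue invariant, symmetry and mass bounds, so they transfer verbatim, with $\alpha\x(s)$ replaced by $\alpha\hat\s^\top\x$ in $\psi_\lambda$. The one remaining step is \cref{lem:objective-scale-bounds}, and I expect this to be the main obstacle, since $S_0$ is no longer a single node. I would show $\psi_\lambda(\x_0) \le 0$ from the first-order identity $\psi_\lambda(\x_0) = -\frac12\x_0^\top\mL_\alpha\x_0$, which holds because $\x_0$ solves the restricted system. I would then lower-bound $\psi_\lambda(\x) \ge \frac{\alpha}{2}\sum_v\d(v)\x(v)^2 - \alpha\hat\s^\top\x \ge -\frac{\alpha}{2}\sum_v\hat\s(v)^2/\d(v) \ge -\frac{\alpha}{2}$, using $\|\hat\s\|_1 = 1$. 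This gives $K = O(\alpha^{-1/2}\log\frac{m}{\alpha\eps})$.

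The running time then follows as in the proof of \cref{thm:ACL}. We pay $O(|\supp(\s)|)$ once to scan the source, and each of the $K+1$ solves costs $O(\frac{m}{\eps}\polylog\frac{m}{\alpha\eps\delta})$, which gives \eqref{eqn:sparse-source-time}. To avoid rescanning the source candidates in every iteration, maintain a list of the source nodes that are not yet active; its length is $|\supp(\s)|$.
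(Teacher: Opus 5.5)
Your proposal follows the paper's proof essentially step for step. Like the paper, you normalize to unit mass, seed the active set with the source nodes above the activation threshold, and add $\hat{\s}(v)$ to the frontier residue formula. You then reuse the potential analysis with $\psi_\lambda(\x_0) = -\frac12\x_0^{\top}\mL_\alpha\x_0 \le 0$ and the lower bound $-\frac{\alpha}{2}\sum_v \hat{\s}(v)^2/\d(v) \ge -\frac{\alpha}{2}$. Your base-case argument, using that $\mL_{\alpha,S_0}^{-1}$ is entrywise nonnegative with diagonal at least $1/\d(v)$, fills in a detail the paper leaves implicit.

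One step does not deliver the claimed time bound \eqref{eqn:sparse-source-time}. You test the stopping and activation rules on $\supp(\hat{\s})\setminus S$ in every iteration. Keeping those nodes in a list of length up to $|\supp(\s)|$ does not help, because the list is still scanned each iteration. Your analysis therefore gives $O(K\,|\supp(\s)|)$ scanning work rather than a one-time additive $|\supp(\s)|$ term, which exceeds the stated bound when $|\supp(\s)| \gg \|\s\|_1/\eps$.

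The extra tests are in fact unnecessary. A node $v \in \supp(\hat{\s}) \setminus (S \cup \partial S)$ has residue exactly $\hat{\s}(v)$, and $\hat{\s}(v) \le (\lambda+\kappa)\,\d(v)$ by the definition of $S_0$. So such a node can neither be activated nor violate the stopping rule. It also already satisfies the bound $\r_i(v) \le (\lambda+1.1\kappa)\,\d(v)$ needed in \eqref{eqn:residue-bounds}.

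Hence it suffices to test only $\partial S$, with $\hat{\s}(v)$ added to the frontier residue. This is exactly what the paper does, and the source is then scanned only once.
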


\begin{proof}
	By normalizing $\s$ and $\eps$ by $\|\s\|_1$ and rescaling the outputs, it suffices to assume $\|\s\|_1 = 1$ and $\eps \le 1$.
	We use \cref{alg:active-set} and replace its early return and initialization by
	\begin{align*}
		S \gets \bigl\{v \in \supp(\s): \s(v) > (0.5 \eps + \kappa) \, \d(v)\bigr\}.
	\end{align*}
	If $S$ is empty, then $\s \le (0.5 \eps + \kappa)\d \le \eps \d$, so we can return $(\vzero,\s)$.
	Otherwise, in each iteration, we use $\alpha (\s_S - 0.5 \eps \, \d_S)$ as the right-hand side vector for the restricted linear system and compute frontier residues by
	\begin{align*}
		\tr(v) \gets \s(v) + \frac{1 - \alpha}{\alpha} \sum_{w \in N(v) \cap S} \tx_S(w), \qquad v \in \partial S.
	\end{align*}
	Additionally, keep the stopping and activation rules unchanged and return $\p := \D\tx$ and $\r := \s - \mL_{\alpha} \, \tx / \alpha$ at termination.

	For $\r(\x) := \s - \mL_{\alpha} \, \x / \alpha$, the mass identity remains $\d^{\top}\x + \vone^{\top}\r(\x) = 1$.
	The proofs of the solver-error bounds and trajectory induction in \cref{lem:active-set-solver-error,lem:trajectory} carry over after replacing $\e_s$ by $\s$, and also $\vol(S) \le 2 / \eps$ holds on successful calls.
	Therefore, the ACL approximation guarantee and the support-volume bound still hold with probability at least $1 - \delta$.
	For the potential-function analysis, replacing the objective's source term by $-\alpha \, \s^{\top}\x$ gives initial objective $-\x_0^{\top}\mL_{\alpha} \, \x_0 / 2 \le 0$ and the lower bound $-\frac{\alpha}{2} \sum_{v \in V} \s(v)^2 / \d(v) \ge -\alpha / 2$ for $\x \ge \vzero$.
	The source cancels in differences of residue invariants, so the proofs of \cref{lem:one-step-objective-decrease,lem:batch-decrease,lem:number-of-iterations} also carry over.

	For the time complexity, initially we need to scan $\supp(\s)$, and then each iteration needs $O\bigl(\vol(S)\bigr)$ work besides the solve.
	The final residue is supported on $\supp(\s) \cup S \cup \partial S$ and thus can be constructed by scanning $\supp(\s)$ and edges incident to $S$, which also yields the desired bound on $|\supp(\r)|$.
	The volume and iteration bounds with \cref{thm:SDD_solver} give the claimed running time.

	Finally, if $\|\s\|_1 \ne 1$, rescaling preserves supports and replaces $1 / \eps$ by $\|\s\|_1 / \eps$, giving \eqref{eqn:sparse-source-time}.
	For the rescaled outputs, the mass identity and nonnegativity also give \eqref{eqn:sparse-source-mass}.
\end{proof}

\section{Applications to Local Clustering and PageRank Estimation} \label{sec:applications}

The ACL approximation guarantee in \cref{thm:ACL} allows our primitive to be used in existing local graph algorithms.
We illustrate this with faster implementations of \PageRankNibble and single-pair PageRank estimation.

\subsection{Faster PageRank-Nibble} \label{sec:PageRank-Nibble}

We consider the problem of local graph clustering.
For $\varnothing \subsetneq S \subsetneq V$, define the conductance of $S$ by
\begin{align*}
	\phi(S) := \frac{\bigl|\bigl\{(u,v) \in E : u \in S, v \in V \setminus S\bigr\}\bigr|}{\min\bigl\{\vol(S),\vol(V\setminus S)\bigr\}}.
\end{align*}
Given a seed node $s$, the \PageRankNibble algorithm of Andersen, Chung, and Lang~\cite{andersen2007pagerank} computes an ACL approximate PageRank vector and examines its sweep sets to find a cluster near $s$ with small conductance.
Its inputs include a seed node $s$, a desired output conductance $\phi$, and a scale parameter $b$.
The original implementation takes $O\bigl(2^b \log^2 n / \phi^2\bigr)$ time (cf. \cite[Algorithm 2, Theorem~6.1]{andersen2007pagerank}) using the local push method.
By replacing the local push method in \PageRankNibble with our faster primitive, we improve the running time to $\tO\bigl(2^b \log^2 n / \phi\bigr)$, as stated below.

\begin{corollary} \label[corollary]{cor:PageRank-Nibble}
	Given a seed node $s$, conductance parameter $\phi \in (0,1)$, an integer $b \ge 1$, and failure probability $\delta \in (0,1)$, there exists an implementation of \PageRankNibble~\cite[Algorithm 2]{andersen2007pagerank} that, with probability at least $1 - \delta$, runs in
	\begin{align*}
		O\Bigl(\frac{2^b \log^2 n}{\phi} \polylog \frac{2^b \log n}{\phi\delta}\Bigr)
	\end{align*}
	time and retains the conductance, volume, and overlap guarantees of~\cite[Theorem~6.2]{andersen2007pagerank} under the same hypotheses.
	In particular, whenever it returns a set $S$, we have $\phi(S) < \phi$ and $\vol(S) > 2^{b - 1}$, so the ratio between its running time and $\vol(S)$ is $\tO(1 / \phi)$.
\end{corollary}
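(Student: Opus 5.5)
We will prove the corollary by recalling the structure of \PageRankNibble in \cite[Algorithm~2]{andersen2007pagerank} and replacing only its single call to the ACL push procedure with our primitive. In \cite{andersen2007pagerank}, \PageRankNibble sets $\alpha = \Theta\bigl(\phi^2/\log(\vol(V))\bigr) = \Theta\bigl(\phi^2/\log n\bigr)$ and $\eps = \Theta\bigl(1/(2^b \log n)\bigr)$. It then computes an ACL $\eps$-approximate PageRank vector $\p$ with seed $s$ and performs a sweep over the support of $\D^{-1}\p$, checking the conductance and volume conditions. The correctness guarantees of \cite[Theorem~6.2]{andersen2007pagerank} use only two properties of $\p$. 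The first is that $\p = \pr_{\alpha}(\e_s - \r)$ with $\vzero \le \r \le \eps\d$, i.e., that $\p$ is an ACL $\eps$-approximate PageRank vector in the sense of \cref{def:ACL}. The second is, for the sweep cost, that $\vol(\supp(\p)) = O(1/\eps)$. We will verify this by inspecting their proof: the Lovász--Simonovits sweep argument and the overlap bound are stated for an arbitrary $\p$ with these properties.

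We therefore invoke \cref{thm:ACL} with these $\alpha$ and $\eps$ and failure probability $\delta$. On its success event, of probability at least $1-\delta$, the output is an ACL $\eps$-approximate PageRank vector with $\vol(\supp(\p)) \le 2/\eps$. Hence every hypothesis used in \cite[Theorem~6.2]{andersen2007pagerank} holds, and the conductance, volume, and overlap guarantees transfer verbatim. In particular, any returned $S$ satisfies $\phi(S) < \phi$ and $\vol(S) > 2^{b-1}$.

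For the running time, \cref{thm:ACL} gives
\[
O\Bigl(\frac{1}{\sqrt{\alpha}\,\eps}\polylog\frac{1}{\alpha\eps\delta}\Bigr)
= O\Bigl(\frac{\sqrt{\log n}}{\phi}\cdot 2^b\log n \cdot \polylog\frac{2^b\log n}{\phi\delta}\Bigr),
\]
since $1/(\alpha\eps) = O(2^b\log^2 n/\phi^2)$ and its logarithm is absorbed into $\polylog\bigl(2^b\log n/(\phi\delta)\bigr)$. The sweep costs $O\bigl(\vol(\supp(\p))\log n\bigr) = O(2^b\log^2 n)$ after sorting, which is dominated. Combining the two terms and using $\sqrt{\log n}\le\log n$ yields the stated $O\bigl(2^b\log^2 n\,\phi^{-1}\polylog\bigr)$ bound. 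The ratio claim follows by dividing by $\vol(S) > 2^{b-1}$.

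The only step requiring care is the first: confirming that the analysis in \cite{andersen2007pagerank} never relies on push-specific structure, such as the order of push operations or the support of $\r$, beyond the ACL guarantee and the support-volume bound. We expect this to hold because their Theorem~6.2 is derived from the general approximate-PageRank sweep lemma, which applies to any vector satisfying \cref{def:ACL}. The remaining work is arithmetic with the explicit constants in their choice of $\alpha$ and $\eps$.
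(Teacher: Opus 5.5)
Your proposal is correct and matches the paper's proof: you substitute \cref{thm:ACL} for the push step with ACL's parameters $\alpha = \Theta(\phi^2/\log n)$ and $\eps = \Theta(1/(2^b\log n))$, observe that the analysis of \cite[Theorem~6.2]{andersen2007pagerank} needs only the ACL approximation guarantee, and add the $O(2^b\log^2 n)$ sweep cost to the $\tO(1/(\sqrt{\alpha}\,\eps))$ solve cost. The one detail you skip, which the paper states explicitly, is that ACL's $\alpha$ is a lazy-walk parameter $\alpha_{\mathrm L}$: you should invoke \cref{thm:ACL} with the non-lazy parameter $2\alpha_{\mathrm L}/(1+\alpha_{\mathrm L}) = \Theta(\alpha_{\mathrm L})$ so that the output is exactly an approximate vector in ACL's lazy sense, and this leaves the asymptotics unchanged.
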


\begin{proof}
	Use the parameters of~\cite[Algorithm~2]{andersen2007pagerank}, which, after converting from lazy to non-lazy walks, satisfy
	\begin{align*}
		\alpha = \Theta\Bigl(\frac{\phi^2}{\log n}\Bigr), \qquad \eps = \Theta\Bigl(\frac{1}{2^b \log n}\Bigr).
	\end{align*}
	Replace the approximate PageRank computation by \cref{thm:ACL}.
	With probability at least $1-\delta$, the resultant vector satisfies the same ACL approximation guarantee, so the proof of~\cite[Theorem~6.2]{andersen2007pagerank} directly applies.
	Plugging in the $O\bigl(\polylog\bigl(1/(\alpha\eps\delta)\bigr) / \bigl(\sqrt{\alpha} \, \eps\bigr)\bigr)$ time complexity gives the desired bound, and the support volume of the approximate PageRank vector is $O(1 / \eps)$, so sorting and examining all sweep sets takes only $O(2^b \log^2 n)$ additional time, giving the stated total running-time bound.
\end{proof}

The $\tO(1 / \phi)$ work-volume ratio of our faster implementation of \PageRankNibble matches that of the evolving-set methods~\cite{andersen2016almost}.
However, evolving-set methods provide stronger conductance guarantees, while our implementation retains the guarantees of the original \PageRankNibble.
We refer the reader to~\cite{andersen2016almost} for a more detailed discussion.

\subsection{Faster Single-Pair PageRank Estimation} \label{sec:single-pair-PageRank}

In the single-pair PageRank estimation problem, we are given a source node $s$, a target node $t$, and a teleportation parameter $\alpha \in (0,1)$, and the goal is to estimate $\pr_{\alpha}(s,t)$ up to additive error $\epsilon \max\bigl\{\pr_{\alpha}(s,t),\theta\bigr\}$.
The Undirected-BiPPR algorithm of Lofgren, Banerjee, and Goel~\cite{lofgren2015bidirectional} combines local push from the source $s$ with random walks from the target $t$.
It first computes an ACL $\eps$-approximate PageRank vector (where $\eps$ is chosen according to the problem parameters), then samples random walks and uses the residues at the endpoints and the symmetry identity \eqref{eqn:symmetry} to estimate the remaining contribution to $\pr_{\alpha}(s,t)$.
For relative error $\epsilon \in (0,1)$, significance threshold $\theta \in (0,1)$, and failure probability $\delta \in (0,1)$, balancing these two phases of local push and random-walk sampling gives a running time of
\begin{align*}
	O\biggl(\frac{1}{\alpha \epsilon} \sqrt{\frac{\d(t) \log(1 / \delta)}{\theta}}\biggr).
\end{align*}
By replacing the local push phase in Undirected-BiPPR by \cref{thm:ACL}, we obtain the following improvement.

\begin{corollary} \label[corollary]{cor:single-pair-PageRank}
	Given $s,t \in V$, $\alpha,\epsilon,\theta \in (0,1)$, and $\delta \in (0,1)$, there is an algorithm that, with probability at least $1 - \delta$, returns an estimate $\widehat{\pr}_{\alpha}(s,t)$ satisfying
	\begin{align*}
		\bigl|\widehat{\pr}_{\alpha}(s,t) - \pr_{\alpha}(s,t)\bigr| \le \epsilon \max\bigl\{\pr_{\alpha}(s,t),\theta\bigr\}
	\end{align*}
	and runs in time
	\begin{align*}
		O\Bigl(\frac{1}{\alpha^{3/4} \, \epsilon} \sqrt{\frac{\d(t)}{\theta}} \, \polylog\frac{\d(t)}{\alpha \epsilon \theta} \log\frac{1}{\delta}\Bigr).
	\end{align*}
\end{corollary}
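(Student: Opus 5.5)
We follow the Undirected-BiPPR estimator of Lofgren, Banerjee, and Goel essentially verbatim, replacing only its forward push phase by \cref{thm:ACL}, and then rebalance the two phases. Concretely, we run \activeSet from $s$ with accuracy $\eps$ (to be chosen) and failure probability $\delta/2$. This yields $\p = \pr_{\alpha}(\e_s - \r)$ with $\vzero \le \r \le \eps \d$. By \eqref{eqn:invariant-entrywise} and the symmetry identity \eqref{eqn:symmetry},
\begin{align*}
	\pr_{\alpha}(s,t) = \p(t) + \sum_{u} \r(u)\,\pr_{\alpha}(u,t) = \p(t) + \d(t) \sum_{u} \frac{\r(u)}{\d(u)}\,\pr_{\alpha}(t,u).
\end{align*}
The second term is $\d(t)\,\E[\r(U)/\d(U)]$, where $U$ is the endpoint of an $\alpha$-terminated random walk from $t$. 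Each sample of $\d(t)\r(U)/\d(U)$ lies in $[0,\eps\,\d(t)]$. Its mean is at most $\pr_{\alpha}(s,t)$, and the support of $\r$ is stored explicitly, so each lookup costs $O(1)$ (e.g., via hashing).

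The sampling analysis is unchanged from the original. With $W$ walks, a multiplicative Chernoff bound for $[0,\eps\,\d(t)]$-bounded variables with mean at most $\max\{\pr_{\alpha}(s,t),\theta\}$ shows that
\begin{align*}
	W = O\Bigl(\frac{\eps\,\d(t)\log(1/\delta)}{\epsilon^2\theta}\Bigr)
\end{align*}
walks give additive error $\epsilon\max\{\pr_{\alpha}(s,t),\theta\}$ with probability $1-\delta/2$. Each walk has expected length $1/\alpha$. To obtain a worst-case bound, we truncate walks at length $O(\log(\d(t)/(\epsilon\theta))/\alpha)$. The truncation bias is at most the probability of surviving that long, scaled by $\eps\,\d(t)$, so it can be made at most $\epsilon\theta/2$. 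Alternatively, we bound the total walk length with a concentration bound for sums of geometric variables. Either way, the sampling phase costs $\tO\bigl(\eps\,\d(t)\log(1/\delta)/(\alpha\epsilon^2\theta)\bigr)$.

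The push phase now costs $\tO\bigl(1/(\sqrt{\alpha}\,\eps)\bigr)$ instead of $O\bigl(1/(\alpha\eps)\bigr)$. We balance the two phases by setting
\begin{align*}
	\frac{1}{\sqrt{\alpha}\,\eps} = \frac{\eps\,\d(t)}{\alpha\epsilon^2\theta},
	\qquad\text{i.e.,}\qquad
	\eps = \epsilon\,\alpha^{1/4}\sqrt{\theta/\d(t)}.
\end{align*}
With this choice, each phase costs $\frac{1}{\alpha^{3/4}\epsilon}\sqrt{\d(t)/\theta}$ up to polylogarithmic and $\log(1/\delta)$ factors. The polylogarithmic factor from \cref{thm:ACL} is $\polylog(1/(\alpha\eps\delta))$. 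Substituting $\eps$ and separating the $\log(1/\delta)$ dependence (e.g., by running the solver with constant failure probability and boosting, or simply absorbing it) gives the stated form. A union bound over the two phases completes the argument.

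The main obstacle is bookkeeping rather than any new idea. The per-sample range must scale with $\eps$, since this is what makes the balance produce $\alpha^{3/4}$. The walk-length tail must be handled so that the bound holds with probability $1-\delta$ rather than only in expectation. The $\delta$-dependence of the solver's polylogarithmic factor also needs care. The first thing to try for the latter is to invoke \cref{thm:ACL} with failure probability $\delta/2$ and note that $\polylog(1/\delta)$ can be split off as a multiplicative $\log(1/\delta)$ via median-of-means repetition of the whole estimator: run $O(\log(1/\delta))$ independent copies, each with constant failure probability, and take the median.
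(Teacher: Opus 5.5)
Your proposal is correct and follows the paper's proof: you substitute \cref{thm:ACL} for the push phase of Undirected-BiPPR, reuse the original sampling analysis with $O(\eps\,\d(t)/(\epsilon^2\theta))$ walks of expected length $O(1/\alpha)$, balance the two phases with $\eps = \alpha^{1/4}\epsilon\sqrt{\theta/\d(t)}$, and get the $\log(1/\delta)$ factor by taking the median of truncated constant-failure-probability copies. Two small bookkeeping points: \cref{thm:ACL} returns only $\tp$, so you must form the residue vector via \eqref{eqn:residue-def} in $O(1/\eps)$ time; and ``simply absorbing'' the solver's $\polylog(1/\delta)$ factor would not give the stated bound, so the median-trick route you end with is the one actually needed.
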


\begin{proof}
	First consider constant success probability.
	Use \cref{alg:active-set} with accuracy $\eps$ (to be determined later) in place of local push, and form its residue vector using \eqref{eqn:residue-def} in $O(1 / \eps)$ time.
	By \cref{thm:ACL}, the output satisfies the same ACL approximation guarantee required by BiPPR, so the same sampling analysis~\cite[Section~2.4]{lofgren2015bidirectional} applies.
	The sampling phase uses $O\bigl(\eps \, \d(t) / (\epsilon^2 \theta)\bigr)$ random walks, each of expected length $O(1 / \alpha)$.
	With constant probability, the algorithm satisfies the required error guarantee and runs in time
	\begin{align*}
		O\Bigl(\frac{1}{\sqrt{\alpha} \, \eps} \polylog\frac{1}{\alpha \eps} + \frac{\eps \, \d(t)}{\alpha \epsilon^2 \theta}\Bigr).
	\end{align*}
	Balancing the two terms up to polylogarithmic factors by setting
	\begin{align*}
		\eps := \alpha^{1/4} \epsilon \sqrt{\frac{\theta}{\d(t)}}
	\end{align*}
	gives the stated bound for constant failure probability.
	Standard repetition and truncation, taking the median of the returned estimates, give the result for failure probability $\delta$ with an additional factor $\log(1 / \delta)$.
\end{proof}

\section{Application to Effective Resistance Estimation} \label{sec:effective-resistance}

In this section, we present new reductions from effective resistance estimation to PageRank computation using local push, which can be accelerated by our improved primitive.
Our goal is to estimate the effective resistance $R(s,t)$ between two nodes $s$ and $t$ with additive error $\epsilon$.
The result is summarized in the following theorem.

\begin{theorem} \label{thm:effective-resistance}
	Given distinct nodes $s,t \in V$, a lower bound $0 < \gamma \le \nu_2(\D^{-1/2} \, \mL \, \D^{-1/2})$ on the second-smallest eigenvalue of the normalized Laplacian, an accuracy $\epsilon \in (0,1]$, and a failure probability $\delta \in (0,1)$, let
	\begin{align}
		L := \max\Bigl\{\Bigl\lceil \frac{2}{\gamma} \ln\Bigl(\frac{2}{\epsilon \gamma} \Bigl(\frac{1}{\d(s)} + \frac{1}{\d(t)}\Bigr)\Bigr)\Bigr\rceil,1\Bigr\}. \label{eqn:resistance-length}
	\end{align}
	Then there exists a randomized algorithm that, with probability at least $1 - \delta$, returns an estimate $\hat{R}$ satisfying $\bigl|\hat{R} - R(s,t)\bigr| \le \epsilon$ in time
	\begin{align*}
		O\biggl(\min\biggl\{\frac{L^{4/3}}{\epsilon^{2/3}},\frac{L^{7/4}}{\epsilon \, \sqrt{\min\{\d(s),\d(t)\}}}\biggr\} \polylog\frac{L}{\epsilon} \log \frac{1}{\delta}\biggr).
	\end{align*}
\end{theorem}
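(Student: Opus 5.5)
The plan is to reduce $R(s,t)$ to a short sum of PageRank matrix powers applied to $\b := \e_s/\d(s) - \e_t/\d(t)$. Each power will be approximated by repeated calls to \sparseActiveSet (\cref{lem:sparse-source-pagerank}), and the leftover residue mass will be handled by random-walk sampling. \textbf{Reduction.} I write $R(s,t) = \chi^{\top}\mL^{+}\chi$ with $\chi := \e_s - \e_t$, and use
\begin{align*}
\mL_{\alpha} - \alpha\D = (1-\alpha)\mL .
\end{align*}
On the complement of the kernel this gives the Neumann-type expansion
\begin{align*}
\mL^{+}\chi = (1-\alpha)\sum_{k \ge 0}\bigl(\alpha\,\mL_{\alpha}^{-1}\D\bigr)^{k}\mL_{\alpha}^{-1}\chi,
\end{align*}
restricted to the component orthogonal to the stationary direction. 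This component can be removed exactly because $\vone^{\top}\chi = 0$ and PageRank preserves mass. By \eqref{eqn:residue-invariant}, $\alpha\mL_{\alpha}^{-1}\D$ is the degree-conjugated PageRank operator $\D^{-1}\pr_{\alpha}(\cdot)\,\D$. Hence every term is a bilinear form $\chi^{\top}\D^{-1}\Pi_{\alpha}^{k+1}\chi$ up to scaling, where $\Pi_{\alpha}$ is the PageRank matrix.

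\textbf{Truncation.} In the normalized basis, the iteration operator has eigenvalues $\alpha/(\alpha + (1-\alpha)\nu)$ on the nontrivial eigenvectors, and $\nu \ge \gamma$ there. I will choose $\alpha = \Theta(1/L)$ when $\gamma$ is small. The truncated tail after $K$ terms is then bounded by $\bigl(1 + (1-\alpha)\gamma/\alpha\bigr)^{-K}$ times $\chi^{\top}\D^{-1/2}\mathcal{N}^{+}\D^{-1/2}\chi \le \frac{1}{\gamma}\bigl(\frac{1}{\d(s)} + \frac{1}{\d(t)}\bigr)$. The definition \eqref{eqn:resistance-length} of $L$ is exactly what makes the tail at most $\epsilon/2$ with $K = O(L\alpha)$ or $O(L)$ terms, depending on the choice of $\alpha$. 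I will keep $\alpha$ as a free parameter and optimize it at the end.

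\textbf{Estimating the terms.} I split $\chi$ into its positive part $\e_s$ and negative part $\e_t$. For each part I run \sparseActiveSet iteratively: the output reserve $\p$ of one level, degree-normalized, becomes the sparse source for the next level. The mass identity \eqref{eqn:sparse-source-mass} keeps $\|\s\|_1 \le 1$ at every level, so each call costs $\tO\bigl(1/(\sqrt{\alpha}\,\eps)\bigr)$ plus the source support size. The residues left at each level are deferred. Their contribution $\r^{\top}(\text{remaining powers})\cdots$ is estimated as in BiPPR, by sampling random walks of geometric length from the other endpoint and using the symmetry identity \eqref{eqn:symmetry}. This is unbiased, each walk has length $O(K/\alpha)$, and the variance is controlled because $\r \le \eps\d$ entrywise. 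Balancing the push cost $K/(\sqrt{\alpha}\,\eps)$ against the sampling cost $\eps\cdot(\text{walk length})/\epsilon^{2}$ (with a Chernoff/median step for the $\log(1/\delta)$ factor) and optimizing over $\alpha$ and $\eps$ gives the $L^{4/3}/\epsilon^{2/3}$ bound. If instead I use the degree-dependent variance bound $\r(v)/\d(v) \le \eps$ together with the $1/\d(s)$ weighting, the same balance gives the $L^{7/4}/(\epsilon\sqrt{\min\{\d(s),\d(t)\}})$ bound. The algorithm runs whichever branch is cheaper.

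\textbf{Main obstacle.} The hardest step is controlling how errors accumulate across the $K$ nested levels. The residue discarded at level $k$ is propagated through all remaining powers. The supports of the sources grow with each level, so I must verify via the support bounds of \cref{lem:sparse-source-pagerank} that $|\supp(\s)|$ stays $O(1/\eps)$ per level. I also need to show that the variance of the sampled correction, summed over levels, does not exceed the single-level variance by more than a factor of $K$. I would first try to bound everything by writing the total estimator as one telescoping bidirectional identity $\chi^{\top}\D^{-1}\Pi^{k} = (\text{push part}) + \sum_{j}\r_j^{\top}(\cdots)$ and bounding each residue term by $\eps$ times a PageRank mass that sums to at most $1$.
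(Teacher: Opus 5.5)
Your route is the paper's: expand $R(s,t)=\frac{1-\alpha}{\alpha}\sum_{i\ge1}\chi^{\top}\D^{-1}\mat{\Pi}_{\alpha}^{i}\chi$ with $\alpha=\Theta(1/L)$, truncate, compute the powers by chained calls to the sparse-source primitive of \cref{lem:sparse-source-pagerank} with residues deferred, and estimate the residue terms with chains of $\alpha$-terminated walks using the symmetry of $\D^{-1}\mat{\Pi}_{\alpha}^{i}$. Your tail bound $\bigl(1+\frac{1-\alpha}{\alpha}\gamma\bigr)^{-K}R(s,t)$ is correct, and a bit cleaner than the paper's comparison with the $\alpha=1/2$ tail. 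Using nested sources $\p_{k-1}$ (mass at most $1$) instead of the paper's accumulated sources $\e_u+\p_{i-1}$ only shifts polylogarithmic factors between push cost and variance. However, the running-time derivation, which is the real content of the theorem, does not go through as written.

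(i) With $\alpha=1/(L+1)$, your tail bound needs $K=O(\log(L/\epsilon))$ terms, not $O(L\alpha)=O(1)$. This follows from $L\gamma\ge 2\ln X$ with $X:=\frac{2}{\epsilon\gamma}\bigl(\frac{1}{\d(s)}+\frac{1}{\d(t)}\bigr)$, together with $\ln X=O(\log(L/\epsilon))$. This logarithmic count is exactly what makes the push cost $\tO(\sqrt{L}/\eps)$.

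(ii) Your sampling cost drops the factor $\bigl(\frac{1-\alpha}{\alpha}\bigr)^{2}=L^{2}$. Each sampled residue term enters $\hat{R}$ multiplied by $L$. Write $d_{\min}:=\min\{\d(s),\d(t)\}$; the per-sample range is $\tO(\eps)$ and the mean is $\tO(1/d_{\min})$. Chebyshev then needs $\tO\bigl(L^{2}\min\{\eps^{2},\eps/d_{\min}\}/\epsilon^{2}\bigr)$ samples, each costing $O(K/\alpha)=\tO(L)$, for a total of $\tO\bigl(L^{3}\min\{\eps^{2},\eps/d_{\min}\}/\epsilon^{2}\bigr)$. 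If you balance your stated expression $\eps\cdot(\text{walk length})/\epsilon^{2}$ against $\sqrt{L}/\eps$, you get $\tO(L^{3/4}/\epsilon)$, so the exponents you claim are asserted rather than derived. With the corrected cost, the choices $\eps=\epsilon^{2/3}L^{-5/6}$ and $\eps=\epsilon\sqrt{d_{\min}}\,L^{-5/4}$ give $L^{4/3}/\epsilon^{2/3}$ and $L^{7/4}/(\epsilon\sqrt{d_{\min}})$ respectively.

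(iii) You do not handle large degrees. A single walk chain already costs $\Omega(L)$ in expectation, yet $L^{7/4}/(\epsilon\sqrt{d_{\min}})=o(L)$ once $d_{\min}$ is sufficiently larger than $L^{3/2}/\epsilon^{2}$. The missing ingredient is the a priori bound $R(s,t)\le 2L/d_{\min}+\epsilon/2$: each of the first $L$ terms of the $\alpha=1/2$ series is at most $\|\D^{-1/2}\chi\|_2^{2}$, and the tail contributes at most $\epsilon/2$. This lets you return $0$ when $d_{\min}\ge 4L/\epsilon$. After that early return, both terms of the minimum are $\Omega(L)$, and the push threshold is at most $1$, as \cref{lem:sparse-source-pagerank} requires.
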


As in \cite{cui2025mixing,yang2025improved,kwok2026solving}, our result assumes that a lower bound $\gamma$ on the spectral gap is given.
Also, our choice of $L$ agrees up to constant factors with the corresponding truncation lengths in these works.
We compare our results with the following bounds in \cite[Corollary~10]{kwok2026solving}, which subsumes the corresponding bounds in \cite{cui2025mixing,yang2025improved}:
\begin{align*}
	O\biggl(\min\biggl\{\frac{L^{7/3}}{\epsilon^{2/3}},\frac{L^{5/2}}{\epsilon \, \sqrt{\min\{\d(s),\d(t)\}}}\biggr\} \log\frac{1}{\delta}\biggr).
\end{align*}
Our results improve the respective exponents of $L$ from $7/3$ to $4/3$ and from $5/2$ to $7/4$.
In fact, as we show at the end of this section, combining our reduction with the original local push algorithm gives complexity bounds of
\begin{align}
	O\biggl(\min\biggl\{\frac{L^{5/3}}{\epsilon^{2/3}},\frac{L^2}{\epsilon \, \sqrt{\min\{\d(s),\d(t)\}}}\biggr\} \polylog\frac{L}{\epsilon} \log\frac{1}{\delta}\biggr), \label{eqn:resistance-push-time}
\end{align}
so the reduction itself improves both terms, and \activeSet further accelerates it.

\subsection{Reduction from Effective Resistance to PageRank} \label{sec:resistance-properties} \label{sec:resistance-reduction}

We first relate effective resistance to PageRank powers.
To this end, we define the PageRank matrix
\begin{align}
	\mat{\Pi}_{\alpha} := \alpha \bigl(\I - (1 - \alpha) \A \D^{-1}\bigr)^{-1}. \label{eqn:pagerank-matrix}
\end{align}
It holds that $\mat{\Pi}_{\alpha} \, \s = \pr_{\alpha}(\s)$ for any source vector $\s \in \R^V$.

The following lemma gives a PageRank representation of effective resistance and bounds its truncation error.
\begin{lemma} \label[lemma]{lem:resistance-expansion}
	Let $s,t,\gamma,\epsilon,L$ be as in \cref{thm:effective-resistance}.
	For every $\alpha \in (0,1)$, we have
	\begin{align}
		R(s,t) = \frac{1 - \alpha}{\alpha} \sum_{i = 1}^{\infty} (\e_s - \e_t)^{\top} \D^{-1} \, \mat{\Pi}_{\alpha}^i (\e_s - \e_t). \label{eqn:resistance-expansion}
	\end{align}
	Moreover, it holds that
	\begin{align}
		0 \le \sum_{i = L + 1}^{\infty} (\e_s - \e_t)^{\top} \D^{-1} \, \mat{\Pi}_{1/2}^i (\e_s - \e_t) \le \frac{1}{2} \epsilon \label{eqn:resistance-tail}
	\end{align}
	and
	\begin{align}
		R(s,t) \le \frac{2L}{\min\{\d(s),\d(t)\}} + \frac{1}{2} \epsilon. \label{eqn:resistance-degree-bound}
	\end{align}
\end{lemma}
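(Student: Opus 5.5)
The plan is to diagonalize everything through the symmetric normalized adjacency matrix $\mat{N} := \D^{-1/2}\A\D^{-1/2}$. Since $\A\D^{-1} = \D^{1/2}\mat{N}\D^{-1/2}$, we get $\mat{\Pi}_{\alpha} = \D^{1/2}\M_{\alpha}\D^{-1/2}$ with $\M_{\alpha} := \alpha\bigl(\I - (1-\alpha)\mat{N}\bigr)^{-1}$ symmetric. Hence, writing $\b := \e_s - \e_t$ and $\x := \D^{-1/2}\b$, each term equals
\begin{align*}
\b^{\top}\D^{-1}\mat{\Pi}_{\alpha}^i\b = \x^{\top}\M_{\alpha}^i\x.
\end{align*}
Let $\mat{N}$ have orthonormal eigenpairs $(\mu_k,\vec{u}_k)$, with $\mu_1 = 1$ and $\vec{u}_1 \propto \D^{1/2}\vone$. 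The eigenvalues of the normalized Laplacian are $\nu_k = 1-\mu_k \in [0,2]$, so $\mu_k \le 1-\gamma$ for $k \ge 2$. Since $\vone^{\top}\b = 0$, we have $\x \perp \vec{u}_1$, so only $k\ge 2$ contributes. The eigenvalue of $\M_{\alpha}$ on $\vec{u}_k$ is $f_k := \alpha/\bigl(1-(1-\alpha)\mu_k\bigr)$. For $\mu_k\in[-1,1)$ this lies in $(0,1)$, so every term $\x^{\top}\M_{\alpha}^i\x = \sum_{k\ge 2} c_k^2 f_k^i$, with $c_k := \vec{u}_k^{\top}\x$, is nonnegative. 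This already gives the lower bound in \eqref{eqn:resistance-tail}.

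For \eqref{eqn:resistance-expansion}, sum the geometric series:
\begin{align*}
\sum_{i\ge1} f_k^i = \frac{f_k}{1-f_k} = \frac{\alpha}{(1-\alpha)(1-\mu_k)},
\end{align*}
so the right-hand side equals $\sum_{k\ge2} c_k^2/(1-\mu_k) = \x^{\top}(\I-\mat{N})^{+}\x$. It remains to identify this with $\b^{\top}\mL^{+}\b$. Take $\vec{y} := \D^{-1/2}(\I-\mat{N})^{+}\D^{-1/2}\b$. Using $\mL = \D^{1/2}(\I-\mat{N})\D^{1/2}$ and the fact that $(\I-\mat{N})(\I-\mat{N})^{+}$ is the projection orthogonal to $\D^{1/2}\vone$, while $\D^{-1/2}\b \perp \D^{1/2}\vone$, we obtain $\mL\vec{y} = \b$. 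Since $\b \perp \ker\mL = \mathrm{span}(\vone)$, any solution satisfies $\b^{\top}\vec{y} = \b^{\top}\mL^{+}\b = R(s,t)$. This step, the pseudoinverse bookkeeping, is the only mildly delicate point; the rest is routine.

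For the tail bound, set $\alpha = 1/2$, so $f_k = 1/(2-\mu_k) \le 1/(1+\gamma)$ for $k\ge2$. Then
\begin{align*}
\sum_{i\ge L+1} f_k^i = f_k^L\,\frac{f_k}{1-f_k} \le (1+\gamma)^{-L}\,\frac{1}{1-\mu_k} \le \frac{e^{-L\gamma/2}}{\gamma}.
\end{align*}
Here we use $\ln(1+\gamma)\ge\gamma/2$, valid since $\gamma\le 2$. Summing against $\sum_k c_k^2 = \|\x\|_2^2 = 1/\d(s)+1/\d(t)$ and plugging in $L \ge \frac{2}{\gamma}\ln\bigl(\frac{2}{\epsilon\gamma}(\frac{1}{\d(s)}+\frac{1}{\d(t)})\bigr)$ from \eqref{eqn:resistance-length} gives at most $\epsilon/2$. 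If the logarithm is negative, so that $L=1$ is forced, the same bound holds because then $e^{-L\gamma/2}\le 1 \le$ the required quantity.

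Finally, for \eqref{eqn:resistance-degree-bound}, apply \eqref{eqn:resistance-expansion} with $\alpha=1/2$, where the prefactor $(1-\alpha)/\alpha$ equals $1$. Split the sum at $L$. Each of the first $L$ terms is at most $\sum_k c_k^2 = \|\x\|_2^2 \le 2/\min\{\d(s),\d(t)\}$, because $f_k^i\le1$. The tail is at most $\epsilon/2$ by the previous step, which yields the claim.
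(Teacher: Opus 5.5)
Your proof is correct. For \eqref{eqn:resistance-tail} and \eqref{eqn:resistance-degree-bound} it is the same spectral computation the paper uses: your $\mat{N}$-eigenbasis is the paper's normalized-Laplacian eigenbasis with $\nu_k = 1-\mu_k$, and your $f_k$ is the paper's scaling factor $\alpha/(\alpha+(1-\alpha)\nu_k)$. The route differs only for \eqref{eqn:resistance-expansion}.

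\emph{Your route.} You sum the geometric series in each eigendirection to get the closed form $\sum_{k\ge 2} c_k^2/\nu_k = \x^{\top}(\I-\mat{N})^{+}\x$. You then transfer this to $\mL^{+}$ through the congruence $\mL = \D^{1/2}(\I-\mat{N})\D^{1/2}$.

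\emph{The paper's route.} It uses the spectrum only to show that $\frac{1-\alpha}{\alpha}\D^{-1}\sum_{i\ge1}\mat{\Pi}_{\alpha}^i(\e_s-\e_t)$ converges absolutely. It then checks algebraically that the limit solves $\mL\x = \e_s-\e_t$, via the identity $\frac{1-\alpha}{\alpha}\mL\D^{-1}\mat{\Pi}_{\alpha} = \I-\mat{\Pi}_{\alpha}$. This makes the partial sums telescope to $(\I-\mat{\Pi}_{\alpha}^{i})(\e_s-\e_t)$.

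Both arguments finish with the same fact: every solution of the Laplacian system has the same inner product with $\e_s-\e_t$.

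\emph{What each buys.} Your version makes the $\alpha$-independence transparent, since $\frac{1-\alpha}{\alpha}\cdot\frac{f_k}{1-f_k} = 1/\nu_k$ for each eigenvalue. It also gets convergence for free as a finite sum of convergent geometric series, at the cost of the pseudoinverse bookkeeping for $(\I-\mat{N})^{+}$. The paper's telescoping avoids that pseudoinverse entirely. It also shows directly that the partial sums are approximate Laplacian solutions with residual $\mat{\Pi}_{\alpha}^{i}(\e_s-\e_t)$.

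Your explicit treatment of the case where the logarithm in \eqref{eqn:resistance-length} is nonpositive is a detail the paper leaves implicit.
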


\begin{proof}
	Fix $\alpha \in (0,1)$.
	We will show that the vector
	\begin{align*}
		\x := \frac{1 - \alpha}{\alpha} \, \D^{-1} \sum_{i = 1}^{\infty} \mat{\Pi}_{\alpha}^i (\e_s - \e_t)
	\end{align*}
	satisfies $\mL \x = \e_s - \e_t$.
	As $G$ is connected, any such vector $\x$ differs from $\mL^+ (\e_s - \e_t)$ only by a constant vector, whose inner product with $\e_s - \e_t$ is zero.
	Therefore,
	\begin{align*}
		R(s,t) = (\e_s - \e_t)^{\top} \mL^+ (\e_s - \e_t) = (\e_s - \e_t)^{\top} \x,
	\end{align*}
	and substituting the series for $\x$ gives \eqref{eqn:resistance-expansion}.

	We first prove absolute convergence of the series defining $\x$ by comparison with a geometric series.
	By the definition of $\mat{\Pi}_{\alpha}$ and $\mL = \D - \A$, we have
	\begin{align*}
		\mat{\Pi}_{\alpha}^{-1} & = \frac{1}{\alpha} \I - \frac{1 - \alpha}{\alpha} \, \A \D^{-1} = \I + \frac{1 - \alpha}{\alpha} \, \mL \D^{-1} \\
		& = \D^{1/2} \Bigl(\I + \frac{1 - \alpha}{\alpha} \, \D^{-1/2} \, \mL \, \D^{-1/2}\Bigr) \D^{-1/2}.
	\end{align*}
	Taking inverses and powers gives, for every integer $i \ge 1$,
	\begin{align*}
		\mat{\Pi}_{\alpha}^{i} = \D^{1/2} \Bigl(\I + \frac{1 - \alpha}{\alpha} \, \D^{-1/2} \, \mL \, \D^{-1/2}\Bigr)^{-i} \, \D^{-1/2}.
	\end{align*}
	Note that the eigenspace of the normalized Laplacian corresponding to eigenvalue zero is spanned by $\D^{1/2} \vone$, and $(\D^{1/2} \vone)^{\top} \D^{-1/2} (\e_s - \e_t) = 0$.
	Thus, when $\mat{\Pi}_{\alpha}^{i}$ is applied to $\e_s - \e_t$, the middle factor acts only on eigendirections with $\nu \ge \gamma$, scaling each by $\bigl(\alpha / (\alpha + (1 - \alpha)\nu)\bigr)^i$.
	Since $0 < \alpha < 1$ and $\gamma > 0$, we obtain
	\begin{align*}
		& \phantom{{}\le{}} \sum_{i = 1}^{\infty} \Bigl\|\frac{1 - \alpha}{\alpha} \, \D^{-1} \, \mat{\Pi}_{\alpha}^{i} (\e_s - \e_t)\Bigr\|_2 \\
		& \le \frac{1 - \alpha}{\alpha} \, \bigl\|\D^{-1/2}\bigr\|_2 \, \bigl\|\D^{-1/2} (\e_s - \e_t)\bigr\|_2 \sum_{i = 1}^{\infty} \Bigl(\frac{\alpha}{\alpha + (1 - \alpha)\gamma}\Bigr)^i < \infty.
	\end{align*}
	Thus the series defining $\x$ converges absolutely.
	We next verify that $\x$ satisfies $\mL \x = \e_s - \e_t$.
	Using the expression for $\mat{\Pi}_{\alpha}^{-1}$ above, we have
	\begin{align*}
		\frac{1 - \alpha}{\alpha} \, \mL \D^{-1} \, \mat{\Pi}_{\alpha} = \bigl(\mat{\Pi}_{\alpha}^{-1} - \I\bigr) \mat{\Pi}_{\alpha} = \I - \mat{\Pi}_{\alpha}.
	\end{align*}
	Hence, for every integer $i \ge 1$, expanding the sum and canceling consecutive terms gives
	\begin{align*}
		\mL \biggl(\frac{1 - \alpha}{\alpha} \D^{-1}\sum_{j = 1}^{i} \mat{\Pi}_{\alpha}^j (\e_s - \e_t)\biggr) = \sum_{j = 1}^{i} \bigl(\mat{\Pi}_{\alpha}^{j - 1} - \mat{\Pi}_{\alpha}^j\bigr) (\e_s - \e_t) = \bigl(\I - \mat{\Pi}_{\alpha}^{i}\bigr) (\e_s - \e_t).
	\end{align*}
	Taking $i \to \infty$ and using $\mat{\Pi}_{\alpha}^{i} (\e_s - \e_t) \to \vzero$ proves $\mL \x = \e_s - \e_t$.

	For the rest of the statements, take $\alpha = 1/2$.
	The eigenvalue of $\sum_{i = L + 1}^{\infty} \bigl(\I + \D^{-1/2} \, \mL \, \D^{-1/2}\bigr)^{-i}$, restricted to a normalized Laplacian eigenspace with eigenvalue $\nu > 0$, is $\sum_{i = L + 1}^{\infty} (1 + \nu)^{-i} = 1 / \bigl(\nu(1 + \nu)^L\bigr)$.
	Since every positive eigenvalue of the normalized Laplacian is at least $\gamma$, we have
	\begin{align*}
		& \phantom{{}={}} \sum_{i = L + 1}^{\infty} (\e_s - \e_t)^{\top} \D^{-1} \, \mat{\Pi}_{1/2}^i (\e_s - \e_t) \\
		& \le \frac{1}{\gamma} \Bigl(\frac{1}{\d(s)} + \frac{1}{\d(t)}\Bigr) (1 + \gamma)^{-L} \le \frac{1}{\gamma} \Bigl(\frac{1}{\d(s)} + \frac{1}{\d(t)}\Bigr) e^{-\gamma L / 2} \le \frac{1}{2} \epsilon,
	\end{align*}
	using $\ln(1 + \gamma) \ge \gamma / 2$ for $0 < \gamma \le 2$ and \cref{eqn:resistance-length}.
	The same spectral representation gives, for every $i \ge 1$,
	\begin{align*}
		0 & \le (\e_s - \e_t)^{\top} \D^{-1} \, \mat{\Pi}_{1/2}^i (\e_s - \e_t) \\
		& \le \bigl\|\D^{-1/2} (\e_s - \e_t)\bigr\|_2^2 = \frac{1}{\d(s)} + \frac{1}{\d(t)} \le \frac{2}{\min\{\d(s),\d(t)\}}.
	\end{align*}
	This also proves the nonnegativity in \eqref{eqn:resistance-tail}.
	Bounding the first $L$ terms and applying \eqref{eqn:resistance-tail} yields \eqref{eqn:resistance-degree-bound}.
\end{proof}

We may therefore return zero if $\min\bigl\{\d(s),\d(t)\bigr\} \ge 4L / \epsilon$.
Now set
\begin{align*}
	\alpha := \frac{1}{L + 1}, \qquad \ell := \Bigl\lceil \log_2\frac{8L}{\epsilon}\Bigr\rceil.
\end{align*}
The following lemma reduces resistance estimation to three entries of a sum of PageRank powers.
\begin{lemma} \label[lemma]{lem:resistance-pagerank}
	For $u,v \in V$, define
	\begin{align}
		F(u,v) := \frac{1}{\d(v)} \biggl(\sum_{i = 1}^\ell \mat{\Pi}_{\alpha}^i \e_u\biggr)(v). \label{eqn:resistance-F}
	\end{align}
	Then $0 \le R(s,t) - L \bigl(F(s,s) + F(t,t) - 2F(s,t)\bigr) \le \frac{3}{4} \epsilon$.
\end{lemma}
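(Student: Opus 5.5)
The plan is to express $L\bigl(F(s,s)+F(t,t)-2F(s,t)\bigr)$ as the truncation after $\ell$ terms of the series \eqref{eqn:resistance-expansion} with $\alpha=1/(L+1)$, and then to bound the tail spectrally, eigendirection by eigendirection, against the $\alpha=1/2$ tail already controlled in \cref{lem:resistance-expansion}.

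\textbf{Step 1 (rewriting the quadratic form).} By the symmetry identity \eqref{eqn:symmetry}, $\D^{-1}\mat{\Pi}_\alpha$ is symmetric. Since $\D^{-1}\mat{\Pi}_\alpha^i=(\D^{-1}\mat{\Pi}_\alpha\D)^{i-1}\D^{-1}\mat{\Pi}_\alpha$, its powers $\D^{-1}\mat{\Pi}_\alpha^i$ are symmetric as well. This is also visible from the representation $\mat{\Pi}_\alpha^i=\D^{1/2}(\cdots)^{-i}\D^{-1/2}$ in the proof of \cref{lem:resistance-expansion}. Hence $F(s,t)=F(t,s)$, and
\[
F(s,s)+F(t,t)-2F(s,t)=\sum_{i=1}^{\ell}(\e_s-\e_t)^\top\D^{-1}\mat{\Pi}_\alpha^i(\e_s-\e_t).
\]
Since $(1-\alpha)/\alpha=L$, the quantity $L\bigl(F(s,s)+F(t,t)-2F(s,t)\bigr)$ is exactly the first $\ell$ terms of \eqref{eqn:resistance-expansion}. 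Therefore $R(s,t)-L(\cdots)$ equals the tail $L\sum_{i>\ell}(\cdots)$.

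\textbf{Step 2 (spectral form of the tail).} Write $\w:=\D^{-1/2}(\e_s-\e_t)$, which I would introduce locally. Decompose $\w$ in eigenvectors of the normalized Laplacian with eigenvalues $\nu\ge\gamma$. Its component along $\D^{1/2}\vone$ vanishes. Let $c_\nu\ge0$ be the squared coefficients, so $\sum_\nu c_\nu=1/\d(s)+1/\d(t)\le 2$. Each term in the series acts on direction $\nu$ by the factor $(1+L\nu)^{-i}$. Summing the geometric series, the tail is
\[
L\sum_{i>\ell}\sum_\nu c_\nu(1+L\nu)^{-i}=\sum_\nu \frac{c_\nu}{\nu}(1+L\nu)^{-\ell}\ge 0 .
\]
This gives the lower bound $0$ immediately. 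In the same notation, the $\alpha=1/2$ tail from \eqref{eqn:resistance-tail} reads $\sum_\nu c_\nu/\bigl(\nu(1+\nu)^{L}\bigr)\le\epsilon/2$.

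\textbf{Step 3 (upper bound, the main obstacle).} The difficulty is comparing $(1+L\nu)^{-\ell}$ with $(1+\nu)^{-L}$ uniformly in $\nu$; this comparison only works for small $L\nu$. I would split the eigenvalues into two ranges.

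For $L\nu\le 2$, use $\ln(1+x)\ge x/2$ on $[0,2]$ and $\ell\ge3$ (because $8L/\epsilon\ge 8$). This gives $(1+L\nu)^{\ell}\ge e^{\ell L\nu/2}\ge e^{L\nu}\ge(1+\nu)^L$. So these directions contribute at most the $\alpha=1/2$ tail, which is at most $\epsilon/2$.

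For $L\nu>2$, use $(1+L\nu)^{-\ell}\le 2^{-\ell}\le \epsilon/(8L)$ together with $1/\nu<L/2$. These directions then contribute at most
\[
\frac{\epsilon}{8L}\cdot\frac{L}{2}\sum_\nu c_\nu\le \frac{\epsilon}{8}.
\]

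Adding the two ranges gives a tail of at most $\tfrac58\epsilon\le\tfrac34\epsilon$, as claimed.
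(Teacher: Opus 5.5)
Your proof is correct and follows essentially the same route as the paper. Symmetry of $\D^{-1}\mat{\Pi}_{\alpha}^i$ identifies $L\bigl(F(s,s)+F(t,t)-2F(s,t)\bigr)$ with the first $\ell$ terms of the series, the tail is written spectrally as a combination of $(1+L\nu)^{-\ell}/\nu$, and it is bounded by the $\alpha = 1/2$ tail for small $L\nu$ plus a $2^{-\ell}$ term for large $L\nu$. The only differences are that you split at $L\nu = 2$ rather than the paper's $L\nu = 1$, and argue per eigendirection rather than through one scalar inequality; this yields the slightly sharper bound $\frac{5}{8}\epsilon$.
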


\begin{proof}
	The matrix $\D^{-1/2} \, \mat{\Pi}_{\alpha} \, \D^{1/2} = \bigl(\I + L \, \D^{-1/2} \, \mL \, \D^{-1/2}\bigr)^{-1}$ is symmetric, as is each of its powers.
	The similarity transformation gives
	\begin{align*}
		\D^{-1} \, \mat{\Pi}_{\alpha}^i = \D^{-1/2} \bigl(\D^{-1/2} \, \mat{\Pi}_{\alpha} \, \D^{1/2}\bigr)^i \D^{-1/2},
	\end{align*}
	which is therefore symmetric for every integer $i \ge 0$.
	It follows that
	\begin{align}
		\frac{(\mat{\Pi}_{\alpha}^i \e_u)(v)}{\d(v)} = \frac{(\mat{\Pi}_{\alpha}^i \e_v)(u)}{\d(u)}, \qquad \forall \, u,v \in V,\ i \ge 0. \label{eqn:pagerank-power-symmetry}
	\end{align}
	Thus $F(s,t) = F(t,s)$.
	To bound the truncation error, we first establish the scalar bound
	\begin{align}
		L \sum_{i = \ell + 1}^{\infty} (1 + L \nu)^{-i} = \frac{(1 + L \nu)^{-\ell}}{\nu} \le \frac{(1 + \nu)^{-L}}{\nu} + L \, 2^{-\ell}, \qquad \forall \, \nu > 0. \label{eqn:eigenvalue-scalar-bound}
	\end{align}
	For $\nu > 1 / L$, the left-hand side is less than $L \, 2^{-\ell}$.
	For $0 < \nu \le 1 / L$, using $\ln(1 + L \nu) \ge L \nu / 2$, $\ln(1 + \nu) \le \nu$, and $\ell \ge 2$, we have
	\begin{align*}
		\ell \ln(1 + L \nu) \ge \frac{1}{2} \ell L \nu \ge L \nu \ge L \ln(1 + \nu),
	\end{align*}
	giving $(1 + L \nu)^{-\ell} \le (1 + \nu)^{-L}$.
	So \eqref{eqn:eigenvalue-scalar-bound} holds for all $\nu > 0$.

	Let $\vec{u}_1,\ldots,\vec{u}_n$ be an orthonormal eigenbasis of $\D^{-1/2} \, \mL \, \D^{-1/2}$, with corresponding eigenvalues $0 = \nu_1 < \nu_2 \le \cdots \le \nu_n$.
	Since $\D^{-1/2} (\e_s - \e_t)$ is orthogonal to $\vec{u}_1$, applying \cref{eqn:resistance-expansion} with $\alpha = 1 / (L + 1)$ and hence $(1 - \alpha) / \alpha = L$ gives
	\begin{align*}
		& \phantom{{}={}} R(s,t) - L \bigl(F(s,s) + F(t,t) - 2F(s,t)\bigr) \\
		& = L \sum_{i = \ell + 1}^{\infty} (\e_s - \e_t)^{\top} \D^{-1} \, \mat{\Pi}_{\alpha}^i (\e_s - \e_t) \\
		& = \sum_{i = 2}^{n} \bigl(\vec{u}_i^{\top} \D^{-1/2} (\e_s - \e_t)\bigr)^2 \, \frac{(1 + L \nu_i)^{-\ell}}{\nu_i}.
	\end{align*}
	Here the second equality follows by expanding $\D^{-1/2} (\e_s - \e_t)$ in the orthonormal eigenbasis and using $L \sum_{j = \ell + 1}^{\infty} (1 + L \nu_i)^{-j} = (1 + L \nu_i)^{-\ell} / \nu_i$ for each $i \ge 2$.
	Applying \eqref{eqn:eigenvalue-scalar-bound}, orthonormality, and the identity $\sum_{j = L + 1}^{\infty} (1 + \nu_i)^{-j} = (1 + \nu_i)^{-L} / \nu_i$, we obtain
	\begin{align*}
		& \phantom{{}={}} R(s,t) - L \bigl(F(s,s) + F(t,t) - 2F(s,t)\bigr) \\
		& \le \sum_{i = L + 1}^{\infty} (\e_s - \e_t)^{\top} \D^{-1} \, \mat{\Pi}_{1/2}^i (\e_s - \e_t) + L \, 2^{-\ell} \bigl\|\D^{-1/2} (\e_s - \e_t)\bigr\|_2^2 \\
		& \le \frac{1}{2} \epsilon + L \, 2^{-\ell} \Bigl(\frac{1}{\d(s)} + \frac{1}{\d(t)}\Bigr) \le \frac{1}{2} \epsilon + 2L \, 2^{-\ell} \le \frac{3}{4} \epsilon,
	\end{align*}
	which finishes the proof.
\end{proof}

\subsection{Approximating PageRank Powers}

We next design an algorithm to approximate the sum of PageRank powers $\sum_{i = 1}^\ell \mat{\Pi}_{\alpha}^i \e_u$ that appears in \cref{eqn:resistance-F}.
\Cref{alg:accumulate} does so by repeatedly invoking \sparseActiveSet using the current reserve vector plus a unit vector as the source vector.
The next lemma characterizes the properties of this procedure.

\begin{algorithm}[ht]
	\DontPrintSemicolon
	\caption{$\accumulate(u,\alpha,\ell,\theta,\delta)$} \label{alg:accumulate}
	\KwIn{source node $u \in V$, teleportation parameter $\alpha \in (0,1)$, integer $\ell \ge 1$, threshold $\theta \in (0,1]$, failure probability $\delta \in (0,1)$}
	\KwOut{reserve $\p_\ell$ and residues $\r_1,\ldots,\r_\ell$}
	$\p_0 \gets \vzero$ \;
	\For{$i \gets 1$ \KwTo $\ell$}
	{
		$(\p_i,\r_i) \gets \sparseActiveSet(\e_u + \p_{i - 1},\alpha,\theta,\delta / \ell)$ \;
	}
	\Return{$\p_\ell,\r_1,\ldots,\r_\ell$} \;
\end{algorithm}

\begin{lemma} \label[lemma]{lem:accumulate}
	With probability at least $1 - \delta$, $\accumulate(u,\alpha,\ell,\theta,\delta)$ returns nonnegative vectors $\p_\ell,\r_1,\ldots,\r_\ell$ satisfying
	\begin{align}
		\sum_{i = 1}^\ell \mat{\Pi}_{\alpha}^i \e_u = \p_\ell + \sum_{i = 1}^\ell \mat{\Pi}_{\alpha}^{\ell - i + 1} \r_i. \label{eqn:accumulated-residue}
	\end{align}
	On the same event, the residues satisfy $\vzero \le \r_i \le \theta \d$ for every $i \in \{1,\ldots,\ell\}$.
	Moreover, its running time is $\tO\bigl(\ell^2 / \bigl(\sqrt{\alpha} \, \theta\bigr)\bigr)$, $\vol\bigl(\supp(\p_\ell)\bigr) \le 2\ell / \theta$, and the retained residues have $O(\ell^2 / \theta)$ nonzero entries in total.
\end{lemma}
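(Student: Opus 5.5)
We argue by induction on $i$, working on the event that all $\ell$ calls to \sparseActiveSet succeed. Each call has failure probability $\delta/\ell$, so a union bound shows this event has probability at least $1-\delta$. On this event, \cref{lem:sparse-source-pagerank} applied to the $i$-th call, whose source is $\s_i := \e_u + \p_{i-1} \ge \vzero$, gives the following. The output $\p_i$ is nonnegative and equals $\mat{\Pi}_{\alpha}(\e_u + \p_{i-1} - \r_i)$, with $\vzero \le \r_i \le \theta\d$. This already proves the residue bounds and the nonnegativity claims. We must also check the hypothesis $\theta \le \|\s_i\|_1$ of \cref{lem:sparse-source-pagerank}. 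It holds because $\|\s_i\|_1 \ge 1 \ge \theta$.

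For the identity \eqref{eqn:accumulated-residue}, unroll the recursion $\p_i = \mat{\Pi}_{\alpha}\e_u + \mat{\Pi}_{\alpha}\p_{i-1} - \mat{\Pi}_{\alpha}\r_i$. The induction hypothesis we would prove is
\begin{align*}
	\p_i = \sum_{k=1}^{i} \mat{\Pi}_{\alpha}^k \e_u - \sum_{k=1}^{i} \mat{\Pi}_{\alpha}^{i-k+1}\r_k .
\end{align*}
It is trivial for $i=0$. Applying $\mat{\Pi}_{\alpha}$ to the hypothesis for $i-1$ shifts every power up by one, and the new terms $\mat{\Pi}_{\alpha}\e_u$ and $-\mat{\Pi}_{\alpha}\r_i$ supply the $k=1$ and $k=i$ terms respectively. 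Setting $i=\ell$ gives \eqref{eqn:accumulated-residue}.

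For the size and time bounds, the key quantity is the source mass. By \eqref{eqn:sparse-source-mass}, $\|\p_i\|_1 \le \|\s_i\|_1 = 1 + \|\p_{i-1}\|_1$, so by induction $\|\p_{i-1}\|_1 \le i-1$ and $\|\s_i\|_1 \le i \le \ell$. Consequently:
\begin{itemize}
	\item $\vol(\supp(\p_i)) \le 2\|\s_i\|_1/\theta \le 2i/\theta$, which gives $\vol(\supp(\p_\ell)) \le 2\ell/\theta$.
	\item $|\supp(\r_i)| = O\bigl(|\supp(\s_i)| + i/\theta\bigr)$.
	\item $|\supp(\s_i)| \le 1 + |\supp(\p_{i-1})| \le 1 + \vol(\supp(\p_{i-1})) = O(\ell/\theta)$.
\end{itemize}
Summing over $i$, the residues have $O(\ell^2/\theta)$ nonzeros in total. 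Each call runs in time $O\bigl(\ell/\theta + \ell/(\sqrt{\alpha}\,\theta)\polylog\bigr)$ by \eqref{eqn:sparse-source-time}. Forming the sparse source $\e_u + \p_{i-1}$ costs $O(\ell/\theta)$ per call. Summing over the $\ell$ calls gives $\tO\bigl(\ell^2/(\sqrt{\alpha}\,\theta)\bigr)$.

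The only delicate point is the mass bookkeeping: the later bounds all rest on $\|\s_i\|_1 \le \ell$, which follows from \eqref{eqn:sparse-source-mass} and the nonnegativity of $\r_i$. Everything else is direct substitution into \cref{lem:sparse-source-pagerank}.
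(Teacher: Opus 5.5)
Your proposal is correct and follows the paper's proof essentially step for step. You use the mass bookkeeping via \eqref{eqn:sparse-source-mass} to get $1 \le \|\e_u + \p_{i-1}\|_1 \le i$, unroll $\p_i = \mat{\Pi}_{\alpha}(\e_u + \p_{i-1} - \r_i)$ from $\p_0 = \vzero$, and sum the per-call support and time bounds of \cref{lem:sparse-source-pagerank}.

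The paper is slightly more careful only in the probability step. The $i$-th call satisfies the hypotheses of \cref{lem:sparse-source-pagerank} (nonnegative source with $\theta \le \|\s_i\|_1$) only if the earlier calls succeeded, so its $\delta/\ell$ failure bound is conditional on that. The union bound should therefore be taken over the position of the first failure, not stated as an unconditional per-call bound.
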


\begin{proof}
	If the first $i - 1$ calls succeed, induction using \eqref{eqn:sparse-source-mass} gives $1 \le \|\e_u + \p_{i - 1}\|_1 = 1 + \|\p_{i - 1}\|_1 \le i$.
	Thus, the chosen threshold $\theta \in (0,1]$ satisfies the requirement $0 < \theta \le \|\e_u + \p_{i - 1}\|_1$ in \cref{lem:sparse-source-pagerank}.
	Since the $i$-th call fails with probability at most $\delta / \ell$ given that the first $i - 1$ calls succeed, a union bound over the $\ell$ possible positions of the first failure shows that all calls succeed with probability at least $1 - \delta$.
	On this event, $\p_i = \pr_{\alpha}(\e_u+\p_{i-1}-\r_i)$ gives
	\begin{align*}
		\p_i = \mat{\Pi}_{\alpha} \, \e_u - \mat{\Pi}_{\alpha} \r_i + \mat{\Pi}_{\alpha} \, \p_{i - 1}.
	\end{align*}
	Collectively, this identity for $i = 1,\ldots,\ell$ along with $\p_0 = \vzero$ leads to \cref{eqn:accumulated-residue}.
	The mass bound and \cref{lem:sparse-source-pagerank} give $\vol\bigl(\supp(\p_i)\bigr) \le 2i / \theta$.
	Since $\d(v) \ge 1$, the $i$-th source and residue each have $O(i / \theta)$ nonzero entries, and the call takes $\tO\bigl(i / \bigl(\sqrt{\alpha} \, \theta\bigr)\bigr)$ time.
	Summing over $i$ proves the time and storage bounds, including sparse source construction.
\end{proof}

We remark that the original local push algorithm gives the same guarantees deterministically with running time $O\bigl(\ell^2 / (\alpha \theta)\bigr)$.

\subsection{Bidirectional Estimation} \label{sec:resistance-bidirectional}

After running \accumulate, we estimate the contributions of the residues in \cref{eqn:accumulated-residue} by sampling random walks.
This bidirectional approach is similar to the one used in \cite{lofgren2016personalized,lofgren2015bidirectional,cui2025mixing,yang2025improved,kwok2026solving}.

Specifically, for each $u \in \{s,t\}$, let $\p^{(u)}$ denote the final reserve $\p_\ell$ returned by \accumulate from $u$, and let $\r_1^{(u)},\ldots,\r_\ell^{(u)}$ denote its retained residues.
For each pair $(u,v) \in \bigl\{(s,s),(t,t),(s,t)\bigr\}$, sample $I$ uniformly from $\{1,\ldots,\ell\}$ and run $\ell - I + 1$ successive $\alpha$-terminated walks using independent randomness, starting from $v$ and then from each preceding endpoint.
Here, an $\alpha$-terminated walk terminates with probability $\alpha$ before each step and otherwise moves to a uniformly random neighbor.
At the final endpoint $w$, let
\begin{align}
	Y(u,v) := \ell \, \frac{\r_I^{(u)}(w)}{\d(w)}. \label{eqn:resistance-reverse-sample}
\end{align}

\begin{lemma} \label[lemma]{lem:resistance-reverse-sample}
	Given nonnegative accumulation outputs satisfying the properties given in \cref{lem:accumulate}, the sample $Y(u,v)$ satisfies the following for every pair $(u,v) \in \bigl\{(s,s),(t,t),(s,t)\bigr\}$:
	\begin{align}
		& 0 \le Y(u,v) \le \ell \theta, \notag \\
		& \E\bigl[Y(u,v)\bigr] = F(u,v) - \frac{\p^{(u)}(v)}{\d(v)}, \label{eqn:resistance-sample-mean} \\
		& \Var\bigl[Y(u,v)\bigr] \le \ell^2 \min\Bigl\{\theta^2,\frac{\theta}{\min\{\d(s),\d(t)\}}\Bigr\}. \label{eqn:resistance-sample-variance}
	\end{align}
\end{lemma}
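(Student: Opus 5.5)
The plan has three parts, matching the three claims: the range bound, the mean identity, and the variance bound.

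\textbf{Range.} The bound $0 \le Y(u,v) \le \ell\theta$ is immediate from \cref{lem:accumulate}. There $\vzero \le \r_I^{(u)} \le \theta\d$, so $0 \le \r_I^{(u)}(w)/\d(w) \le \theta$ for every endpoint $w$.

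\textbf{Mean.} Condition on $I = i$. One $\alpha$-terminated walk from $x$ ends at $y$ with probability $\pr_{\alpha}(x,y) = (\mat{\Pi}_{\alpha}\e_x)(y)$. Chaining $k := \ell - i + 1$ independent walks, the final endpoint $w$ has distribution $\mat{\Pi}_{\alpha}^{k}\e_v$. Hence
\begin{align*}
\E\bigl[Y(u,v) \mid I = i\bigr] = \ell \sum_{w} \frac{(\mat{\Pi}_{\alpha}^{k}\e_v)(w)}{\d(w)} \, \r_i^{(u)}(w) = \ell \sum_{w} \r_i^{(u)}(w) \, \frac{(\mat{\Pi}_{\alpha}^{k}\e_w)(v)}{\d(v)},
\end{align*}
where the second equality uses the power symmetry \eqref{eqn:pagerank-power-symmetry}. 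The last sum equals $\ell\,(\mat{\Pi}_{\alpha}^{k}\r_i^{(u)})(v)/\d(v)$. Averaging over the uniform $I$ cancels the factor $\ell$. Applying \eqref{eqn:accumulated-residue} and dividing by $\d(v)$ then gives $F(u,v) - \p^{(u)}(v)/\d(v)$, which is \eqref{eqn:resistance-sample-mean}.

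\textbf{Variance.} I bound $\Var[Y] \le \E[Y^2]$.
\begin{itemize}
\item The $\theta^2$ branch follows from $Y \le \ell\theta$.
\item For the other branch, write $Y^2 \le \ell\theta \cdot Y$, so $\E[Y^2] \le \ell\theta\,\E[Y]$. It then suffices to show $\E[Y] \le \ell/\min\{\d(s),\d(t)\}$.
\item By the mean computation, $\E[Y] = \sum_{i=1}^{\ell} (\mat{\Pi}_{\alpha}^{\ell-i+1}\r_i^{(u)})(v)/\d(v)$. Applying the symmetry \eqref{eqn:pagerank-power-symmetry} again, each term equals $\sum_w \r_i^{(u)}(w)\,(\mat{\Pi}_{\alpha}^{k}\e_v)(w)/\d(w)$ with $k = \ell - i + 1$. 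Using $\r_i^{(u)}(w)/\d(w) \le \theta$ here would only reproduce the $\theta^2$ branch, so I take the other route.
\item I use the first form of each term instead: $(\mat{\Pi}_{\alpha}^{k}\r_i^{(u)})(v)/\d(v) \le \|\r_i^{(u)}\|_1 \max_w (\mat{\Pi}_{\alpha}^{k}\e_w)(v)/\d(v)$.
\item By symmetry, $(\mat{\Pi}_{\alpha}^{k}\e_w)(v)/\d(v) = (\mat{\Pi}_{\alpha}^{k}\e_v)(w)/\d(w) \le 1/\d(w)$. This is not obviously at most $1/\d(v)$.
\item A better route is entrywise domination. The $i$-th call has source mass at most $i$ (from the proof of \cref{lem:accumulate}), so $\|\r_i^{(u)}\|_1 \le i$. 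I also need $\max_w (\mat{\Pi}_{\alpha}^{k}\e_w)(v)/\d(v) \le 1/\d(v)$.
\end{itemize}

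\textbf{Main obstacle.} The hard step is obtaining the $1/\min\{\d(s),\d(t)\}$ factor, which must come from the degree of $v \in \{s,t\}$. Plain positivity does not give this. My first attempt is as follows.
\begin{itemize}
\item Write $(\mat{\Pi}_{\alpha}^{k}\e_w)(v) = \sum_y (\mat{\Pi}_{\alpha}^{k-1}\e_w)(y)\,\pr_{\alpha}(y,v)$.
\item Bound $\pr_{\alpha}(y,v) \le \alpha\mathbb{1}[y=v] + (1-\alpha)\max_x \frac{1}{\d(x)}\ldots$. This does not work directly.
\item Instead, use the ratio form $\pr_{\alpha}(y,v)/\d(v) = \pr_{\alpha}(v,y)/\d(y) \le 1$ whenever $\d(y) \ge 1$. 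Combined with $\|\r_i\|_1 \le i$, this yields $\E[Y] \le \sum_i i$.
\end{itemize}
That bound is too weak. So I would fall back on controlling the mass on $v$ via the first step of the final walk. The last walk reaches $v$ only through $w = v$ at step zero or through a neighbor, which gives $(\mat{\Pi}_{\alpha}\e_x)(v)/\d(v) \le \alpha/\d(v) + (1-\alpha)\cdot\frac{1}{\d(v)}$ on average over incoming edges. The reason is that $\A\D^{-1}$ column sums, weighted by $1/\d(v)$, reduce to $\D^{-1}\A\D^{-1}$ averages. I expect the intended bound to use a simpler cruder fact, and I would verify whether $\ell\theta\cdot\ell/\min\d$ indeed holds in this way.
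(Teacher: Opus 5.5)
Your range bound, your mean identity, and your reduction of the second variance branch to $\E[Y(u,v)] \le \ell/\min\{\d(s),\d(t)\}$ via $\E[Y^2] \le \ell\theta\,\E[Y]$ are all correct, and they match the paper's argument. The gap is that you never prove this bound on $\E[Y(u,v)]$. The proposal ends with the $\theta/\min\{\d(s),\d(t)\}$ branch unresolved, so \eqref{eqn:resistance-sample-variance} is not established.

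The missing step is one line. It uses the mean identity you have just derived together with $\p^{(u)} \ge \vzero$:
\[
\E\bigl[Y(u,v)\bigr] = F(u,v) - \frac{\p^{(u)}(v)}{\d(v)} \le F(u,v) = \frac{1}{\d(v)} \sum_{i=1}^{\ell} \bigl(\mat{\Pi}_{\alpha}^{i}\e_u\bigr)(v) \le \frac{\ell}{\d(v)} \le \frac{\ell}{\min\{\d(s),\d(t)\}}.
\]
The second inequality holds because $\mat{\Pi}_{\alpha}$ is nonnegative and column stochastic, so each $\mat{\Pi}_{\alpha}^{i}\e_u$ is a probability vector whose $v$-entry is at most $1$. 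The last inequality uses $v \in \{s,t\}$.

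Your detours fail for identifiable reasons:
\begin{itemize}
\item The bound $(\mat{\Pi}_{\alpha}^{k}\e_w)(v) \le 1$, which you list as something you still ``need,'' is immediate for the same reason. But pairing it with the per-call bound $\|\r_i^{(u)}\|_1 \le i$ only gives $\E[Y] \le \ell(\ell+1)/(2\d(v))$, which loses a factor of order $\ell$.
\item The ratio-form route discards the $1/\d(v)$ factor entirely.
\item The ``first step of the final walk'' argument is unnecessary and is not made rigorous.
\end{itemize}
If you prefer to keep the residue-based route, replace the per-call bound by an aggregate one. Taking $\ell_1$ norms in \eqref{eqn:accumulated-residue} and using column stochasticity and nonnegativity gives $\sum_{i=1}^{\ell}\|\r_i^{(u)}\|_1 = \ell - \|\p^{(u)}\|_1 \le \ell$. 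Combined with $(\mat{\Pi}_{\alpha}^{k}\e_w)(v) \le 1$, this yields the same $\ell/\d(v)$ bound.
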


\begin{proof}
	Conditioned on $I = i$, the endpoint has distribution $\mat{\Pi}_{\alpha}^{\ell - i + 1} \e_v$.
	By \cref{eqn:pagerank-power-symmetry},
	\begin{align*}
		\E\bigl[Y(u,v)\bigr] = \sum_{i = 1}^\ell \sum_{x \in V} \bigl(\mat{\Pi}_{\alpha}^{\ell - i + 1} \e_v\bigr)(x) \, \frac{\r_i^{(u)}(x)}{\d(x)} = \frac{1}{\d(v)} \sum_{i = 1}^\ell \bigl(\mat{\Pi}_{\alpha}^{\ell - i + 1} \r_i^{(u)}\bigr)(v),
	\end{align*}
	which equals $F(u,v) - \p^{(u)}(v) / \d(v)$ by \cref{eqn:accumulated-residue}; the range of $Y(u,v)$ follows from $\vzero \le \r_i^{(u)} \le \theta \d$.
	Since $\mat{\Pi}_{\alpha}$ is column stochastic and the reserves are nonnegative, $\E\bigl[Y(u,v)\bigr] \le F(u,v) \le \ell / \d(v) \le \ell / \min\bigl\{\d(s),\d(t)\bigr\}$.
	Thus we have
	\begin{align*}
		\Var\bigl[Y(u,v)\bigr] \le \E\bigl[Y(u,v)^2\bigr] \le \ell \theta \, \E\bigl[Y(u,v)\bigr] \le \ell^2 \min\Bigl\{\theta^2,\frac{\theta}{\min\{\d(s),\d(t)\}}\Bigr\},
	\end{align*}
	where we also used $0 \le Y(u,v) \le \ell\theta$.
	This finishes the proof.
\end{proof}

We present the complete effective resistance estimation algorithm using \activeSet with constant failure probability in \cref{alg:effective-resistance}.
With this algorithm, we now prove the results for effective resistance estimation.

\begin{algorithm}[ht]
	\DontPrintSemicolon
	\caption{$\textup{\texttt{EffectiveResistance}}(s,t,L,\epsilon)$} \label{alg:effective-resistance}
	\KwIn{distinct nodes $s,t \in V$, integer $L$ chosen by \eqref{eqn:resistance-length}, accuracy $\epsilon \in (0,1]$}
	\KwOut{an estimate $\hat{R}$ of $R(s,t)$}
	\If{$\min\{\d(s),\d(t)\} \ge 4L / \epsilon$}
	{
		\Return{$0$} \;
	}
	$\alpha \gets 1 / (L + 1),\ell \gets \bigl\lceil \log_2(8L / \epsilon) \bigr\rceil$ \;
	$\theta \gets \frac{1}{2} \max\bigl\{\epsilon^{2/3} L^{-5/6},\epsilon \, \sqrt{\min\{\d(s),\d(t)\}} \, L^{-5/4}\bigr\}$ \;
	$\bigl(\p^{(s)},\r_1^{(s)},\dots,\r_\ell^{(s)}\bigr) \gets \accumulate(s,\alpha,\ell,\theta,1/16)$ \;
	$\bigl(\p^{(t)},\r_1^{(t)},\dots,\r_\ell^{(t)}\bigr) \gets \accumulate(t,\alpha,\ell,\theta,1/16)$ \;
	$\hat{R} \gets L \Bigl(\frac{\p^{(s)}(s)}{\d(s)} + \frac{\p^{(t)}(t)}{\d(t)} - 2 \, \frac{\p^{(s)}(t)}{\d(t)}\Bigr)$ \;
	$N \gets \Bigl\lceil \frac{10^5L^2\ell^2}{\epsilon^2} \min\bigl\{\theta^2,\theta / \min\{\d(s),\d(t)\}\bigr\}\Bigr\rceil$ \;
	\ForEach{$(u,v) \in \{(s,s),(t,t),(s,t)\}$}
	{
		generate $N$ independent samples $Y_1(u,v),\ldots,Y_N(u,v)$ by \eqref{eqn:resistance-reverse-sample} \;
	}
	$\hat{R} \gets \hat{R} + \frac{L}{N} \sum_{i = 1}^N \bigl(Y_i(s,s) + Y_i(t,t) - 2Y_i(s,t)\bigr)$ \;
	\Return{$\hat{R}$} \;
\end{algorithm}

\begin{proof}[Proof of \cref{thm:effective-resistance} and \eqref{eqn:resistance-push-time}]
	The early return in \cref{alg:effective-resistance} is correct by \eqref{eqn:resistance-degree-bound}.
	Otherwise, $\min\bigl\{\d(s),\d(t)\bigr\} < 4L / \epsilon$.
	This condition ensures that the threshold $\theta$ in \cref{alg:effective-resistance} lies in $(0,1]$, as required by \accumulate.
	By \cref{lem:accumulate}, both accumulation calls succeed with probability at least $7/8$ for \activeSet, and deterministically for the original local push method.

	\paragraph{Accuracy.}
	For fixed successful accumulation outputs, \cref{eqn:resistance-sample-mean} gives $\E[\hat{R}] = L \bigl(F(s,s) + F(t,t) - 2F(s,t)\bigr)$.
	Independence and \eqref{eqn:resistance-sample-variance} give
	\begin{align*}
		\Var[\hat{R}] & = \frac{L^2}{N} \Bigl(\Var\bigl[Y(s,s)\bigr] + \Var\bigl[Y(t,t)\bigr] + 4 \Var\bigl[Y(s,t)\bigr]\Bigr) \\
		& \le \frac{6L^2\ell^2}{N} \min\Bigl\{\theta^2,\frac{\theta}{\min\{\d(s),\d(t)\}}\Bigr\}.
	\end{align*}
	Chebyshev's inequality then gives
	\begin{align*}
		\Pr\Bigl[\bigl|\hat{R} - \E[\hat{R}]\bigr| > \frac{1}{4} \epsilon\Bigr] \le \frac{100L^2\ell^2}{N \epsilon^2} \min\Bigl\{\theta^2,\frac{\theta}{\min\{\d(s),\d(t)\}}\Bigr\} \le \frac{1}{16}.
	\end{align*}
	Together with \cref{lem:resistance-pagerank}, this gives error at most $\frac{3}{4} \epsilon + \frac{1}{4} \epsilon = \epsilon$ with conditional probability at least $15/16$.

	\paragraph{Running time.}
	The sampling process takes $O(N \ell / \alpha) = O(N \ell L)$ expected time.
	By standard repetition and truncation operations and adjusting the constants, we can convert this to a worst-case bound of $O\bigl(N \ell L \log(N\ell)\bigr)$.
	Substituting the choice of $N$ and using $\ell = O(\log(L / \epsilon))$ gives the total running-time bound
	\begin{align*}
		T_{\mathrm{acc}} + \tO\biggl(L + \frac{L^3}{\epsilon^2} \min\biggl\{\theta^2,\frac{\theta}{\min\{\d(s),\d(t)\}}\biggr\}\biggr),
	\end{align*}
	where $T_{\mathrm{acc}}$ is the cost of the two invocations of \accumulate.

	With \activeSet, \cref{lem:accumulate} and $1 / \alpha = L + 1$ give $T_{\mathrm{acc}} = \tO\bigl(\sqrt{L} / \theta\bigr)$.
	Balancing the accumulation and sampling costs gives the threshold used in \cref{alg:effective-resistance} and the total running-time bound
	\begin{align*}
		\tO\biggl(\min\biggl\{\frac{L^{4/3}}{\epsilon^{2/3}},\frac{L^{7/4}}{\epsilon \, \sqrt{\min\{\d(s),\d(t)\}}}\biggr\}\biggr).
	\end{align*}
	Here the additive $L$ term can be absorbed because $\min\bigl\{\d(s),\d(t)\bigr\} < 4L / \epsilon$ makes both displayed terms $\Omega(L)$.

	If \accumulate instead uses the original local push algorithm, then $T_{\mathrm{acc}} = \tO(L / \theta)$.
	The same balancing argument (with a different setting of the parameter $\theta$) gives the total running-time bound
	\begin{align*}
		\tO\biggl(\min\biggl\{\frac{L^{5/3}}{\epsilon^{2/3}},\frac{L^2}{\epsilon \, \sqrt{\min\{\d(s),\d(t)\}}}\biggr\}\biggr).
	\end{align*}
	This proves \eqref{eqn:resistance-push-time}, showing that the reduction itself yields a speedup even with the original local push algorithm.

	The accumulation, sampling-accuracy, and truncation failure probabilities sum to at most $1/8 + 1/16 + 1/16 = 1/4$, so the accuracy and running-time bounds hold jointly with probability at least $3/4$.
	Using standard truncation and the median trick reduces the failure probability to $\delta$ with an $O\bigl(\log(1 / \delta)\bigr)$ multiplicative overhead in running time.
\end{proof}

\section{Acknowledgments and AI Disclosure}

This research was supported in part by National Natural Science Foundation of China (No. U2241212).

The initial proofs of our results were developed through extensive interactions with GPT-5.6 Sol and GPT-6 Astra, with the latter used for the reduction from effective resistance estimation to PageRank computation.
The authors have verified, significantly simplified, and rewritten the proofs to improve readability and presentation.

\printbibliography

\end{document}